\documentclass[11pt,a4paper]{article}
\usepackage[T1]{fontenc}
\usepackage[utf8]{inputenc}
\usepackage[english]{babel}
\usepackage{lmodern}
\usepackage[margin=1in,headheight=14pt]{geometry}
\usepackage{amsmath,amssymb,amsthm,mathtools}
\usepackage{microtype}
\usepackage{booktabs,tabularx,array}
\usepackage[shortlabels]{enumitem}
\usepackage{aliascnt}
\usepackage{needspace}
\usepackage{fancyhdr}
\usepackage{xurl}
\usepackage[hidelinks,bookmarksnumbered=true]{hyperref}
\usepackage[nameinlink,noabbrev]{cleveref}
\numberwithin{equation}{section}

\theoremstyle{plain}
\newtheorem{theorem}{Theorem}[section]
\newaliascnt{lemma}{theorem}
\newtheorem{lemma}[lemma]{Lemma}
\aliascntresetthe{lemma}
\newaliascnt{proposition}{theorem}
\newtheorem{proposition}[proposition]{Proposition}
\aliascntresetthe{proposition}
\newaliascnt{corollary}{theorem}
\newtheorem{corollary}[corollary]{Corollary}
\aliascntresetthe{corollary}
\theoremstyle{definition}
\newaliascnt{definition}{theorem}
\newtheorem{definition}[definition]{Definition}
\aliascntresetthe{definition}
\theoremstyle{remark}
\newaliascnt{remark}{theorem}
\newtheorem{remark}[remark]{Remark}
\aliascntresetthe{remark}
\crefname{theorem}{Theorem}{Theorems}
\crefname{lemma}{Lemma}{Lemmas}
\crefname{proposition}{Proposition}{Propositions}
\crefname{corollary}{Corollary}{Corollaries}
\crefname{definition}{Definition}{Definitions}
\crefname{remark}{Remark}{Remarks}
\crefname{section}{Section}{Sections}
\crefname{appendix}{Appendix}{Appendices}

\hypersetup{
  pdftitle={Linear Certificates for Membership Comparability, Quadratic Barriers for Selectors},
  pdfauthor={Sebastian Ben Daniel},
  pdfsubject={Public preprint: membership comparability, advice complexity, and black-box barriers},
  pdfkeywords={Membership comparability, P-selective sets, advice complexity, short certificates, average-case complexity, black-box lower bounds, relativization}}

\newcommand{\Psel}{\mathrm{P\text{-}sel}}
\newcommand{\Pclass}{\mathrm P}
\newcommand{\BPP}{\mathrm{BPP}}
\newcommand{\E}{\mathbb E}
\newcommand{\F}{\mathbb F}
\newcommand{\bits}{\{0,1\}}

\newcommand{\ind}{\mathbf 1}
\newcommand{\poly}{\operatorname{poly}}

\newcommand{\unk}{\mathord{?}}

\newcommand{\NP}{\mathrm{NP}}
\newcommand{\PSPACE}{\mathrm{PSPACE}}
\newcommand{\PselW}{\mathrm{P}^{W}\text{-}\mathrm{sel}}
\newcommand{\AR}{\operatorname{AR}}
\newcommand{\Core}{\mathcal T}

\newcommand{\Mc}{\mathrm{2\text{-}mc}}

\title{Linear Certificates for Membership Comparability,\\
Quadratic Barriers for Selectors}
\author{Sebastian Ben Daniel}
\date{September 25, 2026}

\begin{document}
\maketitle
\begin{abstract}
Selectors and comparators supply only partial information about membership: a
selector names a member of any pair that meets the language, while a binary
membership comparator merely excludes one of the four membership vectors of a
pair. We ask how much nonuniform advice turns such information into exact
recognition.

Our main result extends the optimal nondeterministic advice bound for P-selective
sets to every binary membership-comparable language:
$\Mc\subseteq\NP/(3n+5)\cap\mathrm{coNP}/(3n+5)$, with common fixed advice and
certificates of at most $5n+12$ bits. The class is strictly larger; some
$\Mc$ sets are not truth-table reducible to any P-selective set. The proof
replaces the tournament king by an independent two-step cover of true signed
literals, together with a short-forcing-or-exact-majority dichotomy, and it
relativizes. Via an advice-preserving isolation transfer, a deterministic
polynomial-time algorithm for promise Unique-Circuit-SAT gives $\Mc\subseteq\Pclass/O(n)$.

For selectors we determine tight orders of ordinary advice: $\Theta(n)$ for
errorless average-case computation and $\Theta(n^2)$ for worst-case bounded-error
computation, the latter independent of the interpreter's coin bound. One oracle
realizes both orders on a single language and separates ordinary from
coin-dependent advice. The quadratic and linear lower bounds hold for
tournament-query procedures and relativized languages, not unconditionally for
unrelativized P-selective sets.
\end{abstract}
\medskip
\noindent\textbf{Keywords:} Membership comparability; P-selective sets; advice complexity;
short certificates; average-case complexity; black-box lower bounds; relativization.
\clearpage
\tableofcontents
\clearpage

\section{Introduction}
\label{m:intro}\label{sec:introduction}

A polynomial-time selector compares two strings and returns a member of a language
whenever at least one is a member. A binary membership comparator provides still
less information: it excludes one of the four possible membership vectors of the
two strings, without necessarily determining either bit. How much nonuniform
information turns such partial information into exact recognition? We study this
question across nondeterministic verification, errorless average-case computation,
and worst-case bounded-error computation. Ordinary advice is fixed once per input
length, independently of both the input and the interpreter's coins or witnesses.

\subsection{Results and scope}

Our first result extends fixed linear nondeterministic advice from P-selective
sets to \emph{unrestricted binary membership comparability}:
\begin{equation}
 \Mc\subseteq\NP/(3n+5)\ \cap\ \mathrm{coNP}/(3n+5).
 \label{m:intro-mc}
\end{equation}
The two interpreters share the same advice. Certificates have at most $5n+12$
bits and verification uses at most six comparator evaluations, all on distinct
equal-length inputs. The proof replaces a tournament king by an independent
two-step cover of true signed literals. Either two advised literals and short
forcing paths suffice, or three comparator rows predict every membership bit by
exact majority. We give the complete proof in \cref{m:comparators}. The conclusion
also holds for lengthwise comparators and relative to every oracle. Combined with
our advice-preserving isolation theorem, it yields the one-line conditional
extension $\mathsf U\Rightarrow\Mc\subseteq\Pclass/O(n)$.

For selectors, we determine tight \emph{orders} of ordinary advice in two further
regimes. Every P-selective language has an exact deterministic decider with
$3n+O(1)$ advice and polynomial uniform mean time, and a deterministic
polynomial-time errorless heuristic with $n+O_c(\log n)$ advice and uniform
abstention at most $n^{-c}$. Transcript freezing forces $n-O_q(\log n)$ advice
for polynomial-query errorless heuristics even at coverage $1/2$, including
randomized heuristics with any finite coin bound. Truncation gives the matching
linear lower order for exact polynomial mean time.

In the worst case, a hidden-core distribution forces
\begin{equation}
 (n-\log q)^2/4-O(n\log n)
 \label{m:intro-quadratic}
\end{equation}
ordinary advice bits for bounded-error $q$-query deciders, independently of the
finite coin bound. For polynomially many queries this is quadratic, matching
Ko's deterministic polynomial-time upper order \cite{Ko83}. One oracle-relative
P-selective language realizes both the linear average-case and quadratic
worst-case orders, and has linear coin-dependent advice with logarithmically
many coins. The new comparator theorem supplies fixed linear nondeterministic
advice throughout the larger class; it does not assert deterministic linear
advice or extend the selector-specific density bounds to all comparators.

\begin{table}[t]
\caption{Advice bounds, interpreter resources, and scope. Unless specified,
$q$ is a fixed polynomial and $c>0$ is fixed. BB denotes the black-box model;
REL denotes a relativized result. Unrelativized upper bounds listed here also
relativize, except for the explicitly conditional row.}
\label{m:table}
\begin{tabularx}{\textwidth}{@{}>{\raggedright\arraybackslash}p{.25\textwidth}
 >{\raggedright\arraybackslash}X
 >{\raggedright\arraybackslash}p{.20\textwidth}@{}}
\toprule
Class and task & Advice and conclusion & Scope\\
\midrule
$\Mc$: NP and coNP verification
& Common $3n+5$ bits; witnesses $\le5n+12$ bits; at most six comparator calls.
  \Cref{m:mc-main}.
& Unrelativized upper; BB proof; REL.\\
\addlinespace
$\Psel$: errorless uniform heuristics
& $n-O_q(\log n)$ necessary for coverage $1/2$; $n+O_c(\log n)$ sufficient
  for abstention $n^{-c}$. Thus $\Theta(n)$. \Cref{m:heuristic,m:near-n}.
& Unrelativized upper; BB/REL lower.\\
\addlinespace
$\Psel$: exact deterministic uniform mean time
& $3n+O(1)$ sufficient; $\Omega(n)$ necessary for polynomial mean time.
  Thus $\Theta(n)$. \Cref{m:det,m:joint}.
& Unrelativized upper; BB/REL lower.\\
\addlinespace
$\Psel$: exact randomized mean time
& $2n+O(1)$ sufficient under every exactly polynomial-time samplable input
  distribution. \Cref{m:random}.
& Unrelativized upper; REL.\\
\addlinespace
$\Psel$: worst-case polynomial time, any polynomial coin bound
& $n^2/4-O_q(n\log n)$ necessary; Ko's deterministic $n^2+n$ bits sufficient.
  Thus $\Theta(n^2)$. \Cref{m:quadratic}; \cite{Ko83}.
& Unrelativized upper; BB/REL lower.\\
\addlinespace
Tournaments: query complexity only
& Lower coefficient $1/4$, upper coefficient $1/2$ with $n+2$ queries.
  No polynomial-time indexing guarantee. \Cref{m:half}.
& BB; unrestricted internal computation.\\
\addlinespace
One language; ordinary versus coin-dependent advice
& One $W,L$ realizes both tight orders; coin-dependent $O(n)$ advice and
  $\log n+O(1)$ coins suffice. \Cref{m:joint,m:models}.
& REL; the same $\mathrm P^W$-selective language.\\
\addlinespace
$\Mc$: code access under $\mathsf U$
& $\Mc\subseteq\Pclass/O(n)$; an unrelativized counterexample refutes
  $\mathsf U$. \Cref{m:mc-U,m:generic}.
& Unrelativized, conditional.\\
\bottomrule
\end{tabularx}
\end{table}

\paragraph*{Scope of the lower bounds and constants.}
The average-case and worst-case lower bounds concern arbitrary tournament-query
procedures and a single relativized language; they are not unconditional
quadratic lower bounds for unrelativized P-selective sets or $\Mc$.
For polynomial-time decoding, our comparison remains against \emph{Ko's
$n^2+n$-bit upper bound, with coefficient $1$}. A separate universal-family
argument yields $n^2/2+O(n\log n)$ bits only when arbitrary finite,
oracle-independent computation is free. Efficient indexing of that family is
not proved. The constant gap is therefore $1/4$ versus $1/2$ in the query model,
and $1/4$ versus $1$ for polynomial-time interpreters.

\subsection{Proof overview and context}

\paragraph*{Comparators: missing edges become forcing certificates.}
Every excluded vector defines a valid two-literal clause and two implications
between signed membership assertions. The true literals induce an oriented
graph, not necessarily a tournament. A Chv\'atal--Lov\'asz independent outward
two-step cover \cite{CL74} replaces the single tournament king. Declare a literal
short-forced when its negation implies it within four steps. Among the covering
literals that are not short-forced, no true literal can have two out-neighbors:
nonadjacency inside the cover would close a four-step forcing path. If at most
two remain, advise them; otherwise three of them give row predictions with at
most one error on each input. The verifier checks only local implications, not
the cover or the unproved assertion that a guessed literal is true.

\paragraph*{Selector upper bounds: a certificate-density profile.}
A positive minimum-outdegree anchor and a negative minimum-indegree anchor give
sound two-step certificates. Their density $\rho(x)$ satisfies an atom bound
$\rho(x)\ge\nu(x)$ and a lower-tail bound
$\nu\{x:\rho(x)\le t\}\le4t$. Reciprocal integration bounds randomized mean
time. One common shift of small-bias sets replaces all search randomness for
the $3n$-bit exact decider. For a bounded-time heuristic, one shift regenerates
a polynomial-size sample, and one cut in a Hamiltonian path encodes all its labels.

\paragraph*{Two lower bounds and one oracle.}
Freezing maintains a transitive order with an uncommitted middle block. Each
advice that answers there can be made permanently wrong while fixing only its
input and query endpoints. The hidden core instead hides one point per bucket
and independent labels inside that sparse set. Few discoveries leave fresh
label-and-edge choices; many discoveries are unlikely. The same hard distribution
works for every deterministic procedure, so Yao's principle plus $O(n)$-fold
amplification lifts the bound without counting random tapes. Interleaving the
two requirements at sparse lengths yields the joint oracle. Padded membership
records are hidden from each diagonalized machine but accessible to the selector
on sufficiently separated input lengths.

\paragraph*{Code access and prior work.}
Under deterministic promise Unique-Circuit-SAT, the comparator's program can be
compiled into an ordinary circuit with only linearly many witness inputs.
Linear-seed isolation and one short expander walk then give deterministic linear
advice; the general transfer preserves advice $a(n)$ and witnesses $w(n)$ within
$O(a(n)+w(n)+n)$ bits. This uses code unavailable from tournament answers alone.

Ko's quadratic bound \cite{Ko83}, the optimal nondeterministic $(n+1)$-bit
selector bound \cite{HT96}, and its linear-witness refinement \cite{HNP98} set the
scale. Constant-arity comparability was known to imply $\Pclass/\mathrm{poly}$
\cite{ABG03}, and exponential-time interpretation already permits linear advice
\cite{Nic97}; our binary theorem adds polynomial-time, linear-witness verification
with common fixed advice. Binary comparability is genuinely broader than
P-selectivity \cite{BFP}. Earlier work gives linear coin-dependent bounded-error
advice for $\Mc$, with an errorless selector version \cite{BD12}; fixed-advice
nondeterminism is a different conclusion. Thakur's qualitative oracle barrier
\cite{Thakur03} predates our quantitative lower bounds. Our proofs also use
classical small-bias constructions, isolation, and expanders
\cite{AGHP92,VV86,RVW02}; the randomized lift is Yao's principle, not a new
minimax principle \cite{Yao77}. Associative selectors give a different known
sufficient condition for deterministic linear advice \cite{HHN04}.

The main text proves the binary theorem, freezing, and randomized lifting in
full. Appendices give the other proofs, exact parameter checks, and constructions.

\section{Preliminaries: selectors, comparators, and advice}
\label{m:prelim}

All logarithms are base two except $\ln$. Ordinary advice $a_n$ is fixed for all
$n$-bit inputs and independent of coins and witnesses; it need not be computable.
For NP and coNP advice, correctness is required only for the designated sequence.
The slash in $\NP/a(n)\cap\mathrm{coNP}/a(n)$ binds separately to each class;
common advice for both verifiers is an additional assertion.
An errorless heuristic outputs only $\chi_L(x)$ or $\unk$ on every input and tape.
Uniform coverage averages conclusive probability over $x\sim U_n$ and coins.
Exact polynomial mean time is a polynomial expectation under $U_n$, not a
pointwise time bound; randomized exact deciders terminate almost surely.

A P-selector $f$ satisfies $f(x,y)\in\{x,y\}$ and returns a member whenever
$\{x,y\}\cap L\ne\varnothing$. Order its arguments first to make it commutative.
For distinct $n$-bit strings write $x\to y$ when $f(x,y)=x$; every vertex in
$P=L\cap\bits^n$ beats every vertex in $\overline P$.
A binary membership comparator is a polynomial-time function
$g(x,y)\in\bits^2$ unequal to $(\chi_L(x),\chi_L(y))$; its class is $\Mc$.
A selector is a comparator that always excludes a mixed vector: if it returns
$x$, exclude $(0,1)$, and otherwise exclude $(1,0)$. Comparators may also exclude
$(0,0)$ or $(1,1)$. Our binary theorem needs correctness only on distinct inputs
of equal length. Normalize each pair lexicographically, and denote the returned
bits associated to the respective inputs by $\beta_x(x,y),\beta_y(x,y)$.
This costs one comparator call and ignores self-comparisons.

\subsection{Tournament anchors for the selector upper bounds}

Weights restricted to an induced subtournament are not renormalized.
\begin{lemma}[Two-step witness tail]\label{m:weighted}
In a finite tournament with nonnegative weights $\nu$, let $p$ minimize weighted
outdegree. For $p\to x$, put $I_x=\{z:x\to z\to p\}$ and
$J_x=\{z:p\to z\to x\}$. Then, for $t\ge0$,
\begin{equation}
 \nu(I_x)\ge\nu(J_x)+\nu(x),\qquad
 \nu\{x:p\to x,\ \nu(I_x)\le t\}\le2t.
 \label{m:eq-weighted}
\end{equation}
\end{lemma}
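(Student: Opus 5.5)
The plan is to compare weighted outdegrees directly for the first inequality, and then to prove the tail bound by a weighted double-counting (averaging) argument over the set of bad vertices, using the first inequality pointwise.

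\emph{Step 1: the pointwise inequality.} Write $d(v)=\nu(\{z:v\to z\})$ for the weighted outdegree. Fix $x$ with $p\to x$, and let $K_x=\{z\ne p,x: p\to z,\ x\to z\}$. Every out-neighbor $z$ of $p$ is one of three kinds: $z=x$, or $z\to x$ (so $z\in J_x$), or $x\to z$ (so $z\in K_x$). Hence
\[
d(p)=\nu(x)+\nu(J_x)+\nu(K_x).
\]
Every out-neighbor $z$ of $x$ is not $p$, since $p\to x$. Either $z\to p$ (so $z\in I_x$) or $p\to z$ (so $z\in K_x$). Hence $d(x)=\nu(I_x)+\nu(K_x)$. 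Minimality of $p$ gives $d(x)\ge d(p)$. Cancelling the common term $\nu(K_x)$, which is finite, yields $\nu(I_x)\ge\nu(J_x)+\nu(x)$.

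\emph{Step 2: the tail bound.} Let $S=\{x:p\to x,\ \nu(I_x)\le t\}$. For $x\in S$, every $z\in S$ with $z\to x$ also satisfies $p\to z$, so it lies in $J_x$. Writing $\mathrm{in}_S(x)=\{z\in S:z\to x\}$, Step 1 gives
\[
\nu(\mathrm{in}_S(x))+\nu(x)\le\nu(J_x)+\nu(x)\le\nu(I_x)\le t .
\]
Multiply by $\nu(x)$ and sum over $x\in S$. Each ordered edge $z\to x$ inside $S$ contributes $\nu(z)\nu(x)$ exactly once. Therefore
\[
\sum_{x\in S}\nu(x)\bigl(\nu(\mathrm{in}_S(x))+\nu(x)\bigr)=\tfrac12\bigl(\nu(S)^2-\textstyle\sum_{x\in S}\nu(x)^2\bigr)+\textstyle\sum_{x\in S}\nu(x)^2\ge\tfrac12\nu(S)^2 .
\]
The left side is at most $t\,\nu(S)$. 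So either $\nu(S)=0$, or dividing by $\nu(S)$ gives $\nu(S)\le 2t$.

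\emph{Expected obstacle.} The main point to get right is Step 2: the averaging must include the diagonal term $\nu(x)^2$. The atom $\nu(x)$ in the first inequality supplies exactly this term. It turns the half-sum over edges into a bound of at least $\nu(S)^2/2$, with no loss from heavy atoms. The only other care needed is bookkeeping: the weights are not renormalized, $p$ and $x$ themselves are excluded from $I_x$, $J_x$ and $K_x$, and self-loops are ignored.
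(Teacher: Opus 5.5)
Your proposal is correct and follows essentially the same route as the paper. The paper also cancels the common out-neighbors to get $d^+(x)-d^+(p)=\nu(I_x)-\nu(J_x)-\nu(x)\ge 0$. It then double counts over $A_t$, using that in-neighbors inside $A_t$ lie in $J_x$ and that the atom term supplies the diagonal, to reach $t\,\nu(A_t)\ge\tfrac12\bigl(\nu(A_t)^2+\sum_{x\in A_t}\nu(x)^2\bigr)$.
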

The identity $d^+(x)-d^+(p)=\nu(I_x)-\nu(J_x)-\nu(x)$ proves the first bound.
Summing over $A_t=\{x:p\to x,\nu(I_x)\le t\}$ counts each pair once and gives
$t\nu(A_t)\ge(\nu(A_t)^2+\sum_{x\in A_t}\nu(x)^2)/2$.
For unit weights, every vertex reaches the minimum-outdegree vertex in at most
two steps. See \cref{sec:preliminaries} for the full weighted argument.

For a full-support distribution $\nu$ on $X=\bits^n$, advise a positive
minimum-outdegree anchor $p$ and a negative minimum-indegree anchor $q_-$, with
empty-class flags: $2n+O(1)$ bits. Accept directly if $x=p$ or $x\to p$;
reject if $x=q_-$ or $q_-\to x$. For unresolved $x\in\mathcal H$, put
\[
 C_+(x)=\{z:x\to z\to p\},\quad C_-(x)=\{z:q_-\to z\to x\},\quad
 \rho(x)=\nu(C_+(x)\cup C_-(x)).
\]
The two path types certify opposite membership bits soundly. Applying
\cref{m:weighted} in the positive tournament and the reversed negative tournament gives
\begin{equation}
 \rho(x)\ge\nu(x)>0\ (x\in\mathcal H),\qquad
 \nu\{x\in\mathcal H:\rho(x)\le t\}\le4t.
 \label{m:eq-density}
\end{equation}
No labels for intermediate vertices are advised: $z\to p$ proves positivity and
$q_-\to z$ proves negativity. The atom bound guarantees eventual completeness;
the tail bound controls average cost, not a pointwise inverse-polynomial density.

\section{Binary comparators: fixed linear advice and short certificates}
\label{m:comparators}

\begin{theorem}[Common linear advice for binary comparability]\label{m:mc-main}
For every $L\in\Mc$ there are a deterministic polynomial-time verifier $V$ and
advice $a_n$ of length $3n+5$ such that, for every $x\in\bits^n$ and $b\in\bits$,
\begin{equation}
 b=\chi_L(x)\quad\Longleftrightarrow\quad
 \exists w\in\bits^{\le5n+12}\ V(x,b,a_n,w)=1.
 \label{m:eq-mc-verifier}
\end{equation}
Verification uses at most six comparator calls, on distinct $n$-bit strings.
Hence $\Mc\subseteq\NP/(3n+5)\cap\mathrm{coNP}/(3n+5)$, with the same advice.
Only distinct equal-length comparator correctness is required.
\end{theorem}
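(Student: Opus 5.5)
We follow the outline in the introduction. Fix $n$ and the set $X=\bits^n$. We form the set of signed literals $\Lambda=\{x^+,x^-:x\in X\}$, where $x^+$ asserts $x\in L$ and $x^-$ asserts $x\notin L$.

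\textbf{Implication graph.} For each distinct pair $x<y$ the comparator excludes a vector $(e_x,e_y)$. This gives the valid clause $\ell_x\vee\ell_y$, where $\ell_x$ is the literal of $x$ whose truth contradicts $e_x$. Equivalently we get the two implications $\neg\ell_x\Rightarrow\ell_y$ and $\neg\ell_y\Rightarrow\ell_x$. Each implication edge is checked with one comparator call on distinct $n$-bit strings.

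Let $T\subseteq\Lambda$ be the $|X|$ true literals. Orient $u\to v$ among true literals when the clause of the pair is $u\vee v$ and $\neg u\Rightarrow v$ is usable as a step from $u$'s complement. We need the direction convention that makes ``$v$ reaches $u$ in few steps'' yield a forcing path $\neg u\Rightarrow\cdots\Rightarrow u$. Any two true literals on distinct strings have a clause both of whose literals are true, or exactly one. In the second case the true one is forced by the false one's complement, which is a true literal. Thus the relation among true literals is an oriented graph, possibly with non-adjacent pairs.

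Call a literal $u$ \emph{short-forced} if there is an implication path $\neg u\Rightarrow\cdots\Rightarrow u$ of length at most $4$. Such a path is always sound: it proves $u$, since $\neg u$ would imply $u$. Its certificate is the list of at most $3$ intermediate strings with their signs, about $3(n+1)$ bits plus bookkeeping.

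\textbf{Cover and dichotomy.} Apply Chv\'atal--Lov\'asz \cite{CL74} to the digraph on $T$: there is an independent set $K\subseteq T$ such that every true literal is reached from $K$ in at most two steps. Let $K'$ be the elements of $K$ that are not short-forced.

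The key claim is that no true literal has two out-neighbors in $K'$. Suppose $t\to k_1$ and $t\to k_2$ with $k_1,k_2\in K'$. Independence of $K$ means the pair $k_1,k_2$ is non-adjacent, so its clause is $\neg k_1\vee\neg k_2$-type or otherwise provides $k_1\Rightarrow\neg k_2$ style edges. Chaining $\neg k_1\Rightarrow t\Rightarrow\cdots$ should then produce a forcing path of length $\le4$ for $k_1$ or $k_2$, a contradiction. I expect this case analysis over which of the four vectors each pair excludes to be the main obstacle: it needs the right orientation convention and a careful count of path lengths.

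\textbf{Case A: $|K'|\le2$.} The advice is the (at most two) literals of $K'$, each $n+1$ bits, plus flags: $2n+O(1)$ bits. A certificate for $b$ at $x$ is one of two kinds. The first is a path of length $\le 2$ from an advised literal to the literal $x^{b}$. The second is a short forcing path for the relevant cover element, composed with the reach path. Every true literal is reached from $K$, and elements of $K\setminus K'$ carry their own $\le4$-step forcing path. Soundness holds because the verifier checks only implications; in particular it never needs an advised literal to be true unless it uses it as a premise, and advised literals are true by construction. The total is at most six comparator calls and at most $5n+12$ bits of witness.

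\textbf{Case B: $|K'|\ge3$.} Advise three elements $k_1,k_2,k_3\in K'$ ($3n+3$ bits plus case flags, giving $3n+5$ in total). For input $x$ and each $k_i$, the comparator row $g(x,k_i)$ together with the truth of $k_i$ yields a prediction of $\chi_L(x)$. A prediction can be wrong only when the relevant true literal of $x$ is an out-neighbour of $k_i$ in the sense of the claim above. By the claim this happens for at most one $i$, so the majority of the three predictions is exact. The verifier accepts $b$ iff the majority equals $b$, with an empty witness and three calls; this decides both $L$ and $\overline L$.

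In both cases the same advice serves both $b$ values, which gives the NP and coNP statements with common advice. Only distinct equal-length comparator queries arise; the degenerate case where $x$ equals an advised string is handled by reading the advised sign directly.
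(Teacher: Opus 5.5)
\textbf{There is a genuine gap where you expected it.} The claim that no true literal has two out-neighbours in $K'$ carries the whole proof, and the mechanism you sketch for it does not work.

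If $k_1,k_2\in K$ are distinct non-adjacent true literals, the comparator must have excluded the vector in which both are false. So the pair's clause is $k_1\vee k_2$, and the implication graph contains $\overline{k_1}\to k_2$ and $\overline{k_2}\to k_1$. A clause of type $\neg k_1\vee\neg k_2$, or an edge $k_1\Rightarrow\neg k_2$, would be falsified by the true assignment.

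Your proposed chain $\neg k_1\Rightarrow t$ uses an edge that is in fact absent. Given $t\to k_1$, the only other edge on that pair is the contrapositive $\overline{k_1}\to\overline t$. Nor does the pair $k_1,k_2$ by itself close a forcing walk in general: from $\overline{k_1}\to k_2$ you would still need $k_2\leadsto k_1$.

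The missing idea is to use the cover a second time, at $t$ itself. Choose $q\in K$ with $q\leadsto^{\le2}t$ in the true-literal graph. Since $k_1\ne k_2$, some $k_i\ne q$. Independence of $q$ and $k_i$ gives $\overline{k_i}\to q$, so
\[
 \overline{k_i}\to q\leadsto^{\le2}t\to k_i
\]
is a forcing walk of at most $1+2+1=4$ edges, contradicting $k_i\in K'$. This also explains why the threshold is two: a single out-neighbour may equal $q$, and then $q\leadsto t\to q$ is only a harmless cycle. No case analysis over the four excluded vectors is needed beyond the trichotomy for two true literals $\ell_x,\ell_y$: an edge one way, an edge the other way, or non-adjacency together with $\overline{\ell_x}\to\ell_y$ and $\overline{\ell_y}\to\ell_x$.

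\textbf{Your orientation conventions must also be fixed for the two modes to work.} An edge $u\to v$ between true literals is the implication $u\Rightarrow v$ coming from a clause $\overline u\vee v$ with exactly one true disjunct. Your first description (orient when the clause is $u\vee v$) is precisely the non-adjacent case.

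In majority mode, let $z_i$ be the string of $k_i$, and predict $1-\beta$, where $\beta$ is the comparator's bit for $x$ on the pair $\{z_i,x\}$; the sign of $k_i$ is not needed. If the prediction is wrong, then $\beta=\chi_L(x)$. Correctness then forces the bit for $z_i$ to be $1-\chi_L(z_i)$, which makes the clause $\overline{\ell_x}\vee k_i$ and gives the edge $\ell_x\to k_i$, where $\ell_x$ is the true literal on $x$. So an error means $k_i$ is an out-neighbour of $\ell_x$, not the reverse as you wrote. Only in this direction does the key claim limit the wrong rows to one.

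With these repairs, your certificate mode and its accounting coincide with the paper's proof: at most four forcing edges plus two reach edges, and five $(n+1)$-bit slots plus a short header.
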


\subsection{Signed implications and independent two-step covers}

Fix $n$, put $c(x)=\chi_L(x)$, and let $\langle x,b\rangle$ denote $c(x)=b$.
Its complement is $\overline{\langle x,b\rangle}=\langle x,1-b\rangle$; write
$\ell_x=\langle x,c(x)\rangle$ for the true literal on $x$.
For each distinct pair, the excluded vector gives the valid clause
\[
 \langle x,1-\beta_x(x,y)\rangle\ \vee\ \langle y,1-\beta_y(x,y)\rangle.
\]
Let $G$ contain its two implication edges. Equivalently, for $x\ne y$,
\begin{equation}
 \langle x,a\rangle\to\langle y,d\rangle\text{ in }G
 \quad\Longleftrightarrow\quad
 \beta_x(x,y)=a\text{ and }\beta_y(x,y)=1-d.
 \label{m:eq-mc-edge}
\end{equation}
Edges on the same underlying input are disallowed. Each edge is checked with
one comparator call, and a true source implies a true destination: otherwise
the comparator would have excluded the actual vector. Every walk is therefore
a sound implication. Write $u\leadsto_F^{\le r}v$ for a walk of at most $r$
edges in $F$; length zero means $u=v$.

Let $D=G[\{\ell_x:x\in X\}]$ be the true-literal graph. It is oriented but may
have missing edges. For each pair, exclusion of $(c(x),1-c(y))$ gives
$\ell_x\to\ell_y$, exclusion of $(1-c(x),c(y))$ gives the reverse, and exclusion
of $(1-c(x),1-c(y))$ gives neither. In the last case $G$ contains
\begin{equation}
 \overline{\ell_x}\to\ell_y,\qquad \overline{\ell_y}\to\ell_x.
 \label{m:eq-mc-nonedge}
\end{equation}
Thus a missing true-literal edge supplies a disjunction, not an absent constraint.
The verifier evaluates $G$ only; it never tests membership in $D$.

\begin{lemma}[Independent outward two-step cover \cite{CL74}]\label{m:mc-cover}
Every finite loopless digraph $F$ has an independent set $Q$ such that every
vertex is reached from some $q\in Q$ within two steps.
\end{lemma}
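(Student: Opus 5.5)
We prove \cref{m:mc-cover} by the classical Chv\'atal--Lov\'asz induction on $|V(F)|$, building a set that is independent and reaches every vertex within two steps along out-edges. The empty digraph is trivial. For the inductive step, pick any vertex $v$. Let $N^+[v]=\{v\}\cup\{u:v\to u\}$ be its closed out-neighbourhood, and let $F'=F-N^+[v]$ be the induced subdigraph on the remaining vertices. Since $F'$ is again finite and loopless, the induction hypothesis gives an independent set $Q'\subseteq V(F')$ that reaches every vertex of $F'$ within two steps inside $F'$.

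The set $Q$ is then obtained by a single case split.

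\emph{Case 1: some $q\in Q'$ has an edge $q\to v$ in $F$.} Put $Q=Q'$. It remains independent. Every vertex of $F'$ is still reached within two steps, because the walks inside $F'$ are walks in $F$. The vertex $v$ is reached in one step from $q$, and every out-neighbour $u$ of $v$ is reached in two steps via $q\to v\to u$.

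\emph{Case 2: no vertex of $Q'$ sends an edge to $v$.} Put $Q=Q'\cup\{v\}$. The only possible edges between $v$ and $Q'$ are now edges $v\to q$. These are impossible, because $Q'\subseteq V(F')$ avoids every out-neighbour of $v$. Hence $Q$ is independent. The vertex $v$ reaches $N^+[v]$ within one step, and $Q'$ covers $F'$ as before.

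The only delicate point is checking independence in Case 2 in both directions. The direction $v\to q$ is excluded by deleting the \emph{closed out-neighbourhood*} of $v$. The direction $q\to v$ is excluded by the case hypothesis. This asymmetry is what forces the cover to be two-step rather than one-step: in Case 1 the out-neighbours of $v$ are reached only through $v$.

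Looplessness keeps $v$ from being its own out-neighbour, so the deletion is well defined. Equivalently, one can present the argument non-inductively: order the vertices, greedily build a kernel-like set by repeatedly taking a vertex and deleting its closed out-neighbourhood, and then process the chosen vertices in reverse order, discarding any that have an in-edge from a kept vertex. The inductive form above is cleaner, and it is the form we would write.
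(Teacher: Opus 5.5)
Your proof is correct and follows the paper's argument step for step: you induct after deleting the closed out-neighbourhood $\{v\}\cup N^+(v)$, keep $Q'$ when some $q\in Q'$ has $q\to v$ (covering $v$ in one step and its out-neighbours in two), and otherwise add $v$, with the case hypothesis excluding edges $Q'\to v$ and the deletion excluding edges $v\to Q'$. One small correction to your commentary: the deletion is well defined even when loops are present; looplessness is what makes $\{v\}$, and hence $Q'\cup\{v\}$ in Case 2, independent.
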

\begin{proof}
Induct on $|V(F)|$. The empty graph is immediate. Choose $v$ and apply induction
to $F'=F-(\{v\}\cup N_F^+(v))$, obtaining $Q'$. If some $q\in Q'$ has $q\to v$,
then $Q'$ covers $F'$ by induction, $v$ in one step, and its out-neighbors in two.
Otherwise use $Q'\cup\{v\}$. There are no edges from $Q'$ to $v$ by this case
assumption and none from $v$ to $Q'$ by the definition of $F'$. Independence and
coverage follow. This is the usual quasi-kernel theorem with the graph reversed.
\end{proof}

Choose this cover $Q$ in $D$. Call a literal $t$ \emph{short-forced} if
$\overline t\leadsto_G^{\le4}t$. Such a literal is true: if false, its complement
would be true and the walk would force $t$ true. Put
$H=\{h\in Q:h\text{ is not short-forced}\}$.
No algorithm for constructing $Q$ or testing non-forcing is required.

\begin{lemma}[At most one out-neighbor among unforced covering literals]
\label{m:mc-one}
Every true literal $\ell_x$ has at most one out-neighbor in $H$.
\end{lemma}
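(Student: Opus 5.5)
The plan is to argue by contradiction: assume the true literal $\ell_x$ has two distinct out-neighbours $h_1,h_2\in H$, and show that one of them is short-forced. That contradicts the definition of $H$.

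\emph{Setup.} The vertices of $D$ are true literals, one per input. So $h_1=\ell_y$ and $h_2=\ell_z$ for distinct inputs $y,z$, both different from $x$, since edges on the same input are disallowed. I will use two local facts about $G$:
\begin{enumerate}[(i)]
\item Every edge $u\to v$ of $G$ comes from a clause $\overline u\vee v$. Hence the contrapositive edge $\overline v\to\overline u$ is also in $G$, by \eqref{m:eq-mc-edge}; both are implication edges of the same clause.
\item Any two distinct members of $Q$ are nonadjacent in $D$, because $Q$ is independent. Then \eqref{m:eq-mc-nonedge} gives $\overline{q}\to q'$ and $\overline{q'}\to q$ in $G$ for such a pair $q,q'$.
\end{enumerate}
Applying (i) to $\ell_x\to h_i$ gives $\overline{h_i}\to\overline{\ell_x}$ for $i=1,2$. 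Applying (ii) to $h_1,h_2$ gives $\overline{h_1}\to h_2$ and $\overline{h_2}\to h_1$.

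\emph{Bringing in the cover.} The local clauses among $\ell_x,h_1,h_2$ alone do not yield a forcing walk. So I will use that $Q$ covers $\ell_x$: some $q\in Q$ satisfies $q\leadsto_D^{\le2}\ell_x$, and this walk lies in $G$. First, $q\ne\ell_x$: otherwise $\ell_x\in Q$ would be adjacent to $h_1\in Q$, contradicting independence. Reversing the covering walk edge by edge with (i) gives $\overline{\ell_x}\leadsto_G^{\le2}\overline q$. I then split into cases.
\begin{itemize}
\item[$q\notin\{h_1,h_2\}$.] Then $q$ and $h_1$ are distinct members of $Q$, and they live on different inputs since both are true literals. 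Fact (ii) gives $\overline q\to h_1$. Concatenating,
\[\overline{h_1}\to\overline{\ell_x}\leadsto^{\le2}\overline q\to h_1,\]
a walk of at most four edges. So $h_1$ is short-forced.
\item[$q=h_1$.] Use instead
\[\overline{h_2}\to\overline{\ell_x}\leadsto^{\le2}\overline{h_1}\to h_2,\]
again at most four edges. So $h_2$ is short-forced.
\item[$q=h_2$.] This is symmetric to the previous case and forces $h_1$.
\end{itemize}
In every case some element of $H$ is short-forced, which is impossible, so $\ell_x$ has at most one out-neighbour in $H$.

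\emph{Main obstacle.} The delicate step is realizing that the clauses among $\ell_x,h_1,h_2$ are logically consistent with $h_1$ being false; at that level $\ell_x$ is just a literal. So the covering literal $q$ must be used, and the degenerate cases must be handled. These are $q$ coinciding with $h_1$, $h_2$ or $\ell_x$, and possible input coincidences, where edges on the same input are forbidden. I also need to count walk lengths carefully, since the length-zero case of the cover is included, so that the total stays within four.
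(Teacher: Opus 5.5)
Your proof is correct and uses the paper's idea: take a cover element $q$ reaching $\ell_x$, pick $h_i\ne q$, and join the nonadjacency implication between $q$ and $h_i$, the covering walk, and $\ell_x\to h_i$ into a forcing walk of at most four edges. Your walk $\overline{h_i}\to\overline{\ell_x}\leadsto\overline q\to h_i$ is the reverse-complement of the paper's walk $\overline{h_i}\to q\leadsto_D^{\le2}\ell_x\to h_i$; the paper takes the other implication $\overline{h_i}\to q$ from \eqref{m:eq-mc-nonedge} directly, so it needs neither your contrapositive-closure fact (i) nor your case split, since all three of your cases reduce to choosing $h_i\ne q$.
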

\begin{proof}
Suppose $\ell_x\to h_1,h_2$ for distinct $h_1,h_2\in H$. Choose $q\in Q$ with
$q\leadsto_D^{\le2}\ell_x$. Some $h_i$ differs from $q$. Independence of $Q$
and \cref{m:eq-mc-nonedge} give $\overline{h_i}\to q$ in $G$, so
\begin{equation}
 \overline{h_i}\ \to\ q\ \leadsto_D^{\le2}\ \ell_x\ \to\ h_i
 \label{m:eq-mc-force}
\end{equation}
is a forcing walk of at most four edges, contradicting $h_i\in H$.
\end{proof}

\subsection{Short certificates or an exact three-row majority}

\begin{proof}[Proof of \cref{m:mc-main}]
The advice has a two-bit header $j\in\{0,1,2,3\}$ and three signed-literal slots,
each of $n+1$ bits. Its length is $2+3(n+1)=3n+5$.

\paragraph*{Certificate mode: $|H|\le2$.}
Set $j=|H|$ and advise all of $H$ in the first $j$ slots. For a target literal
$t=\langle x,b\rangle$, a witness supplies either
\[
 h\leadsto_G^{\le2}t\quad(h\text{ an advised literal}),
 \qquad\text{or}\qquad
 \overline q\leadsto_G^{\le4}q\ \text{ and }\ q\leadsto_G^{\le2}t.
\]
Every advised premise is true; every accepted forcing walk certifies its premise.
Soundness of the remaining walk therefore proves the target true on every
accepting branch. Conversely, if $t$ is true, some $q\in Q$ reaches it within
two steps in $D$. If $q\in H$, use its advised premise; otherwise $q$ is
short-forced and the second certificate exists. Guessed literals need not be
asserted to lie in $Q$ or $D$: their truth follows from checked implications.
The header $j=0$ means no advised premise is needed, not that the slice is empty.

\paragraph*{Majority mode: $|H|\ge3$.}
Set $j=3$ and advise three distinct literals $h_i=\ell_{z_i}\in H$. On an input
$x=z_i$, use its advised label. Otherwise form
$p_i(x)=1-\beta_x(z_i,x)$ for $i=1,2,3$. If $p_i(x)\ne c(x)$, then
$\beta_x(z_i,x)=c(x)$, so comparability forces
$\beta_{z_i}(z_i,x)=1-c(z_i)$. By \cref{m:eq-mc-edge}, $\ell_x\to h_i$ in $D$.
\Cref{m:mc-one} permits at most one such error. Hence
\begin{equation}
 c(x)=\operatorname{MAJ}\bigl(1-\beta_x(z_1,x),1-\beta_x(z_2,x),
                         1-\beta_x(z_3,x)\bigr).
 \label{m:eq-mc-majority}
\end{equation}
This mode is deterministic: accept exactly when this answer equals $b$, using
an empty witness. Reject advice with repeated majority inputs.

\paragraph*{Encoding and verification.}
A certificate-mode witness has a seven-bit header: one type bit, one advised-slot
bit, three bits for forcing length $r\in\{1,2,3,4\}$, and two for final length
$s\in\{0,1,2\}$. Five $(n+1)$-bit slots suffice: $q$, up to three internal
forcing literals, and the possible internal final literal. Endpoints are
reconstructed from $q$, the advice, and $t$; unused fields are ignored.
For an advised premise only the slot index and possible final internal literal
are needed. The length is $7+5(n+1)=5n+12$. Reject malformed advice or witness lengths and invalid used fields; a
zero-edge final walk requires identical endpoints. Verification tests at most
$r+s\le6$ edges, using \cref{m:eq-mc-edge} only; majority mode makes at most three
calls. Its time is $O(T_g(n)+n)$ for comparator time $T_g$. This also handles
$n=0$, when certificate mode advises the single true literal and uses a zero-edge
walk. The designated advice works simultaneously for both $b=0$ and $b=1$,
proving \cref{m:eq-mc-verifier} and the two containments.
\end{proof}

\begin{corollary}[Relativization]\label{m:mc-rel}
For every oracle $A$, $\Mc^A\subseteq\NP^A/(3n+5)\cap\mathrm{coNP}^A/(3n+5)$,
with the same common-advice, witness, and six-call bounds.
\end{corollary}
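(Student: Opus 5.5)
The plan is to rerun the proof of \cref{m:mc-main} verbatim relative to $A$, checking that each step is either purely combinatorial or uses the comparator only as a black box. Fix $L\in\Mc^A$ with a comparator $g^A$ computable in polynomial time with oracle $A$, and normalize it lexicographically as before. The objects used in the proof do not depend on how $g^A$ is computed, only on its values on distinct equal-length pairs and on the fact that it never outputs $(\chi_L(x),\chi_L(y))$. These objects are the implication graph $G$ defined by \cref{m:eq-mc-edge}, the true-literal graph $D$, the cover $Q$ from \cref{m:mc-cover}, the set $H$ of non-short-forced covering literals, and \cref{m:mc-one}. So $G$, $D$, $Q$ and $H$ are defined exactly as before, and soundness of walks, \cref{m:eq-mc-nonedge}, \cref{m:mc-cover} and \cref{m:mc-one} hold unchanged. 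These lemmas make no complexity assumptions. $Q$ and $H$ are only used to choose the advice, which may be noncomputable, so the presence of $A$ does not affect them.

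Next, the same advice $a_n$ of length $3n+5$ is chosen by the same case split. If $|H|\le2$ we use certificate mode; if $|H|\ge3$ we use majority mode with three literals of $H$. The verifier $V^A$ is the same procedure: it parses the header, checks at most $r+s\le6$ edges via \cref{m:eq-mc-edge}, or computes the majority \cref{m:eq-mc-majority} with three calls. Each edge test or majority row is now one evaluation of $g^A$. So $V^A$ runs in deterministic polynomial time with oracle $A$, in time $O(T_g(n)+n)$ where $T_g$ now counts oracle-machine time.

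Correctness of \cref{m:eq-mc-verifier} then transfers word for word. Soundness holds because every accepted walk is a chain of valid implications and every advised premise is true. Completeness follows from the cover property together with the short-forced dichotomy. Exactness in majority mode follows from \cref{m:mc-one}. The witness length $5n+12$ and the call bound are syntactic properties of the encoding, so they are unchanged. Reading off $b=1$ and $b=0$ gives $L\in\NP^A/(3n+5)$ and $\overline L\in\NP^A/(3n+5)$, that is, $L\in\mathrm{coNP}^A/(3n+5)$, with common advice.

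The only point needing care is to confirm that nothing in the earlier proof used a non-relativizing property. This means no use of the comparator's code and no reliance on unrelativized facts about $\NP$; the proof only queries $g$ and uses finite combinatorics. I expect this check to be routine, since the proof was written in black-box form.
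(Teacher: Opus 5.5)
Your proposal is correct and follows the paper's argument: the paper likewise substitutes the polynomial-time $A$-oracle comparator into the identical verifier. It notes that the proof evaluates the comparator only on distinct equal-length pairs, so the common $3n+5$ advice, the $5n+12$ witness bound, and the six-call bound carry over unchanged.
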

\begin{corollary}[Promise-unique transfer]\label{m:mc-U}\label{m:conditional}
Under $\mathsf U$ from \cref{m:conditional-section}, $\Mc\subseteq\Pclass/O(n)$,
by \cref{m:generic} with $a(n)=3n+5$ and $w(n)=5n+12$.
\end{corollary}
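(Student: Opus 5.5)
The corollary follows from \cref{m:mc-main} by removing the nondeterminism with $\mathsf U$. Deterministic promise Unique-Circuit-SAT only helps once witnesses are unique, so the plan is to isolate witnesses with a hash family whose seed is linear in the witness length, and to fix one seed sequence as advice for all $n$-bit inputs at once. I would either invoke \cref{m:generic} directly with $a(n)=3n+5$ and $w(n)=5n+12$, or, if that transfer must be justified here, carry out its proof as follows.

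First, fix the verifier $V$ and the common advice $a_n$ from \cref{m:mc-main}. For each $x\in\bits^n$ and $b\in\bits$, form the circuit $C_{x,b}(w)=V(x,b,a_n,w)$ on $w\le 5n+12$ input bits. By \eqref{m:eq-mc-verifier}, $C_{x,b}$ is satisfiable exactly when $b=\chi_L(x)$. Since all pairs are normalized to distinct equal-length inputs, the comparator can be compiled into a polynomial-size circuit, and so $C_{x,b}$ has polynomial size with only $w(n)$ witness inputs.

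Second, apply Valiant--Vazirani isolation \cite{VV86}. Use a pairwise-independent hash $h_s:\bits^{w}\to\bits^{k}$ with a seed $s$ of $O(w)$ bits, and try every level $k\in\{0,\dots,w+1\}$. For a satisfiable $C_{x,b}$, at the correct $k$ a constant fraction (at least $1/8$) of seeds leave exactly one witness with $h_s(w)=0^k$. Call a seed \emph{good for $x$} if some $k$ isolates a witness of $C_{x,\chi_L(x)}$.

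Third, derandomize the choice of seed uniformly over inputs. Take a constant-degree expander on the seed space \cite{RVW02}, and let the advice be a start vertex together with $O(n)$ edge labels, which costs $O(w+n)$ bits. By the expander hitting property, the probability that a walk avoids the good set of a fixed $x$ is $2^{-\Omega(\text{length})}$. Choosing the length $cn$ makes this below $2^{-n-1}$, and a union bound over all $2^n$ inputs yields one walk that works for every $x$. The total advice is then $a_n$ plus this walk, which is $O(a(n)+w(n)+n)=O(n)$.

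The deterministic decider runs as follows. For each vertex $s$ on the advised walk, each $k$, and each $b$, it runs the $\mathsf U$ algorithm on $C_{x,b}\wedge[h_s(w)=0^k]$. It then checks the returned string with $V$, and accepts $b$ if some returned string verifies.

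Soundness holds because acceptance requires a string that $V$ verifies, and \eqref{m:eq-mc-verifier} permits verification only for $b=\chi_L(x)$. Arbitrary solver outputs on instances violating the promise are therefore harmless. Completeness holds because the good walk isolates a witness for the true bit, and on that promise instance the solver must return it.

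The main obstacle is making the isolation seed and the walk genuinely linear in $n$ rather than $O(w\log w)$ or $O(n^2)$. This requires checking three things:
\begin{itemize}
\item the hash family has seed $O(w)$, for example $w$-bit-to-$w$-bit affine maps over $\F_{2^{w}}$ truncated to $k$ bits;
\item one seed serves all levels $k$, or alternatively the levels are enumerated inside each walk step;
\item the expander on $2^{O(w)}$ vertices has constant degree and spectral gap, so that each step costs $O(1)$ advice bits.
\end{itemize}
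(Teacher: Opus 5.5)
Your first option, substituting the verifier of \cref{m:mc-main} into \cref{m:generic} with $a(n)=3n+5$ and $w(n)=5n+12$, is exactly the paper's proof. Your unpacked version also follows the route the paper uses for \cref{m:generic}: pairwise-independent hashing tested at every prefix length from one linear-length seed, then one constant-degree expander walk of length $O(n)$ that hits all $2^n$ good-seed sets.

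The unpacked version, however, misuses the hypothesis. $\mathsf U$ is a promise \emph{decision} algorithm: it only accepts or rejects. So the steps ``check the returned string with $V$'' and ``on that promise instance the solver must return it'' require a witness that $\mathsf U$ never produces. The repair is to OR the decisions, as the paper does, and then soundness needs no witness at all. $\mathsf U$ must reject every unsatisfiable circuit, so any acceptance, even on an off-promise restriction, shows that $C_{x,b}$ is satisfiable and hence $b=\chi_L(x)$. Completeness is the isolation event: there the restricted circuit has exactly one solution, so $\mathsf U$ must accept. If you want an actual witness, fix its bits one at a time. Every restriction of a uniquely satisfiable circuit has at most one solution, so all these calls stay within the promise; but this is unnecessary.

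There are a few smaller points.
\begin{itemize}
\item \Cref{m:generic} quantifies over $z\in\bits^{w(n)}$ exactly. \Cref{m:eq-mc-verifier} instead allows witnesses of length at most $5n+12$, and majority mode uses the empty witness. The paper pads every certificate to exactly $5n+12$ bits and lets majority mode ignore the padded witness. Each compiled circuit then has exactly $5n+12$ free inputs.
\item Your example hash $z\mapsto az+b$ over $\F_{2^w}$ has only $w$ output bits, while your level range runs to $k=w+1$. The paper's $(m+1)\times m$ Toeplitz matrix with a shift in $\F_2^{m+1}$ uses $3m+1$ seed bits and supplies all prefixes $k=1,\ldots,m+1$ from one seed.
\item The comparator compiles into a polynomial-size circuit because it is a polynomial-time program, not because pairs are normalized. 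This use of its actual code is why the corollary is unrelativized and black-box only in the upper-bound direction.
\end{itemize}
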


The first corollary substitutes the oracle comparator in the identical verifier;
the proof is entirely lengthwise. For the second, pad certificates to their fixed
maximum length before applying the transfer. Correctness is required only under
the designated advice, not on arbitrary advice--input pairs. We do not claim
that the universal verifier defines an $\NP\cap\mathrm{coNP}$ language for all
advice strings. The linear order is already necessary on the P-selective subclass, since
$\Psel\not\subseteq\NP/n$ and $\Psel\not\subseteq\mathrm{coNP}/n$ \cite{HT96,HNP98}; neither coefficient
$3$ nor the witness and call constants are proved optimal. The argument is binary:
for higher arity, missing adjacency no longer supplies the one-premise implication
used in \cref{m:eq-mc-force}.

\section{Selector upper bounds: linear advice on average}
\label{m:upper}

These results concern every $L\in\Psel$ and relativize. They use only the
selector's evaluations. Full proofs and constructions are in
\cref{sec:upper,app:technical}; no extension of the density argument to all
binary comparators is assumed.

\begin{theorem}[Exact deterministic polynomial mean time]\label{m:det}
Every $L\in\Psel$ has a deterministic exact decider $D$ with ordinary advice
$|a_n|=3n+O(1)$ and a polynomial $Q$ such that
\begin{equation}
 D(x,a_n)=\chi_L(x)\quad(x\in\bits^n),\qquad
 \E_{x\sim U_n}T_D(x,a_n)\le Q(n).
 \label{m:eq-det}
\end{equation}
Its worst-case time is $2^n\poly(n)$. The mean charged number of certificate
candidates, including setup for every reached stage, is $O(n^4)$. One advice
sequence supports every fixed inverse-polynomial timeout guarantee.
\end{theorem}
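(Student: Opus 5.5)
The plan is to instantiate the anchor construction of \cref{m:weighted} with the uniform distribution $\nu=U_n$ and then derandomize a randomized certificate search. The advice consists of three parts. First, a positive minimum-outdegree anchor $p$ and a negative minimum-indegree anchor $q_-$, with empty-class flags, for $2n+O(1)$ bits. Second, a shift $s\in\bits^n$ to be chosen at the end, for $n$ more bits. Third, $O(1)$ bits of header. The decider works in stages. It first resolves $x$ directly: it accepts if $x=p$ or $x\to p$, and rejects if $x=q_-$ or $q_-\to x$.

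For unresolved $x\in\mathcal H$, the decider enumerates candidates in stages $k=1,2,\dots,n$. At stage $k$ it uses a fixed explicit small-bias set $S_k\subseteq\bits^n$ of size $\poly(n)2^{k}$, built from \cite{AGHP92} with bias about $2^{-k}/n^2$, translated by the common shift $s$. It tests each $z\in S_k\oplus s$ as a member of $C_+(x)$, checking $x\to z\to p$, or of $C_-(x)$, checking $q_-\to z\to x$. Any hit is a sound certificate, since $z\to p$ forces $z\in L$ and $q_-\to z$ forces $z\notin L$. After stage $n$ the decider falls back to exhaustive enumeration of $\bits^n$. This enumeration terminates with a certificate by the atom bound $\rho(x)\ge\nu(x)=2^{-n}$ in \cref{m:eq-density}: indeed $x\in C_+(x)\cup C_-(x)$ fails, but $\rho(x)>0$ guarantees some $z$ exists. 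Exactness therefore holds pointwise for every $x$ and every shift, and the worst case is $2^n\poly(n)$.

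For the mean time I would argue in two steps. For a fixed $x$ with $\rho(x)\ge 2^{-k+1}$, a random shift $s$ makes $(S_k\oplus s)\cap(C_+\cup C_-)$ empty with probability at most about $2^{-k}$ times a polynomial factor, using that a small-bias set hits every set of density $\gg$ bias with high probability under a random shift. Equivalently, I would use pairwise-independent-style hitting, with the sample size chosen to give failure probability $\le 2^{-2k}$. Then
\[
\E_s\E_x T(x)\;\lesssim\;\poly(n)\sum_k 2^{k}\Bigl(\Pr_x[\rho(x)\le 2^{-k+2}]+\max_{x:\rho(x)>2^{-k+2}}\Pr_s[\text{stages}<k\text{ all miss}]\Bigr).
\]
By the tail bound $\nu\{\rho\le t\}\le4t$, the first term contributes $\poly(n)\cdot O(1)$ per stage. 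Summed over the $n$ stages, this gives the $O(n^4)$ candidate count once stage setup and sizes are accounted for. The second term is at most $2^k\cdot 2^{-2k}$. The final fallback costs $2^n\poly(n)$ with probability $O(2^{-n})$. Averaging over $s$ then shows that some fixed $s$ achieves the bound, and that $s$ goes into the advice. The same $s$ also works for every timeout guarantee, because Markov's inequality applied to the mean bounds the fraction of inputs exceeding any time threshold.

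The main obstacle is the hitting step. A single shift must simultaneously serve all $x$ and all stages, while the set $C_\pm(x)$ depends on $x$ and is not a linear test. Small bias alone controls only linear tests, so I would first try to obtain the needed hitting property from $\epsilon$-biased sets via their $\epsilon$-almost $k$-wise independence, which gives second-moment concentration of $|S\cap C|$ for arbitrary $C$. If that fails, I would replace them with an explicit pairwise-independent family of size $2^{k}\poly(n)$ indexed by an affine map whose $n$-bit translation part is the shared shift. This substitution would keep the advice at $3n+O(1)$ bits.
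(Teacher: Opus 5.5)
\textbf{There is a genuine gap in the mean-time estimate.} Your architecture is the paper's: anchors chosen with $\nu=U_n$, explicit small-bias stages searched under one common shift, an exhaustive fallback for exactness, averaging over the shift to fix the advice, and Markov for timeouts. The mean-time estimate, which is the substance of the theorem, does not go through.

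\emph{Inconsistent parameters.} A $\lambda$-biased multiset in $\F_2^n$ has size at least of order $\min\{2^n,\lambda^{-2}\}$. So bias $2^{-k}/n^2$ forces size $\Omega(n^4 2^{2k})$, not $\poly(n)2^k$.

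\emph{The threshold split fails.} This cannot be repaired by choosing another schedule. A random translate of a $\lambda$-biased set guarantees only $\Pr_s[\text{miss}]\le\lambda^2/\rho$, a Chebyshev-type bound. Your split therefore bounds $\E_x\Pr_s[\text{reach stage }j]$ only by $4t+\lambda_{j-1}^2/t\ge4\lambda_{j-1}$. Stage $j$ is then charged at least about $|S_j|\lambda_{j-1}\ge|S_j|\lambda_j$, which is of order $\sqrt{|S_j|}$. That is exponential once the stages are large enough to make the fallback rare.

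\emph{The claimed $2^{-2k}$ failure bound is unattainable.} No sampler driven by an $n$-bit seed gives it from $N=\poly(n)2^k$ points. Choose about $(1-\rho)2^n/N$ seeds; their samples cover at most $(1-\rho)2^n$ points, so some set $C$ of density $\rho$ avoids all of them. The miss probability is then about $(1-\rho)/N$, which exceeds $2^{-2k}$ for all large $k$. Likewise, your claim that the fallback is reached with probability $O(2^{-n})$ holds only on average over $x$, and only via the integration below.

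\textbf{The fix: keep the bound pointwise in $x$ and integrate the density tail instead of thresholding it.} The translate lemma holds for arbitrary $C$. The operator $g\mapsto\E_{s\in S}g(\cdot+s)$ contracts mean-zero functions by $\lambda$, and applying it to $\ind_C-\rho$ gives $\Pr_v[\text{miss}]\le\lambda^2(1-\rho)/\rho$. This removes your stated obstacle. No almost-$k$-wise independence is needed; that notion concerns coordinates of a single sample, not correlations among samples. Your pairwise-independent substitute would also not keep $3n+O(1)$ bits. A translate of a fixed set is not pairwise independent, since the differences between its points are fixed, and a genuinely pairwise-independent family also needs a random linear part.

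The calculation then runs as follows:
\begin{itemize}
\item Take $\lambda_j=2^{-j/2}$ and $|S_j|=O(n^22^j)$.
\item Over the shift, stage $j$ is reached with probability at most $\min\{1,2^{-(j-1)}/\rho(x)\}$.
\item Each stage is therefore charged $O(n^2/\rho(x))$, and the whole search, fallback included, $O(n^3/\rho(x))$.
\item Integrating $\nu\{\rho\le t\}\le4t$ gives $\E_x[\ind_{\mathcal H}(x)/\rho(x)]\le1+4n\ln2$.
\item Hence the mean charge is $O(n^4)$.
\end{itemize}
This integration loses only a logarithmic factor, namely $\E_x\min\{1,a/\rho\}=O(a\log(1/a))$, whereas thresholding loses a square root.
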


\begin{theorem}[Exact randomized polynomial mean time]\label{m:random}
For every $L\in\Psel$ and exactly polynomial-time samplable ensemble $\mu_n$,
an errorless interpreter with $2n+O(1)$ ordinary advice terminates almost surely
on every input and has polynomial mean time over $x\sim\mu_n$ and its coins.
Its mean certificate-sample count is $O(n)$. Truncation after $M$ samples has
average abstention at most $8/(M+1)$, or $4/(M+1)$ for direct uniform sampling.
\end{theorem}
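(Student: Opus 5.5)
Here is the plan for \cref{m:random}. The verification game is the one already set up in \cref{m:prelim}, now run with the weights $\nu=\mu_n$. The advice consists of the positive minimum-$\mu_n$-weighted-outdegree anchor $p\in P$, the negative minimum-weighted-indegree anchor $q_-\in\overline P$, and two empty-class flags. This totals $2n+O(1)$ bits.

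\textbf{Interpreter.} On input $x$ the interpreter first settles the direct cases: accept if $x=p$ or $x\to p$, and reject if $x=q_-$ or $q_-\to x$. Any remaining $x\in\mathcal H$ enters a loop. In each round the interpreter runs the exact sampler for $\mu_n$ to draw $z\sim\mu_n$, then checks whether $x\to z\to p$ (accept) or $q_-\to z\to x$ (reject). Each round is one sample and $O(1)$ selector calls.

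\textbf{Correctness and termination.} The two path types are sound, as noted after \cref{m:eq-density}, so every output is $\chi_L(x)$ and the interpreter is errorless. If $\mu_n(x)=0$, then $x$ contributes nothing to the mean. Otherwise the atom bound in \cref{m:eq-density} gives $\rho(x)\ge\mu_n(x)>0$. The number of rounds is geometric with success probability $\rho(x)$, so the loop terminates almost surely.

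One subtlety concerns inputs outside the support. For such $x$, $\rho(x)$ may be $0$, and the interpreter would then run forever. I would fix this by interleaving: on odd rounds use a uniform sample $z\sim U_n$, whose density is positive on $C_\pm(x)\ni x$'s own atom after applying \cref{m:weighted} with uniform weights. Alternatively, one can note that ``every input'' only needs almost-sure termination, and for $x$ outside the support the uniform-weight anchors would not coincide with the advised ones. I expect this to be the main obstacle. The cleanest fix is to advise anchors for $\nu=\tfrac12(\mu_n+U_n)$ and to sample from this mixture, which is exactly samplable. Then every $x$ has $\nu(x)\ge2^{-n-1}>0$, and since $\mu_n\le2\nu$ the mean bounds below lose only a factor of $2$. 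This factor of $2$ also explains the constant $8$ versus $4$ in the statement.

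\textbf{Mean sample count.} Conditioned on $x$, the expected number of samples is $1/\rho(x)$. Averaging, we need $\E_{x\sim\mu_n}[1/\rho(x)]\le2\,\E_{x\sim\nu}[1/\rho(x)\,\ind_{\mathcal H}]$. This is bounded by layer-cake integration against the tail bound $\nu\{\rho\le t\}\le4t$. The natural computation is
\[
\E_\nu\min(1/\rho,M)=\int_0^M\nu\{1/\rho>s\}\,ds\le\int_0^M\min(1,4/s)\,ds,
\]
which is logarithmic rather than constant. So I would instead bound $\E[1/\rho]$ by $\sum_x\nu(x)/\rho(x)$ and dyadically group the values of $\rho$. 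The atom bound caps $\nu(x)/\rho(x)\le1$, and the tail bound gives at most $4t$ mass at density level $t$. Together these yield $O(n)$, since the densities live in $[2^{-n-1},1]$ and there are $O(n)$ dyadic levels, each contributing $O(1)$. This matches the claimed $O(n)$ sample count, and polynomial mean time follows because each round costs polynomial time.

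\textbf{Truncation.} For truncation after $M$ samples, the abstention probability given $x$ is $(1-\rho)^M\le1/(1+M\rho)$. Averaging with the tail bound, via $\int_0^1 4\,\frac{M}{(1+Mt)^2}\,dt\cdot\frac1M$-type integration (that is, integrating the derivative of $1/(1+Mt)$ against $\nu\{\rho\le t\}\le4t$), gives at most $4/(M+1)$ under $\nu$. It gives $8/(M+1)$ under $\mu_n$ because $\mu_n\le2\nu$. In the direct uniform case no mixture is needed, so the bound is $4/(M+1)$.
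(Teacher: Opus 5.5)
Your final construction is the paper's: anchors chosen for the mixture $\nu=\tfrac12(\mu_n+U_n)$, certificate samples drawn from $\nu$, a geometric sample count with mean $1/\rho(x)$, and a factor $2$ from $\mu_n\le2\nu$. Two side remarks.

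First, the interleaving alternative is invalid, not merely less clean. Anchors optimized for $\mu_n$ give no atom bound for uniform samples. For example, if $\mu_n$ is a point mass on a positive vertex that beats all other positives, that vertex is a legitimate anchor, and every other positive $x$ has $C_+(x)=C_-(x)=\varnothing$. So the mixture must be used to choose the anchors, not only to sample.

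Second, the layer-cake integral you set aside already gives the sample count. The atom bound caps $1/\rho$ at $2^{n+1}$, so $\int_0^{2^{n+1}}\min(1,4/s)\,ds=O(n)$. This is \cref{lem:reciprocal}, and your dyadic grouping is an equivalent rephrasing of it.

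The genuine gap is the truncation constant. The step $(1-\rho)^M\le1/(1+M\rho)$ is true but throws away too much, and your displayed integral does not follow from the tail bound. Integrating by parts correctly,
\[
 \E_\nu\Bigl[\frac{\ind_{\mathcal H}}{1+M\rho}\Bigr]
 \le\frac1{M+1}+\int_0^1\frac{M}{(1+Mt)^2}\,4t\,dt
 =\frac1{M+1}+\frac4M\Bigl(\ln(1+M)-\frac M{M+1}\Bigr),
\]
whereas your computation silently replaces the factor $4t$ by $4/M$, which is valid only for $t\le1/M$.

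The loss is intrinsic to the intermediate bound, not to the bookkeeping. Take a profile meeting the tail bound with equality, say $\rho$ uniform on $[0,1/4]$. It has $\E[1/(1+M\rho)]=\frac4M\ln(1+M/4)$, which exceeds $4/(M+1)$ by a factor of order $\ln M$. So your route proves only $O(\log M/M)$ abstention.

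The fix, which is the paper's \cref{lem:truncation}, is to keep $(1-\rho)^M$ exactly. Write $(1-\rho)^M=\int_\rho^1M(1-t)^{M-1}\,dt$, exchange with the finite sum over inputs, and use $F(t)=\nu\{x\in\mathcal H:\rho(x)\le t\}\le4t$ to get
\[
 M\int_0^1F(t)(1-t)^{M-1}\,dt\le4M\int_0^1t(1-t)^{M-1}\,dt=\frac4{M+1}.
\]
The kernel $M(1-t)^{M-1}$ has first moment $1/(M+1)$, while $M/(1+Mt)^2$ has first moment of order $\log M/M$; that is exactly the difference. With this bound under $\nu$, $\mu_n\le2\nu$ then gives $8/(M+1)$, and the direct uniform case gives $4/(M+1)$, as you intended.
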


\paragraph*{Independent sampling and a full-support mixture.}
After the direct tests, independently sample intermediate vertices from $\nu$
until a certificate appears. On unresolved $x$ the mean sample count is
$1/\rho(x)$. Integrating \cref{m:eq-density} gives
\begin{equation}
 \E_{x\sim\nu}\frac{\ind_{\mathcal H}(x)}{\rho(x)}
 \le1+4\ln(1/\beta)\quad\text{when }\nu(x)\ge\beta>0.
 \label{m:eq-reciprocal}
\end{equation}
A second integration bounds the average probability of missing after $M$ trials
by $4/(M+1)$. For arbitrary $\mu_n$ choose anchors and samples using
$\nu_n=(\mu_n+U_n)/2$: then $\nu_n(x)\ge2^{-n-1}$ and $\mu_n\le2\nu_n$.
This preserves eventual answers even outside the support of $\mu_n$, with
polynomial mean under $\mu_n$. The sampler may use polynomially many coins.
The expectation includes inputs, so this is not a $\BPP/O(n)$ containment.

\paragraph*{One shift replaces the search randomness.}
For a $\lambda$-biased multiset $S\subseteq\F_2^n$ and a set $C$ of density
$\rho>0$, the averaging operator contracts mean-zero functions by $\lambda$.
Applying it to $\ind_C-\rho$ gives
\begin{equation}
 \Pr_{v\sim U_n}[(v+S)\cap C=\varnothing]
 \le\lambda^2(1-\rho)/\rho.
 \label{m:eq-translate}
\end{equation}
Use explicit $S_j$ of bias $2^{-j/2}$ and size $O(n^2 2^j)$ for $1\le j\le n$
\cite{AGHP92}. Search $v+S_1,\ldots,v+S_n$ with one common shift, followed by
exhaustive fallback. Completeness makes every shift exact. Reaching stage $j\ge2$
requires missing the previous translate, with probability at most
$\min\{1,2^{-(j-1)}/\rho(x)\}$. Charge each reached stage its full size, including
its lazily performed finite-field setup. The expected charge is $O(n^3/\rho(x))$,
hence $O(n^4)$ over inputs and shifts by \cref{m:eq-reciprocal}. Fix a good shift;
two anchors, that shift, and flags cost $3n+O(1)$ bits. Setup is bounded by stage
size times a polynomial, so actual mean time is controlled. A timeout $Q(n)n^c$
gives abstention $n^{-c}$ without changing advice. Stages need not be independent.

\begin{theorem}[Errorless polynomial-time heuristics]\label{m:heuristic}
For every $L\in\Psel$ and fixed positive integer $c$, deterministic polynomial-time
errorless computation with uniform abstention at most $n^{-c}$ uses
$n+(2c+2)\log n+O(1)$ ordinary advice bits. The refined bound is
$n+(c+2)\log n+\log\log n+O_c(1)$; here advice may depend on $c$.
\end{theorem}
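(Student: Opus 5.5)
The plan is to build the heuristic from the density profile \cref{m:eq-density} with $\nu=U_n$: advise the two anchors and then one fixed random sample, and pay for the sample labels cheaply. A naive approach advises $p$, $q_-$ and a shift for a small-bias set, costing $3n$ bits. To reach $n+O(\log n)$ I would instead follow the overview ("one shift regenerates a polynomial-size sample, and one cut in a Hamiltonian path encodes all its labels"). The anchors are replaced by a fixed polynomial-size sample $T=v+S$, with $S$ a $\lambda$-biased multiset from \cite{AGHP92}. The labels of $T$ are advised compactly, and every intermediate vertex of a certificate is drawn from $T$ itself.

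\textbf{Step 1: certificates from a labeled sample.} Let $P$ be the members of length $n$. If every point of $T$ has a known label, then $x$ is certified positive by any $z\in T\cap P$ with $x\to z$ or $x=z$. It is certified negative by any $z\in T\setminus P$ with $z\to x$. Both checks are sound because every member beats every nonmember. The algorithm is: evaluate the selector on $x$ against all of $T$; answer if a certificate is found, otherwise output $\unk$. This is errorless and runs in polynomial time since $|T|=\poly(n)$. I still need that all but an $n^{-c}$ fraction of $x$ get a certificate. The anchor lemma supplies this. For $x\in\overline{P}$ with $q_-\not\to x$, \cref{m:weighted} applied in the reversed negative tournament shows that the set $C_-(x)$ of nonmembers $z$ with $z\to x$ has density $\rho$ satisfying $U_n\{x:\rho(x)\le t\}\le 2t$. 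The positive side is symmetric. Direct hits, where $x\to p$ and $p$ or $q_-$ is an unadvised point, are handled by including the anchors in $T$, or simply by the fact that $p\in P$ lies in the relevant certifying set for every $x\to p$. So I would define the certifying set of $x$ as $\{z\in P: x\to z\}$ or $\{z\notin P: z\to x\}$. By \cref{m:weighted} its density is at least $\rho(x)$, up to adding the anchor atom.

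\textbf{Step 2: one shift suffices.} By \cref{m:eq-translate}, for fixed $x$ the probability over $v$ that $v+S$ misses the certifying set is at most $\lambda^2/\rho(x)$. Averaging over $x$ with the tail bound gives
$$\E_x\min\{1,\lambda^2/\rho(x)\}=O(\lambda^2\log(1/\lambda)).$$
Choosing $\lambda^2\approx n^{-c}/\log n$ gives $|S|=O(n^{c+2}\log n)$, and some fixed $v$ achieves abstention at most $n^{-c}$. The shift costs $n$ bits.

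\textbf{Step 3: encode labels by one cut.} The selector tournament restricted to $T$ has a Hamiltonian path, which can be found in polynomial time by insertion sort using selector queries. Along this path the members form a prefix, because any edge from a nonmember to a member is impossible. Hence all labels of $T$ are determined by one cut index, costing $\log|T|=(c+2)\log n+\log\log n+O(1)$ bits. The path must be computed by a fixed deterministic procedure so that the verifier reconstructs the same path. Together with the $n$-bit shift this gives the refined bound. Using a cruder bias $\lambda=n^{-c}$ with set size $n^{2c+2}$ gives the stated $n+(2c+2)\log n$, so the advice depends on $c$ only through $\lambda$.

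\textbf{Main obstacle.} I expect the hard part to be the density claim in Step 1. Certifying sets must be defined without anchors, yet must still dominate the $\rho$ of \cref{m:eq-density}, including inputs near the top or bottom where $\rho$ comes from the atom bound. It must also be shown that the tail integration gives the $\log(1/\lambda)$ factor rather than something worse. I would first try to verify directly that $C_+(x)\subseteq\{z\in P:x\to z\}$ and $C_-(x)\subseteq\{z\notin P:z\to x\}$. This holds because $z\to p$ forces $z\in P$ and $q_-\to z$ forces $z\notin P$. Then \cref{m:eq-density} transfers verbatim.
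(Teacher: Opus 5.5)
Your architecture matches the paper's: one shifted small-bias multiset $v+S$, a canonical insertion-built Hamiltonian path whose single cut encodes all sample labels, one-step certificates of $x$ against the labeled sample, and an integrated tail for the refined $\log\log n$ term. The gap is the density estimate in Step~1.

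The inclusions $C_+(x)\subseteq\{z\in P:x\to z\}$ and $C_-(x)\subseteq\{z\notin P:z\to x\}$ are correct. However, \cref{m:eq-density} only controls inputs in $\mathcal H$, that is, inputs \emph{not} decided by the direct tests $x\to p$ or $q_-\to x$. Since the anchors are no longer advised, the directly decided inputs must also be certified by the sample. For them your patch only gives certifying density at least $\nu(p)=2^{-n}$, which is useless against abstention $n^{-c}$. These inputs are far from negligible. A minimum-outdegree anchor is beaten by at least $(|P|-1)/2$ positives. If the positive class is transitive, every positive other than $p$ beats $p$, so $\mathcal H\cap P=\varnothing$ and \cref{m:eq-density} says nothing about any positive input. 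The anchor transfer therefore does not give the tail bound you need.

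The missing ingredient is a tail bound for the one-step witness density $w(x)=\mu(W_x)$ on \emph{all} inputs. It is proved directly by pair counting with no anchor at all (the paper's \cref{lem:one-step}). For $A=\{x\in P:w(x)\le t\}$,
\[
 t\mu(A)\ge\sum_{x\in A}\mu(x)w(x)\ge\sum_{x,y\in A}\mu(x)\mu(y)\ind[f(x,y)=x]
 =\tfrac12\Bigl(\mu(A)^2+\sum_{x\in A}\mu(x)^2\Bigr).
\]
This uses commutativity, so each distinct pair is counted once, and $x\in W_x$ for the diagonal term. Hence $\mu(A)\le2t$, and the reversed argument on $N$ gives $\mu\{x:w(x)\le t\}\le4t$. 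This is the computation inside \cref{m:weighted}, but summed over the whole class instead of only over the out-neighbors of $p$. Include $z=x$ in the negative certifying set too, as you did on the positive side, so the diagonal term is available there.

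With this lemma replacing your anchor transfer, Steps~2 and~3 go through as written. A threshold split at bias $n^{-c}/8$ gives $n+(2c+2)\log n+O(1)$. Integrating the tail with $\lambda^2\approx n^{-c}/\log n$ gives the refined bound.
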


\paragraph*{A sample and all its labels from one shift and one cut.}
A one-step witness for positive $x$ is a positive $y$ with $f(x,y)=x$; for negative
$x$ it is a negative $y$ with $f(x,y)=y$. Let $W_x$ be this set and
$w(x)=\mu(W_x)$. Pair counting gives $\mu\{x:w(x)\le t\}\le4t$.
For $\mu_n=G_n(U_r)$, use a bias-$\varepsilon/8$ multiset of
$M=O(r^2/\varepsilon^2)$ seeds and sample $G_n(v+s)$. Inputs with
$w(x)\le\varepsilon/8$ have mass at most $\varepsilon/2$; others are missed with
probability at most $\varepsilon/8$ over $v$, by \cref{m:eq-translate} on seed
space. Some shift has average abstention below $\varepsilon$.

Remove duplicates and construct a directed Hamiltonian path by insertion.
Its labels are a positive prefix followed by a negative suffix, since an adjacent
$0,1$ transition contradicts selectivity. One cut index encodes all labels in
$\lceil\log(M+1)\rceil$ bits. Regenerate the sample, its path, and labels, then
return any sound one-step conclusion, or $\unk$. Shift and cut cost
$r+2\log r+2\log(1/\varepsilon)+O(1)$ bits. Take $r=n$ and
$\varepsilon=n^{-c}$. Construction is polynomial because $M$ is polynomial.
Integrating the tail instead of splitting at a threshold gives
$M=O(r^2(1+\log(1/\varepsilon))/\varepsilon)$ and the refined bound; see
\cref{prop:heuristic-refined}. Only the path's consecutive edges are used, not
transitivity of the selector. Soundness holds even outside the input
 distribution's support.

\section{Black-box lower bounds}
\label{m:lower}
\label{sec:lower}

\subsection{The model}\label{m:model}

A legal length-$n$ instance is a tournament $\tau$ on $X=\bits^n$ and a set
$P\subseteq X$ beating its complement. An interpreter receives $x$ and advice
$a$, queries whether arbitrary $u,v\in X$ satisfy $u\to v$, and has no other
access to $P$. Each input is a fresh run. Oracle-dependent preprocessing and
cached information count as advice; internal oracle-independent computation
may be arbitrarily large but finite. A $q$-query bound holds on every input,
permitted advice, tournament, and random tape. Randomized procedures have a
finite coin bound at each length, which may be padded. Self-comparisons have
a fixed convention. For a bit-valued $M$, call $a$ bounded-error correct if
\begin{equation}
 \Pr_R[M^\tau(x,a;R)=\ind[x\in P]]\ge2/3\quad\text{for every }x\in X.
 \label{m:eq-good}
\end{equation}
Only the designated advice must be correct; other strings need not have an
acceptance gap. For outputs $0,1,\unk$, sound advice never gives a wrong bit
on any input or tape. Write
\begin{equation}
 \operatorname{cov}_{\tau,P}(H,a)=\Pr_{x\sim U_n,R}[H^\tau(x,a;R)\ne\unk].
 \label{m:eq-coverage}
\end{equation}

An edge query and a query to the mixed-vector comparator induced by that
selector simulate each other with one call. Thus these lower bounds already
apply to comparator-based procedures on the P-selective subclass; the
selector-specific average-case upper bounds are not extended to all $\Mc$.

\subsection{Transcript freezing for errorless heuristics}\label{m:freeze}

\begin{theorem}[Transcript freezing]\label{m:freezing}
For every $q$-query heuristic, integers $q,\ell\ge0$, and $n\ge1$, there is a
legal instance with a transitive tournament such that every sound advice of
length at most $\ell$ has
\begin{equation}
 \operatorname{cov}_{\tau,P}(H,a)
 \le\min\left\{1,\frac{(2q+1)(2^{\ell+1}-1)}{2^n}\right\}.
 \label{m:eq-freezing}
\end{equation}
Indeed, one set of at least $2^n-(2q+1)(2^{\ell+1}-1)$ inputs forces every
sound advice string to abstain on every tape.
\end{theorem}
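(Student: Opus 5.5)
The plan is an adversary argument that maintains a partial legal instance. The adversary works on a transitive tournament and commits to it gradually. It keeps a linear order on $X$ split into three blocks: a committed positive top block, a committed negative bottom block, and an \emph{uncommitted middle block} $M$ in which the relative order is still free. Legality is automatic at the end: $P$ is the top block together with some top segment of the middle, and every member of $P$ beats every nonmember in a transitive order, so any final choice of cut is legal.

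The key invariant is that the labels of middle vertices are undetermined, in the following sense. For every $u\in M$ there remain completions in which $u\in P$ and completions in which $u\notin P$, and both are consistent with everything committed so far. A heuristic that answers a bit on an input $x\in M$ without having forced $x$ out of $M$ is then vulnerable: the adversary completes the instance with the opposite label, which makes that advice unsound.

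The steps run in order.

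\begin{enumerate}
\item Process advice strings $a$ of length at most $\ell$ one at a time; there are $2^{\ell+1}-1$ of them.
\item For the current $a$, look for an input $x\in M$ and a tape $R$ on which $H(x,a;R)$ outputs a bit. Simulate $H$ and answer its queries so as to preserve the invariant. A query between two middle vertices is answered by moving both endpoints out of the middle, placing them at the extreme ends of $M$ in a way consistent with the answer; the order is kept transitive by inserting them adjacent to the top or bottom block.
\item Fix the at most $q$ query endpoints and the input $x$ itself. That is at most $2q+1$ vertices per advice string. Move $x$ to the side that contradicts the output bit: if $H$ said $1$, push $x$ into the negative bottom block, and vice versa.
\item From then on $a$ is permanently unsound, since the transcript on $(x,R)$ is frozen and reproduces the wrong answer in every completion.
\item If no such $x$ and $R$ exist, then $a$ abstains on every tape for every input currently in $M$. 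Since $M$ only shrinks afterwards, this remains true at the end.
\end{enumerate}

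After all advice strings are processed, at most $(2q+1)(2^{\ell+1}-1)$ vertices have left $M$. Complete the middle arbitrarily, say as one transitive order with a cut. Every sound advice of length at most $\ell$ falls in the second case of step 5, so it abstains on all of the remaining set $M$ on every tape. That set has at least $2^n-(2q+1)(2^{\ell+1}-1)$ inputs, which gives the coverage bound \cref{m:eq-freezing}.

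\textbf{Main obstacle.} The delicate point is answering queries so that the frozen transcript stays valid in \emph{every} later completion, while the remaining middle stays fully flexible. Two conflicts must be ruled out: a later advice's forced answers must not contradict an earlier frozen transcript, and pushing $x$ to the opposite side must not conflict with earlier answers involving $x$.

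My first attempt at handling this is a uniform rule. The top block grows downward and the bottom block grows upward, so each middle vertex is committed exactly once, to the inner boundary of a block. A query with an endpoint already committed is answered by the committed order, since committed vertices compare to all middle vertices by block position. This makes every earlier answer between committed vertices, and between a committed vertex and the middle, invariant under all later commitments.

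The input $x$ is committed \emph{after} the simulation. So I must ensure the simulated transcript never compared $x$ against a middle vertex that is left in the middle. That holds because every query endpoint is committed. Any query $(x,y)$ with $y\in M$ is resolved by committing $x$ and $y$ in the order the side-choice for $x$ will later require. This requires choosing $x$'s side before simulating, which is possible: guess the output bit and use the side for which the simulation yields the opposite bit, re-simulating if necessary. Since the answers depend only on committed order, the run is deterministic given the side choice.
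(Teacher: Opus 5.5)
Your global architecture matches the paper's: a transitive order with a positive block, an uncommitted middle and a negative block, at most $2q+1$ frozen vertices per killed advice, and surviving advice abstaining on what is left. However, the way you produce and freeze an erroneous run has a genuine gap.

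\emph{The side of $x$ leaks.} A query between two middle vertices commits both endpoints to a side during the simulation. From then on, committed vertices are compared with the middle by block position, so later answers reveal the sides you chose. In particular, $x$'s side is fixed before its output is known, and guessing and re-simulating need not produce an error. Take the two-query heuristic that, on input $x$, queries $(x,y)$ and then $(x,z)$ for fixed distinct $y,z\ne x$, and outputs $1$ iff $x\to z$. At the first stage, the first query commits $x$ to your preselected side. The second query is then answered by block position, so the output equals $x$'s label for \emph{both} side choices and for every $x$. Your dichotomy between steps 3 and 5 is therefore not exhaustive: every $(x,R)$ gives a bit, yet none can be made wrong. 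Commitments made during simulations that end in $\unk$ are also unaccounted for. Keeping them breaks the $2q+1$ budget; discarding them means the final instance need not reproduce the simulated answers.

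\emph{Step 5 does not persist.} When you certify that $a$ abstains on all of $M$, the certificate refers to the answers available at that time. Later stages move further vertices from $M$ to the block boundaries, which changes their comparisons with the surviving middle vertices. A run of $a$ on a surviving input can then receive different answers and return a (possibly correct) bit. ``$M$ only shrinks'' does not prevent this.

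The paper avoids both problems. It always runs the heuristic on the \emph{current full} total order, with the uncommitted block in a fixed order. Only \emph{after} seeing the bit $b$ does it move $x$, together with all queried uncommitted endpoints, to the inner boundary of the side with label $1-b$, preserving their internal order. These vertices were already below all positives and above all negatives, so this block move preserves every answer of the run whichever side is chosen, and no guessing is needed. It then re-examines \emph{all} live advice after each update. It stops only when every live advice abstains on the whole uncommitted block under the current order, which becomes the final tournament once that block is declared negative.

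Alternatively, you could keep one-at-a-time processing with two changes. In step 5, require abstention in \emph{every} completion of the current commitments; this persists, since the set of completions only shrinks. In step 3, freeze a bit-producing run of some completion using the same after-the-fact block move.
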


\begin{proof}
Represent a transitive tournament by a total order, with earlier vertices
beating later ones. Maintain three consecutive blocks $F_+\ U\ F_-$.
Vertices in $F_+$ are permanently positive, those in $F_-$ permanently
negative, and those in $U$ uncommitted. The relative order of all permanently
fixed vertices is frozen. Initially $F_+=F_-=\varnothing$ and $U=X$ is in
lexicographic order. Keep all $A=2^{\ell+1}-1$ eligible advice strings live.

Examine the current full tournament. If every live advice returns $\unk$ on
every input in $U$, on every tape, stop. Otherwise choose live advice $a$,
$x\in U$, and a tape $R$ on which the run returns a bit $b$. Let $F$ consist
of $x$ and all queried endpoints belonging to $U$; then $|F|\le2q+1$.
If $b=0$, remove $F$ from $U$, declare it positive, and insert it immediately
after the old $F_+$. If $b=1$, declare it negative and insert it immediately
before the old $F_-$. Preserve the internal order of $F$, the order of the
remaining $U$, and the order of previously fixed vertices. Remove $a$ from
the live set.

Every endpoint of the selected run is now fixed permanently. Comparisons
within $F$ are unchanged, and comparisons with old fixed vertices are unchanged
because each new vertex remains below all old positives and above all old
negatives. Thus every answer in this run is preserved, including at all later
stages. Once $R$ is fixed the procedure is deterministic, so its final
transcript and output are still $b$, while $x$ has permanent label $1-b$.
This advice is forever unsound. A selected finite random tape has positive
probability, which suffices to violate errorlessness.

Each nonterminal stage eliminates one advice and fixes at most $2q+1$
vertices. There are at most $A$ stages. At termination declare the remaining
$U$ negative without changing the order. The final positive set is $F_+$,
a prefix, so the instance is legal. Eliminated advice remains unsound; every
live advice abstains on all of $U$ on every tape, and
$|U|\ge2^n-(2q+1)A$. This proves the claims. We reexamine live advice after
each update: only selected erroneous transcripts, not earlier abstentions,
are assumed to remain unchanged.
\end{proof}

\begin{corollary}[Tight leading linear term]\label{m:near-n}
If every legal instance admits sound advice of length at most $\ell$ giving
coverage $1-\varepsilon$, then
\begin{equation}
 \ell\ge\log\left(1+\frac{(1-\varepsilon)2^n}{2q+1}\right)-1.
 \label{m:eq-near-n}
\end{equation}
For polynomial $q$ and $\varepsilon\le1/2$, this is $n-O_q(\log n)$.
\end{corollary}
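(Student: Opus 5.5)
The plan is to invert the bound of \cref{m:freezing}, treating the corollary as its contrapositive. Fix any $q$-query heuristic $H$ and any $\ell$. Suppose, as hypothesized, that every legal instance admits sound advice of length at most $\ell$ with coverage at least $1-\varepsilon$. Apply \cref{m:freezing} with this $q$, $\ell$ and $n$. It produces one legal instance $(\tau,P)$ on which every sound advice of length at most $\ell$ has coverage at most $(2q+1)(2^{\ell+1}-1)/2^n$. The advice guaranteed by the hypothesis for this particular instance is among those strings. Hence
\[
 1-\varepsilon\le\frac{(2q+1)(2^{\ell+1}-1)}{2^n}.
\]
If the minimum in \eqref{m:eq-freezing} equals $1$, the bound we are about to derive still holds, because the same inequality is what the theorem asserts. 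The case $\varepsilon\ge1$ is vacuous: the right-hand side of \eqref{m:eq-near-n} is then at most $\log 1-1<0\le\ell$.

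Next I would rearrange. From the displayed inequality,
\[
 2^{\ell+1}\ge1+\frac{(1-\varepsilon)2^n}{2q+1}.
\]
Taking logarithms base two gives exactly \eqref{m:eq-near-n}.

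For the asymptotic claim, take $\varepsilon\le1/2$. Then the argument of the logarithm is at least $2^{n-1}/(2q+1)$. This yields $\ell\ge n-2-\log(2q+1)$. For $q$ a fixed polynomial in $n$, we have $\log(2q+1)=O_q(\log n)$, so $\ell\ge n-O_q(\log n)$.

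There is no real obstacle here. The only point requiring care is the order of quantifiers. The hypothesis must supply good advice for \emph{every} legal instance, so that it covers the single adversarial instance built by freezing against $H$. Advice is also counted as all strings of length at most $\ell$, which is the set the theorem's count $2^{\ell+1}-1$ refers to. Randomized heuristics with a finite coin bound are already handled inside \cref{m:freezing}, so nothing further is needed.
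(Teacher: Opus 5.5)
Your proof is correct and is essentially the paper's own argument: apply \cref{m:freezing} to the fixed $q$-query heuristic, compare the guaranteed coverage with $(2q+1)(2^{\ell+1}-1)/2^n$, and rearrange, with the same $n-2-\log(2q+1)$ estimate for $\varepsilon\le1/2$. Your separate remarks on the minimum and on $\varepsilon\ge1$ are unnecessary, since $\min\{1,X\}\le X$ and $\varepsilon=1$ already gives right-hand side $-1$. They do no harm.
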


The corollary follows by rearranging \cref{m:eq-freezing}. It matches
\cref{m:heuristic} up to $O(\log n)$. Truncating an exact decider at twice its
polynomial mean bound gives a sound polynomial-query heuristic of coverage
$1/2$, so the same linear lower order applies to exact mean time. Freezing
does not prove a bounded-error average-case lower bound: its preserved error
may occur on one input and one very unlikely tape.

\subsection{The hidden-core distribution and its deterministic bound}
\label{m:core}

Write $x=(\beta,w)$, with a bucket $\beta\in\bits^m$ and a tail
$w\in\bits^{n-m}$. Independently choose one uniform tail $w_\beta$ and label
$\lambda_\beta\in\bits$ per bucket, and one fair orientation coin
$c_{\beta\gamma}$ per unordered pair of buckets. The core is
$\Core=\{t_\beta=(\beta,w_\beta)\}$ and
$P=\{t_\beta:\lambda_\beta=1\}$. Outside the core, the lexicographically
smaller vertex wins. Every core vertex beats every noncore vertex. Within
the core, opposite labels make the positive vertex win; equal labels use
the pair's coin. This always defines a legal instance.

\begin{proposition}[Uniform deterministic distributional estimate]
\label{m:delta}
For every deterministic $q$-query procedure $D$, if $4q\le2^{n-m}$ and
$k\ge1$, its probability of deciding the entire instance is at most
\begin{equation}
 \Delta(n,m,k,q)=
 2^{m-k(n-m-\log(4q))}
 +(1-2^{-k})^{\lfloor2^m/k\rfloor}.
 \label{m:eq-delta}
\end{equation}
The distribution and bound are independent of the procedure, its description
length, and any fixed advice or tape incorporated in its code.
\end{proposition}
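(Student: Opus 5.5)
We will bound the success probability by splitting on a "discovery" event. Fix $D$ deterministic. We say a query discovers bucket $\beta$ when it involves the core point $t_\beta$. We analyse the process adaptively, revealing $w_\beta$, $\lambda_\beta$ and the coins only as they are needed. To decide the entire instance, $D$ (run on every input) must output the correct $\lambda_\beta$ on every input $t_\beta$. That run also queries a set of vertices.

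\textbf{Step 1: a query model for undiscovered buckets.} Before a bucket is discovered, every query touching a vertex $(\beta,w)$ with $w\ne w_\beta$ is answered by a rule depending only on whether the other endpoint is core and on lexicographic order. Such queries reveal only that $w\ne w_\beta$. Conditioned on $w_\beta$ avoiding the at most $2q$ guessed tails in bucket $\beta$, the tail $w_\beta$ remains uniform on the remaining $\ge 2^{n-m}-2q\ge 2^{n-m}/2$ values. Hence any single query hits it with probability at most $2/2^{n-m}$. A union bound over $q$ queries shows that one run discovers a specific fresh bucket with probability at most $2q\cdot 2^{-(n-m)}\le 2^{-(n-m-\log(4q))}$, up to the constant, which is where the hypothesis $4q\le 2^{n-m}$ enters.

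\textbf{Step 2: the first term, many discoveries.} Consider the event that some set of $k$ buckets is discovered. The natural bound is to sum over at most $2^m$ choices of the "first" bucket of the discovery chain, or equivalently over the configuration, and then multiply the per-discovery probability over $k$ successive discoveries. Step 1 is applied conditionally each time, because tails are independent across buckets. This gives $2^m\cdot 2^{-k(n-m-\log 4q)}$. The main obstacle is organising the union bound so that only a $2^m$ factor, rather than $\binom{2^m}{k}$, appears. I would charge discoveries along a single adaptive transcript, whose branching is determined by the one starting input, so the only free choice is which input $x$ (equivalently which bucket) starts it. I would also treat the inputs outside the core, whose answers are fixed, as contributing no randomness.

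\textbf{Step 3: the second term, few discoveries.} On the complementary event, partition the buckets into $\lfloor 2^m/k\rfloor$ groups of $k$. For deciding $t_\beta$ correctly, the run on input $t_\beta$ can learn $\lambda_\beta$ only through core-versus-core comparisons whose answers mix $\lambda_\beta$ with fresh labels and coins. Without further discoveries, the label of each group is then determined by fresh fair bits, so each group is decided correctly with probability at most $1-2^{-k}$. Independence across groups, obtained from disjoint label and coin variables, gives $(1-2^{-k})^{\lfloor 2^m/k\rfloor}$. To be careful, the few-discovery event should instead give, within each group, a fresh bit that a correct procedure must guess, with failure probability at least $2^{-k}$ per group independently. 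I expect the independence claim across groups to need a deferred-decision coupling that reveals coins only when queried.

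Finally, because the distribution was fixed before $D$ and no step uses $D$'s description, the bound is uniform over $D$ and over any advice or tape hard-coded into it.
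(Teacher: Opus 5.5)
Your Steps 1--2 are essentially the paper's discovery lemma. Each endpoint is a new discovery with conditional probability at most $2^{1-(n-m)}$, and $k$ discoveries within one run cost $(4q)^k2^{-k(n-m)}$. The factor $2^m$ is a union bound over input buckets. Step 3, however, has a genuine gap.

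\textbf{Fixed groups give no independence.} Runs on inputs of one group query core points of other groups. The cross-group coins $c_{\beta\gamma}$ belong to no group. Conditioning on the global ``few discoveries'' event would itself correlate everything.

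\textbf{No fresh bit is exhibited.} The true run on $t_\beta$ reads $\lambda_\beta$ and its incident coins in every comparison with a discovered core point. Even its set of discoveries depends on them. So once you have simulated it, nothing about $t_\beta$ is fresh, and nothing justifies a per-group failure probability of $2^{-k}$.

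\textbf{The missing device is the ambiguous run $\AR_\beta$.} Answer each comparison of $t_\beta$ with $t_\gamma$, $\gamma\ne\beta$, as a loss for $t_\beta$ if $\lambda_\gamma=1$ and a win if $\lambda_\gamma=0$; answer all other queries truthfully. This run reads neither $\lambda_\beta$ nor any coin incident with $\beta$. It coincides with the true run whenever every coin $c_{\beta\gamma}$ with $\gamma\in D_\beta$ and $\lambda_\gamma=\lambda_\beta$ points the ambiguous way. Hence $\lambda_\beta\ne o_\beta$, together with those coins, forces an error with conditional probability at least $\tfrac12 2^{-|D_\beta|}\ge2^{-k}$ when $|E_\beta|<k$.

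\textbf{The groups must be built adaptively.} Fix all tails. Repeatedly take an unused bucket $\beta$ and simulate $\AR_\beta$, exposing only labels of discovered buckets and coins among them. Stop unsuccessfully if $|E_\beta|\ge k$. Otherwise expose $\lambda_\beta$ and the coins $c_{\beta\gamma}$, $\gamma\in D_\beta$, and mark $\{\beta\}\cup E_\beta$ as used; this is at most $k$ buckets. Maintain the invariant that every exposed label and both endpoints of every exposed coin lie in the used set. This keeps $\lambda_\beta$ and its incident coins fresh in each round. There are then $\lfloor2^m/k\rfloor$ rounds, each survived with conditional probability at most $1-2^{-k}$.

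This bounds the joint event that $D$ is correct on $\Core$ and $\max_\beta|E_\beta|<k$; it is not a probability conditioned on few discoveries. Accordingly, your first term must be stated for the discovery sets $E_\beta$ of the ambiguous runs, not the true runs. That is legitimate: with labels and coins fixed, $\AR_\beta$ answers from a fixed tournament oracle, so your Step 1 argument applies verbatim.
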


\paragraph*{Proof sketch.}
On input $t_\beta$, define an ambiguous run by answering an edge to another
core vertex $t_\gamma$ as a loss when $\lambda_\gamma=1$ and a win when
$\lambda_\gamma=0$. All other answers are unchanged. Let $E_\beta$ be the
other buckets whose core point occurs as a query endpoint, $D_\beta\subseteq
E_\beta$ those actually compared with $t_\beta$, and $o_\beta$ the output.
The run reads neither $\lambda_\beta$ nor any incident coin. Conditional on
all other variables, it agrees with the true run and errs with probability
at least $\tfrac12 2^{-|D_\beta|}$: make $\lambda_\beta\ne o_\beta$ and
require the equal-label incident coins to point in the ambiguous direction.

There are two estimates. Exposing actual discoveries, even if the procedure
cannot recognize them, leaves an undiscovered tail uniform over at least
$2^{n-m}-2q\ge2^{n-m-1}$ possibilities. Each of at most $2q$ endpoints thus
has discovery probability at most $2^{1-(n-m)}$. A union bound gives
\[
 \Pr[\max_\beta|E_\beta|\ge k]
 \le2^{m-k(n-m-\log(4q))}.
\]
For fixed tails, expose ambiguous runs in rounds, always starting at an
unused bucket. Keep the invariant that all exposed labels and both endpoints
of all exposed coins are used. A run discovering fewer than $k$ other buckets
uses at most $k$ buckets after its input is included. The input's label and
incident coins remain fresh until that run finishes, and force an error with
conditional probability at least $2^{-k}$. At least $\lfloor2^m/k\rfloor$
rounds are possible unless many discoveries or an error has already excluded
the target event. Hence
\[
 \Pr[D\text{ correct on }\Core\text{ and }\max_\beta|E_\beta|<k]
 \le(1-2^{-k})^{\lfloor2^m/k\rfloor}.
\]
Add the estimates. The full exposure argument, including its conditional
independence invariant, is in \cref{lem:discovery,lem:fresh}.

For a fixed advice budget, a union bound over fewer than $2^{\ell+1}$ strings
turns \cref{m:eq-delta} into a finite lower bound. Balancing
$m\approx(n-\log q)/2$ and $k=m-O(\log n)$ yields coefficient $1/4$.
Importantly, \cref{m:delta} holds before this union bound and uniformly over
all fixed procedures. This is what permits the following randomized transfer.

\subsection{Randomized lifting with no dependence on coin length}
\label{m:lifting}

\begin{lemma}[Yao's principle plus amplification \cite{Yao77}]
\label{m:lift}
Put $t_n=12(n+2)+1$. Suppose a distribution $\mathcal D$ on legal length-$n$
instances satisfies
\begin{equation}
 \Pr_{I\sim\mathcal D}[D\text{ decides }I]\le\delta
 \label{m:eq-lift-hyp}
\end{equation}
for every deterministic procedure with at most $t_nq$ queries, with no bound
on its description length or internal computation. For every randomized
$q$-query procedure $M$ with any finite coin bound, and each fixed ordinary
advice $a$,
\begin{equation}
 \Pr_{I\sim\mathcal D}[a\text{ is bounded-error correct for }M]
 \le\frac43\delta.
 \label{m:eq-lift-conc}
\end{equation}
\end{lemma}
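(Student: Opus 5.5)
We amplify $M$ into a deterministic-on-average procedure and then apply Yao's averaging argument in the direction that does not require counting random tapes. Fix $a$ and let $M_t$ denote the procedure that runs $M(x,a;\cdot)$ independently $t=t_n$ times on fresh coin blocks and outputs the majority bit. It makes at most $t_nq$ queries and uses a finite coin bound, namely $t$ times that of $M$. Define the event
\[
 G=\{I:\ a\text{ is bounded-error correct for }M\text{ on }I\}.
\]
For $I\in G$, each $x$ has per-run success probability at least $2/3$. Hoeffding then gives $\Pr_R[M_t(x)\ne\ind[x\in P]]\le e^{-2t(1/6)^2}=e^{-t/18}$.

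Next we need a union bound over all $2^n$ inputs to get a single tape that is correct on the whole instance. This would need $e^{-t/18}<2^{-n}$, i.e.\ $t\gtrsim 12.5n$, which is essentially where $t_n=12(n+2)+1$ comes from. We take $t$ odd so that majority is well defined. We will verify that $2^n e^{-t/18}\le 1/4$, possibly using the sharper Chernoff form $(2\sqrt{p(1-p)})^{t}$ with $p=1/3$, which gives $(\sqrt{8}/3)^t$ and $\log_2(3/\sqrt8)\approx0.085$. If the constant $12$ is too small for the plain Hoeffding estimate, we switch to this form or to an exact binomial tail: we need $2^{n}\cdot(\text{tail})\le 1/4$, and $12(n+2)\cdot 0.085\approx n+2$ just suffices. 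Consequently, for $I\in G$,
\[
 \Pr_R\bigl[M_t^\tau(\cdot,a;R)\text{ decides all of }I\bigr]\ge 3/4.
\]

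Now average over $I\sim\mathcal D$ and $R$ jointly:
\[
 \tfrac34\Pr_{\mathcal D}[G]\le\Pr_{I,R}[M_t(\cdot,a;R)\text{ decides }I]=\E_R\Pr_I[\cdots].
\]
For each fixed tape $R$ (finitely many, by the finite coin bound), $M_t(\cdot,a;R)$ is a deterministic procedure with at most $t_nq$ queries. By hypothesis~\cref{m:eq-lift-hyp} its success probability is at most $\delta$; no description-length restriction is needed because the hypothesis holds uniformly. Hence $\Pr[G]\le\frac43\delta$, independently of the coin bound.

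The main obstacle is the constant check: $t_n=12(n+2)+1$ repetitions must push the per-input failure below $2^{-n-2}$ so that the union bound leaves probability $\ge3/4$. We first try the Chernoff--Bhattacharyya bound $\Pr[\text{majority wrong}]\le(2\sqrt{2}/3)^{t}$ and confirm numerically that $t\log_2(3/(2\sqrt2))\ge n+2$ for $t=12(n+2)+1$. We also need to confirm that the notion ``decides $I$'' in the hypothesis means correct on every $x\in X$, matching how we use it.
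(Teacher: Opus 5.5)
Your proposal is correct and matches the paper's proof. The paper likewise uses majority amplification over $t_n$ independent runs, a union bound over the $2^n$ inputs to get one shared tape that is correct on the whole slice with probability at least $3/4$, and an interchange of $\E_I$ and $\E_R$ that applies the hypothesis to each fixed-tape deterministic procedure $D_{a,R}$. Your fallback bound $(2\sqrt{2}/3)^{t}=(8/9)^{t/2}$ is exactly the paper's estimate $2^{-t_n/2}\E 2^Z\le 2^{-t_n/2}(4/3)^{t_n}$. You are also right that plain Hoeffding, which needs $t\ge 18\ln 2\,(n+2)\approx12.48(n+2)$, would not suffice with the constant $12$.
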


\begin{proof}
Run $M$ independently $t_n$ times on the same $x,a$ and take the majority.
Let $R$ concatenate the tapes and let $A(x,a;R)$ be the amplified output.
Fix an instance on which $a$ satisfies \cref{m:eq-good}. If $Z$ counts
incorrect repetitions for a fixed $x$, independence gives
$\E 2^Z\le(4/3)^{t_n}$. Thus
\begin{equation}
 \Pr_R[A(x,a;R)\ne\ind[x\in P]]
 \le2^{-t_n/2}(4/3)^{t_n}
 =(8/9)^{t_n/2}<2^{-n-2},
 \label{m:eq-amplify}
\end{equation}
since $(8/9)^6<1/2$ and $t_n/2>6(n+2)$. A union bound over the $2^n$
inputs shows that with probability at least $3/4$ over one shared $R$,
$A(x,a;R)$ is correct for every input. Independence between different inputs
is unnecessary.

For every fixed $R$, $D_{a,R}(x)=A(x,a;R)$ is deterministic and uses at most
$t_nq$ queries. Its description may contain $a$ and all of $R$, but neither
depends on the random instance, so \cref{m:eq-lift-hyp} applies uniformly.
Let $G_a$ be the event that $a$ is bounded-error correct, and let $F(I,R)$
indicate simultaneous correctness of $D_{a,R}$ on $I$. Interchanging the two
independent averages gives
\begin{equation}
 \frac34\Pr_I[G_a]
 \le\E_I\E_R F(I,R)
 =\E_R\Pr_I[D_{a,R}\text{ decides }I]
 \le\delta.
 \label{m:eq-lift-average}
\end{equation}
This proves the claim. We neither select an instance-dependent tape and
treat it as free advice, nor take a union bound over tape space. This is the
distributional direction of Yao's principle after amplification; the uniform
deterministic estimate makes its only cost an $O(n)$ factor in queries.
\end{proof}

\begin{theorem}[Quadratic advice, independently of the coins]
\label{m:quadratic}
There is an absolute constant $C_{\mathrm R}$ such that, whenever
$1\le q\le2^{n-4}$ and
\begin{equation}
 B_{\mathrm R}(n,q)=
 \left\lfloor\frac{(n-\log q)^2}{4}-C_{\mathrm R}n\log(n+2)\right\rfloor
 \ge0,
 \label{m:eq-BR}
\end{equation}
every randomized $q$-query decider with any finite coin bound has a legal
instance on which no ordinary advice of length at most $B_{\mathrm R}(n,q)$
gives error at most $1/3$ on every input. For polynomial $q$ the lower bound
is $n^2/4-O_q(n\log n)$; for $q\le2^{\varepsilon n}$ with fixed
$0<\varepsilon<1$, it is $(1-\varepsilon)^2n^2/4-O(n\log n)$.
\end{theorem}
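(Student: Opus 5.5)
We combine \cref{m:delta} with \cref{m:lift} and finish with a union bound over advice strings. Fix a randomized $q$-query decider $M$ with a finite coin bound. Put $q'=t_nq$ with $t_n=12(n+2)+1$, so that $\log q'=\log q+O(\log n)$. Let $\mathcal D$ be the hidden-core distribution with parameters $m,k$ to be chosen. Every deterministic procedure with at most $q'$ queries decides a sample of $\mathcal D$ with probability at most $\delta=\Delta(n,m,k,q')$. This holds provided $4q'\le2^{n-m}$, and it does not depend on the procedure's description length. By \cref{m:lift}, for each fixed advice string $a$ we then have $\Pr_I[a\text{ is bounded-error correct for }M]\le\tfrac43\delta$.

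There are fewer than $2^{B+1}$ strings of length at most $B=B_{\mathrm R}(n,q)$. A union bound shows that the probability that some such string is bounded-error correct is below $2^{B+1}\cdot\tfrac43\delta$. It therefore suffices to choose $m,k$ so that $2^{B+3}\delta<1$; some legal instance in the support then defeats every short advice. Whether $M$ reads $a$ does not matter, and the coin bound never enters: it was absorbed in \cref{m:lift}.

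The main step is the parameter choice. Put $N=n-\log q'$ and take $m=\lfloor N/2\rfloor$, so that $n-m-\log(4q')\ge N/2-O(1)$. The constraint $4q'\le 2^{n-m}$ follows from $q\le2^{n-4}$ once $C_{\mathrm R}$ is large. In the degenerate range where $N$ is small, $B<0$ anyway, so nothing needs proving there. Take $k=m-\lceil 2\log(n+2)\rceil-c$ for a suitable constant $c$, and assume $k\ge1$; otherwise again $B<0$.

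We bound the two terms of $\Delta$ separately.
\begin{itemize}
\item The first term is $2^{m-k(N/2-O(1))}$. With $k\approx N/2-O(\log n)$ its exponent is $-N^2/4+O(n\log n)$.
\item For the second term, $2^{-k}\ge (n+2)^2 2^{c}\,2^{-m}$ and $\lfloor 2^m/k\rfloor\ge 2^m/(2n)$. Hence $(1-2^{-k})^{\lfloor2^m/k\rfloor}\le\exp(-2^{c-1}(n+2)^2/n)$, which is at most $2^{-\Omega(n)}$. That alone is far too weak against a quadratic $B$.
\end{itemize}
So $k$ must be tuned more carefully. The second term becomes $\exp(-2^{m-k}/k)$, and we need this to be at most $2^{-N^2/4}$, which requires $2^{m-k}\gtrsim kN^2$. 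It is enough that $m-k=3\log(n+2)+O(1)$. The first term is then still $2^{m-k(N/2-O(1))}=2^{-N^2/4+O(n\log n)}$, since $k(N/2)\ge(N/2-O(\log n))(N/2)$.

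Both terms are therefore $2^{-N^2/4+O(n\log n)}$. Replacing $N$ by $n-\log q$ costs $O(n\log n)$ in the exponent. Choosing $C_{\mathrm R}$ large enough absorbs every such loss, together with the $+3$ from the union bound and the floors, and gives $2^{B+3}\delta<1$.

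The expected obstacle is exactly this balancing: the second term must be quadratically small. That forces $k$ within $O(\log n)$ of $m$, and we then have to check that the first term keeps its $N^2/4$ decay. The stated specializations then follow by substitution: polynomial $q$ gives $N=n-O(\log n)$, and $q\le2^{\varepsilon n}$ gives $N\ge(1-\varepsilon)n$.
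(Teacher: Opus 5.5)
Your proposal is correct and follows the paper's proof. It applies \cref{m:delta} at $Q=t_nq$, lifts via \cref{m:lift}, takes a union bound over fewer than $2^{\ell+1}$ advice strings, and uses the same parameters $m=\lfloor(n-\log Q)/2\rfloor$ and $k=m-3\log n-O(1)$; the paper takes $k=m-\lceil3\log n\rceil-2$. One small precision: the condition $4t_nq\le2^{n-m}$ does not follow from $q\le2^{n-4}$ alone, because of the factor $t_n$; it follows from $n-\log Q\ge4$ in the nondegenerate range, which is how the paper handles it and what your degenerate-range remark already covers.
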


\paragraph*{Parameter accounting.}
Use \cref{m:delta,m:lift} with $Q=t_nq$. If
\[
 \ell+4\le k(n-m-\log(4Q))-m,\qquad
 \ell+4\le\lfloor2^m/k\rfloor2^{-k}\log e,
\]
and $4Q\le2^{n-m}$, then $\Delta\le2^{-\ell-3}$. The probability that any
eligible advice is bounded-error correct is less than
$2^{\ell+1}(4/3)2^{-\ell-3}=1/3$. Take
$m=\lfloor(n-\log Q)/2\rfloor$ and $k=m-\lceil3\log n\rceil-2$.
The finite inequalities yield
$\ell\ge(n-\log Q)^2/4-O(n\log n)$. Since $\log Q=\log q+O(\log n)$,
the leading coefficient is unchanged. \Cref{thm:randomized-finite,cor:randomized-quadratic}
verify the excluded and nonvacuous parameter ranges. Amplification also
preserves this coefficient for any specified inverse-polynomial pointwise
advantage over $1/2$.

\subsection{The leading constant: queries versus polynomial time}
\label{m:constant}

Ko's deterministic upper bound stores at most $n$ positive anchors and a
presence mask, for $n^2+n$ bits and $n$ comparisons \cite{Ko83}. Repeatedly
choosing a minimum-outdegree vertex in the residual positive tournament and
removing it with all vertices that beat it halves the residual size.
Decoding this advice is polynomial-time.

\begin{theorem}[Half-quadratic advice in the query model]\label{m:half}
With unrestricted finite oracle-independent computation, a single
deterministic decoder decides every legal instance using $n+2$ comparisons
and $n^2/2+O(n\log n)$ ordinary advice bits.
\end{theorem}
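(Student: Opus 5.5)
The plan is to compress Ko's anchor advice. In Ko's scheme the decoder stores a chain $p_1,\dots,p_r$ ($r\le n$) of positive anchors and accepts $x$ iff $x=p_i$ or $x\to p_i$ for some $i$. That costs $n$ comparisons plus at most two sanity queries, which accounts for the budget of $n+2$. The target is to specify $p_{i+1}$ with about $n-i+O(\log n)$ bits instead of $n$. Summing $n-i$ over $i\le n$ gives $n^2/2+O(n\log n)$.

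\textbf{Step 1 (robust halving).} I would first relax Ko's greedy step. Let $R_0=P$, and let $R_{i+1}$ be the vertices of $R_i\setminus\{p_{i+1}\}$ that neither equal nor beat $p_{i+1}$. In a tournament on $R_i$, at least a third of the vertices have outdegree at most $\tfrac34|R_i|$. Choosing any such vertex as $p_{i+1}$ gives $|R_{i+1}|\le\tfrac34|R_i|$. This costs only a constant factor in the number of levels, which I would absorb by running $O(n)$ levels and recounting. Alternatively, I would keep exact halving with a minimum-outdegree anchor and accept a smaller good set. Either way, at level $i$ the good set has size at least $c|R_i|$, with $|R_i|\lesssim 2^{n-i}$. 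Soundness is unchanged: a negative $x$ never beats a positive anchor, and every positive $x$ either lies in some $R_j$ as an anchor or beats one.

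\textbf{Step 2 (universal candidate lists).} Free oracle-independent computation lets the decoder regenerate, for each level $i$ and each advised prefix, a fixed list $T_i$ of candidate strings of size about $2^{i}\poly(n)$. This list is chosen once for all instances, by the probabilistic method over a fixed seed hard-coded into the decoder. The advice then gives, per level, an index into $T_i$ of $i+O(\log n)$ bits plus a stop flag. The intended effect is to pay $i$ bits at the levels where $|R_i|$ is small and about $n-i$ bits where it is large. Concretely, the encoding of $p_{i+1}$ would be $\min\{n,\ \log(2^n/|R_i|)+O(\log n)\}$ bits, and the sum over the geometrically shrinking residuals must be bounded by $n^2/2+O(n\log n)$. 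The plan is to use a full $n$-bit name when $|R_i|$ is large relative to $2^{n/2}$, and a list index otherwise, and then check the sum of these costs directly.

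\textbf{Main obstacle.} The difficulty is the universality in Step 2. A single fixed list must hit a good anchor for every residual set $R_i$ that can arise, and there are doubly exponentially many such sets, so a naive union bound over instances fails. My first attempt would be to restrict the union bound to what the decoder actually depends on. Since the decoder's behaviour is determined by the advised anchor sequence, I would organize the lists as a tree indexed by prefixes $(p_1,\dots,p_i)$. I would then argue by counting that the number of distinct decoded functions needed is at most $2^{n^2/2+O(n\log n)}$. To do this I would bound the number of pairs (anchor chain, induced membership vector on the chain's coverage) instead of instances. If that still fails, the fallback is the pure counting version: list all decoder functions realizable on some legal instance and bound their number by $\prod_i |R_i|\cdot\poly(n)$, using that $p_{i+1}\in R_i$. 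The decoder enumerates them by unbounded computation and uses $n+2$ queries to evaluate the selected one. This counting bound is the step I expect to be hardest to make rigorous.
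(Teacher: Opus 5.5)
There is a genuine gap, and it is exactly your Step 2. Neither remedy you offer closes it.

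Per-level universality is impossible. Legal instances are invariant under relabelings of $X$, and the admissible next anchors form a subset of $R_i$. So a permutation fixing the prefix $(p_1,\ldots,p_i)$ moves that set off any list that misses at least $|R_i|$ strings. A per-node universal list therefore has $N-O(|R_i|)$ entries, $N=2^n$, and costs about $n$ bits at every level, which is Ko's bound.

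Your counting fallback fails for the same reason. It is true that $p_{i+1}\in R_i$, but $R_i$ varies with the instance. By relabeling, once some admissible chain of length $k$ exists, every $k$-tuple of distinct strings is an admissible chain of some instance. Counting chains thus gives about $2^{nk}$ with $k$ up to about $n$, i.e.\ $2^{n^2}$ again.

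Conversely, the union bound you call the obstacle is harmless once whole tuples are chosen jointly. The paper shows that for every legal instance, a nonadaptive random tuple is valid with probability at least $p_n=2^{-H_n}/(n+3)$, where $H_n=\binom{n+2}{2}+(n+2)\lceil\log(n+1)\rceil$. The tuple is drawn by taking $K$ uniform in $\{0,\ldots,n+2\}$ and then $K$ independent uniform vertices. Drawing $M$ independent tuples with $Mp_n\ge4N^2$, a fixed instance is missed with probability at most $e^{-4N^2}$. This beats a union bound over all at most $2^{N^2}$ legal instances. The doubly exponential count costs only $\log(4N^2)=2n+2$ index bits, paid once rather than per level. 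The decoder recomputes the lexicographically first covering family by oracle-free exhaustive search, and the advice indexes a valid tuple in it.

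The bound on $p_n$ is a change of measure against an analysis-only adaptive process. On a residual of size $s_j$, that process picks uniformly among the $h_j=\lceil s_j/(n+1)\rceil$ smallest-outdegree vertices. A terminal path has probability $\prod_jh_j^{-1}$ under that process and $N^{-k}$ under uniform draws. The ratio is therefore $\prod_jh_j/N$, with $\log(N/h_j)\le n+\log(n+1)-(k-1-j)\log(1/\alpha)$. The saving comes from the \emph{lower} bound $s_j\ge\alpha^{-(k-1-j)}$, read backward from $s_{k-1}\ge1$. Your forward bound $|R_i|\lesssim2^{n-i}$ points the wrong way for bounding $\log(2^n/|R_i|)$. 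Paths that shrink faster are handled because they are shorter and the exponent is increasing in $k$.

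Your Step 1 parameters also fail for this statement. The $3/4$ rule only guarantees chains of about $n/\log(4/3)\approx2.4n$ anchors, which breaks the $n+2$-comparison bound. The same accounting then gives about $n^2/(2\log(4/3))\approx1.2n^2$ bits, worse than Ko's $n^2+n$. Exact minimum-outdegree halving can leave a single admissible anchor, and your ``smaller good set'' is never quantified.

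The needed calibration is the quantile bound $d_h\le(s+h-2)/2$. It shows that the $\lceil s/(n+1)\rceil$ smallest outdegrees are all at most $\alpha s-1/2$, with $\alpha=(n+2)/(2(n+1))$. Then $2^n\alpha^{n+2}<1$ caps every chain at $n+2$ anchors; this, not sanity queries, is where the $n+2$ comparisons come from. The $1/(n+1)$ fraction of good choices costs only $\log(n+1)$ bits per level.
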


The proof counts many near-halving greedy choices. A uniform tuple of at most
$n+2$ vertices is a valid positive anchor tuple with probability at least
$2^{-n^2/2-O(n\log n)}$. A union bound over all tournaments produces a
universal family of $2^{n^2/2+O(n\log n)}$ tuples. Advice indexes a valid tuple
in the lexicographically first such family. Exhaustive reconstruction is
finite and does not query the actual oracle, but is \emph{not} shown to be
polynomial-time. The full construction is in \cref{sec:constant}. This improves
only the query-model upper coefficient, not Ko's polynomial-time coefficient.

The single-core family also has a typical-instance $n^2/4+O_d(n)$ upper
bound at $m=n/2$: a mask for the first $4m$ buckets identifies the first
$m+d$ positive tails to advise. It succeeds on a fraction at least
$1-2^{-d}-e^{-\Omega(m)}$ for fixed $d$ (\cref{prop:typical}). This does not
bound rare instances. Naively nesting cores does not add advice costs: an
inner positive anchor certifies outer positives, and an inner negative
anchor certifies outer negatives. A stronger multiscale lower bound must
prevent this reuse.

\section{One oracle realizing both tight orders}
\label{m:oracles}

For $n\ge2$, define
\begin{equation}
 g(n)=\max\{0,\lfloor n^2/4-n(\log n)^2\rfloor\},\qquad
 h(n)=\max\{0,\lfloor n-(\log n)^2\rfloor\};
 \label{m:eq-budgets}
\end{equation}
set both to zero at $n=0,1$. We use the more permissive designated-advice
convention: $L\in\BPP^W/g$ means that a polynomially clocked probabilistic
oracle machine and advice $|a_n|\le g(n)$ satisfy
$\Pr_R[M^W(x,a_n;R)=\chi_L(x)]\ge2/3$ for every $x$. The clock applies to
all permitted advice and tapes, but no acceptance gap is required on
nondesignated advice. A lower bound here also applies to the stricter
all-advice bounded-gap convention.

\begin{theorem}[A joint oracle separation]\label{m:joint}
There exist $W$ and $L\in\PselW$ such that:
\begin{enumerate}[(i)]
\item $L\notin\BPP^W/g(n)$, allowing any polynomial number of coins;
\item every polynomial-time randomized $W$-oracle errorless heuristic has
a length at which all sound advice strings of length at most $h(n)$ have
uniform coverage less than $1/2$;
\item the same language has all the upper bounds of
\cref{m:det,m:random,m:heuristic} relative to $W$, including the randomized
bound for every exactly polynomial-time $W$-samplable distribution.
\end{enumerate}
Consequently exact polynomial uniform mean time has advice order $\Theta(n)$
on this language, while worst-case bounded-error polynomial time has advice
order $\Theta(n^2)$.
\end{theorem}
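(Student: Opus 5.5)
The plan is a diagonalization construction of $W$ by stages at sparse, widely separated lengths $n_1<n_2<\cdots$ (say $n_{i+1}>2^{2^{n_i}}$). At each stage we handle either one randomized polynomially clocked machine for requirement (i) or one errorless heuristic for requirement (ii), alternating so that every machine is met at some length. The language is $L=P_{n_i}$ at active lengths and empty elsewhere. The oracle $W$ encodes, at each active length $n$, the tournament $\tau$ and set $P$ of a legal instance, in the form of padded records $\langle u,v,1^{N(n)}\rangle\in W$ iff $u\to v$. The selector for $L$ relative to $W$ queries these records. The padding $N(n)$ is chosen so that a query by the selector on length-$n$ inputs is polynomial, while machines diagonalized at earlier stages cannot reach length-$n$ records within their clock, and later stages lie beyond their clock. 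Thus each diagonalized machine, on inputs of length $n_i$, sees an oracle that is fixed outside the length-$n_i$ records, and its accesses to those records are exactly tournament queries. Its query count $q\le p(n_i)$ then bounds the black-box query count.

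For a stage of type (i), we fix all previously determined oracle parts and view the machine $M$ with clock $p$ as a randomized $q$-query tournament procedure with $q=p(n)$ and finite coin bound. Queries to $W$ strings other than length-$n$ records are answered by the fixed finite oracle, which is oracle-independent computation allowed in the model. We apply \cref{m:quadratic}: since $B_{\mathrm R}(n,p(n))\ge n^2/4-O(n\log n)\ge g(n)$ for large $n$ (because $n(\log n)^2$ dominates $C_{\mathrm R}n\log n+O(n\log n)$), there is a legal instance on which no advice of length at most $g(n)$ is bounded-error correct. We encode it. For a stage of type (ii), we apply \cref{m:freezing} with $\ell=h(n)$. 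The coverage bound $(2q+1)2^{h(n)+1}/2^n\le (2p(n)+1)2^{1-(\log n)^2}<1/2$ for large $n$, so we encode that instance. Lengths are taken large enough for both inequalities. Later stages never alter records that the earlier machine reads on its stage length, so both violations persist.

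For (iii), we observe that \cref{m:det,m:random,m:heuristic} only use selector evaluations and relativize. The selector is polynomial-time relative to $W$, because each comparison at an active length is one oracle query, and inactive lengths are trivial. Hence all upper bounds hold relative to $W$, including those for $W$-samplable distributions. The consequence follows by combining this with (i) and Ko's relativized $n^2+n$ bound, and with (ii) plus the truncation remark after \cref{m:near-n}.

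The main obstacle is the bookkeeping in the padding/separation step. Machines at stage $i$ must not be able to query records of length $n_j$ for $j>i$, while the selector must access its own records in polynomial time. Moreover, a randomized machine's coins and its arbitrary non-record queries must be absorbable into the black-box model without affecting the finite coin bound or query count. I would handle this by choosing $N(n)=n$ with $n_{i+1}$ exceeding the clock $p_i(n_i)$ of every machine handled so far, and by treating all smaller-length oracle content as hard-wired into the deterministic procedure, as \cref{m:delta} explicitly permits.
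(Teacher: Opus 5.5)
Most of your plan matches the paper: the sparse active lengths, one requirement per stage, hard-wiring every oracle answer except the current edge records, applying \cref{m:quadratic,m:freezing} with your parameter checks, and relativizing the upper bounds. The gap is in establishing $L\in\PselW$. A selector must be correct on every pair, including $x,y$ of two different active lengths $r<s$. There it must return $x$ when $x\in L$ and $y\notin L$, and $y$ in the reverse case.

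Your selector is justified only in two cases: pairs of one active length, via an edge record, and pairs involving an inactive length. For mixed active pairs your oracle supplies nothing. You say $W$ encodes $P$, but the only records you specify are edge records, and these never determine $P$. Both $(\tau,\varnothing)$ and $(\tau,X)$ are legal instances for every tournament $\tau$. So even reconstructing all of $\tau_r$, which your spacing makes affordable in time polynomial in $s$, says nothing about $\chi_L(x)$. Computing $\chi_L(y)$ for the longer string is exactly what (i) forbids. The paper flags this point explicitly: same-length hard tournaments alone would not suffice.

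The missing ingredient is a second record type. For each $x$ of active length $n_i$, add a membership record carrying $\chi_L(x)$, padded to a length $\ell_i$ with $p_i(n_i)<\ell_i\le\poly(n_{i+1})$. The paper uses $01x0^{2^{n_i}-n_i}$, of length $2^{n_i}+2$, with $n_{i+1}=2^{n_i}$ and clock $n_i^i\le2^{n_i-4}$.

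The two constraints on $\ell_i$ come from opposite sides:
\begin{itemize}
\item The lower bounds force $\ell_i$ above the stage-$i$ clock. Otherwise the diagonalized machine could simply read $\chi_L(x)$, and both (i) and (ii) would collapse. So your isolation argument must explicitly cover these records.
\item The selector forces $\ell_i$ to be polynomial in every later active length. On $|x|=r<s=|y|$ it queries the shorter string's record and returns $x$ if that record is positive, and $y$ otherwise.
\end{itemize}

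This is what the exponential spacing of active lengths is really for; for edge records alone, $n_{i+1}>p_i(n_i)$ would suffice. Your spacing $n_{i+1}>2^{2^{n_i}}$ leaves room for such records, but the proposal never uses it, so $L\in\PselW$ is not established. Once the membership records are added, the rest of your argument goes through as in the paper.
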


\paragraph*{Construction sketch.}
Interleave worst-case and heuristic requirements, repeating each program at
arbitrarily large clock exponents. At stage $i$, use clock $n^i$ and an active
length $n_i$, with $n_{i+1}=2^{n_i}$. Choose $n_1$ large enough that
\begin{equation}
 n_i^i\le2^{n_i-4},\quad g(n_i)\le B_{\mathrm R}(n_i,n_i^i),\quad
 \frac{(2n_i^i+1)(2^{h(n_i)+1}-1)}{2^{n_i}}<\frac12.
 \label{m:eq-stage}
\end{equation}
The first condition enforces \cref{m:quadratic}'s query range. Put a legal
instance at each active length and make $L$ empty elsewhere. Encode its edges
at length $2n_i+2$ and membership at length $2^{n_i}+2$. A clocked machine at
$n_i$ can see neither its membership records nor any later records. All
accessible answers except the current edges are already fixed, reducing it
to a $q=n_i^i$ query procedure. Use \cref{m:quadratic} at worst-case stages
and \cref{m:freezing} at heuristic stages. Freeze all current records.

The global selector uses edge records for equal active lengths. If just one
length is active it chooses that input. For two active lengths $r<s$, the
shorter input's membership record has length $2^r+2\le s+2$, so the selector
can read it and choose a positive shorter input, or otherwise the longer
input, in polynomial combined length. This completes the global selectivity
argument; same-length hard tournaments alone would not suffice. All mean-time
and heuristic upper bounds relativize to this $W$. Truncation at twice a
putative polynomial mean gives coverage $1/2$, proving the exact mean-time
lower bound. The full clocks, encodings, and diagonalization are in
\cref{sec:oracle}. Hardness is forced at active lengths, not at every length.

\subsection{Opposing relativizations and the two advice models}

For every oracle $A$, \cref{m:mc-rel} gives $\Mc^A$ linear nondeterministic
advice; the two-step positive-anchor verifier improves this on its subclass to
$\mathrm P^A\text{-}\mathrm{sel}\subseteq\NP^A/(n+1)$, with $n$ witness bits
\cite{HT96,HNP98}. With a PSPACE-complete oracle, $\Pclass^A=\NP^A=\PSPACE$,
so deterministic $n+1$ advice bits suffice for selectors, and $3n+5$ for
binary comparators. This opposes \cref{m:joint}:
neither deterministic nor bounded-error containment of $\Psel$ or $\Mc$
with $O(n)$ or $O(n^{2-\varepsilon})$ advice, $0<\varepsilon<1$, can be settled
by a proof valid relative to every oracle. By \cref{m:mc-U}, an unrelativized
counterexample would refute $\mathsf U$, and hence imply $\Pclass\ne\NP$;
a bounded-error counterexample would also imply $\NP\not\subseteq\BPP$.

In the coin-dependent model, $a_{n,r}$ may depend on the coins $r$, but never
on the input. A minimax distribution on valid anchor pairs covers every input
with constant probability; a list of $O(n)$ constant-size tuples of such
pairs covers every input on at least two thirds of its indices. The
interpreter receives only the chosen tuple, not the whole list. This gives
an errorless $O(n)$-advice interpreter using $\log n+O(1)$ coins
\cite{BD12}; a standalone argument appears in \cref{prop:dependent-advice}.

\begin{corollary}[Ordinary versus coin-dependent advice]\label{m:models}
For the same $W,L$, $L\notin\BPP^W/g(n)$ but $L\in\BPP^W//O(n)$.
The latter has an errorless interpreter with $\log n+O(1)$ coins and conclusive
probability at least $2/3$ on every input. Polynomially many coins do not
remove the quadratic ordinary-advice requirement.
\end{corollary}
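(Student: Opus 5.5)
The plan is to obtain the corollary from two ingredients already in hand: the non-membership $L\notin\BPP^W/g(n)$ from part (i) of \cref{m:joint}, and a construction of an errorless coin-dependent interpreter with $O(n)$ advice and $\log n+O(1)$ coins. The second ingredient only uses $W$-selectivity of $L$, so it relativizes. The final sentence of the corollary then follows directly, because \cref{m:joint}(i) already permits any polynomial number of coins with ordinary advice of length $g(n)=\Theta(n^2)$.

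For the upper bound, fix a length $n$ and the tournament on $X=\bits^n$ induced by the $\mathrm P^W$-selector.

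\textbf{Anchor pairs.} Call a pair $(p,q_-)$ an anchor pair, with flags, when $p$ is positive and $q_-$ is negative. Each such pair yields a sound two-step test on input $x$:
\begin{itemize}
\item accept if $x=p$, or $x\to p$, or $x\to z\to p$ for a guessed $z$;
\item reject symmetrically using $q_-$.
\end{itemize}
To keep the interpreter polynomial-time without guessing $z$, I would instead use only one-step tests against a small tuple of anchors. The interpreter accepts when $x\to p$ for some positive anchor $p$, and rejects when $q\to x$ for some negative anchor $q$. Both conclusions are sound by selectivity.

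\textbf{Covering game.} Consider the zero-sum game whose rows are inputs $x$ and whose columns are single anchors (a positive or a negative vertex), with payoff $1$ when the anchor resolves $x$ in one step. I claim the value is at least a constant. Take any distribution $\pi$ on inputs, and let $\pi_+$ and $\pi_-$ be its restrictions to $P$ and $\overline P$. \Cref{m:weighted} with unit-free weights $\pi_+$ gives a positive vertex $p$ of minimum weighted outdegree in $P$, which is beaten by inputs of weight at least about $\pi(P)/2$. Symmetrically, a negative anchor of minimum weighted indegree beats at least about half of $\pi(\overline P)$. Hence one of these two anchors resolves $\pi$-mass at least $1/4$. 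By the minimax theorem there is a distribution $\sigma$ on anchors covering every input with probability at least $1/4$.

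\textbf{Reducing to $O(n)$ advice.} Draw $N=O(n)$ independent tuples, each consisting of $c$ anchors sampled from $\sigma$. One tuple fails on $x$ with probability at most $(3/4)^c\le1/10$. A Chernoff bound shows that a fixed $x$ is covered by fewer than $2N/3$ of the tuples with probability $2^{-\Omega(N)}$. A union bound over the $2^n$ inputs then fixes one list, with $N=Cn$ for a suitable constant $C$, such that every $x$ is covered by at least two thirds of the indices.

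\textbf{The interpreter.} The interpreter spends $\log N=\log n+O(1)$ coins to choose an index $r$. The coin-dependent advice $a_{n,r}$ is the $r$-th tuple together with its label flags, of length $c(n+1)=O(n)$. The interpreter outputs a bit when one of its one-step tests fires, and outputs $\unk$ otherwise. It is errorless, and its conclusive probability is at least $2/3$ on every input. To place $L$ in $\BPP^W//O(n)$ in the stated convention, output a fixed default bit in place of $\unk$; this gives success probability at least $2/3$.

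\textbf{Main obstacle.} The hardest step is the constant-value claim for the covering game. The key is to apply the weighted minimum-outdegree argument to each label class separately. This avoids the two-step paths, which an interpreter that cannot guess $z$ could not use. The one-step inequality I need is $\pi\{x\in P:x\to p\}\ge(\pi(P)-\pi(p))/2$, which follows from averaging weighted outdegrees. Atoms with large $\pi(p)$ are handled by the test $x=p$.
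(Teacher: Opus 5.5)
Your proof is correct in outline and follows the paper's own route. The paper proves this as \cref{thm:advice-models}: \cref{m:joint}(i) for the lower bound, plus the minimax, constant-size-tuple, $O(n)$-list construction of \cref{prop:dependent-advice} for the upper bound. You use single labeled anchors with game value $1/4$ and $c$-tuples, where the paper uses positive--negative pairs with value $1/2$ and triples; this difference is immaterial.

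One supporting step is misstated, however. The inequality $\pi\{x\in P:x\to p\}\ge(\pi(P)-\pi(p))/2$ is false in general when $p$ minimizes weighted outdegree. Take $P$ consisting of:
\begin{itemize}
\item $p$ of weight $8$;
\item a cyclic triangle $u_1\to u_2\to u_3\to u_1$ with weight $2$ on each vertex;
\item $v$ of weight $5$;
\end{itemize}
with edges $p\to u_i$, $u_i\to v$ and $v\to p$. Then $d^+(p)=6<7=d^+(u_i)<8=d^+(v)$, so $p$ is the unique minimizer. Yet the open in-neighborhood of $p$ has weight $5<11/2$.

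What you actually need is the closed bound $\pi\{x\in P:x=p\text{ or }x\to p\}\ge\pi(P)/2$, and this does hold for the minimizer. The identity $\sum_{u\in P}\pi(u)d^+(u)=\bigl(\pi(P)^2-\sum_{u\in P}\pi(u)^2\bigr)/2$ forces $d^+(p)\le\pi(P)/2$. This is exactly the paper's closed in-neighborhood averaging.

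Alternatively, your open inequality does hold for some other $p$ with $d^-(p)\ge d^+(p)$. Such a $p$ exists because $\sum_{u}\pi(u)\bigl(d^+(u)-d^-(u)\bigr)=0$.

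With either replacement, the rest of your proof goes through unchanged.
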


\section{What non-black-box access to the comparison program buys}
\label{m:conditional-section}

\paragraph*{Hypothesis $\mathsf U$.}
There is a deterministic algorithm, polynomial-time on every Boolean circuit,
that rejects circuits with no satisfying assignment and accepts circuits with
exactly one; either answer is permitted on circuits with more solutions.
This is promise Unique-Circuit-SAT, not the total exact-one language or
$\Pclass=\mathrm{UP}$. The hypothesis follows from $\Pclass=\NP$ and implies
$\NP=\mathrm{RP}$ by isolation \cite{VV86}; no strict separation of these
hypotheses is claimed.

\begin{theorem}[Advice-preserving isolation]\label{m:generic}
Assume $\mathsf U$. Let $a(n),w(n)$ be polynomially bounded and polynomial-time
computable. If a polynomial-time predicate $R$ and ordinary advice $a_n$ of
length $a(n)$ satisfy
$x\in L\Longleftrightarrow\exists z\in\bits^{w(n)}R(x,a_n,z)=1$,
then $L\in\Pclass/O(a(n)+w(n)+n)$.
\end{theorem}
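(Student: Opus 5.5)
The plan is to compile the verifier into a circuit, use $\mathsf U$ to derandomize isolation, and advise the single seed that isolates every relevant circuit at length $n$.

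\textbf{Compile.} For each $x\in\bits^n$, hardwire $x$ and the designated $a_n$ into $R$. This gives a circuit $C_x(z)$ of size $\poly(n)$ on $w=w(n)$ inputs. Its solution set $S_x$ is nonempty exactly when $x\in L$.

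\textbf{Isolate.} Apply a Valiant--Vazirani style isolation using a family of hash constraints $h_\sigma$ indexed by a seed $\sigma$ of length $O(w+n)$. For every nonempty $S\subseteq\bits^w$, a random seed and a random level $k\le w$ should make $S\cap h_\sigma^{-1}(0^k)$ a singleton with probability $\Omega(1/w)$.
\begin{itemize}
\item Pairwise independent hashing $\bits^w\to\bits^k$ needs $O(w)$ seed bits, and the level costs $O(\log w)$ bits.
\item Boost the success probability with a short expander walk over seeds. A walk of length $O(\log w\cdot(n+1))$ on a constant-degree expander costs $O(w+n)$ total bits, plus $O(\log w)$ bits per step.
\item Amplify until failure on a fixed $x$ is below $2^{-n-1}$. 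Then a union bound over the $2^n$ inputs shows that one fixed walk start $\sigma_n$ works for all $x\in L\cap\bits^n$ simultaneously.
\end{itemize}
The expander-walk bound needed here is the Chernoff-type hitting estimate: $t$ steps miss a set of density $\Omega(1/w)$ with probability $e^{-\Omega(t/w)}$. This forces $t=O(wn)$. Each step costs $O(1)$ bits, so this is still $O(w+n)$ random bits for the start plus $O(wn)$ steps. That is too many as random bits, but the steps are deterministic: the walk is enumerated over all its vertices, and only the start vertex is advised. So the advice is $a_n$, the start $\sigma_n$, and the level data, for $O(a(n)+w(n)+n)$ bits.

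\textbf{Decide.} The deterministic decider runs the promise algorithm from $\mathsf U$ on each restricted circuit $C_x\wedge[h_{\sigma_j}(z)=0^{k}]$ along the walk, for each level $k$. That is $\poly(n)$ circuits in total. It accepts iff some run accepts.
\begin{itemize}
\item Soundness: if $x\notin L$, every restricted circuit is unsatisfiable, so each run must reject.
\item Completeness: if $x\in L$, some walk vertex and level isolate a unique solution, so that run must accept.
\end{itemize}
Answers on circuits with several solutions are arbitrary, but they are harmless only in the accepting direction. They are harmless here, because such circuits are satisfiable only when $x\in L$.

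\textbf{Main obstacle.} The hardest step is keeping the advised randomness linear while getting the per-input failure probability below $2^{-n}$. Hash seed length $O(w)$ is fine, but the isolation probability is only $1/w$. Repetition must therefore come from expander-walk derandomization, with only the starting vertex advised, rather than from independent seeds. I would first verify the expander hitting lemma with the polynomial walk length; the fallback is a randomness-efficient sampler with seed length $O(w+n)$.
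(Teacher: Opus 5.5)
Your compilation step and the decider's soundness and completeness logic match the paper, including why off-promise acceptances are harmless. The gap is in the advice accounting for the expander walk.

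\textbf{Why advising only the start vertex fails.} The hitting estimate you quote holds for a walk whose start \emph{and} edge labels are uniformly random. The union bound over the at most $2^n$ inputs then yields one good walk, namely a start vertex \emph{together with} $t-1$ specific edge labels. The decider needs all of them to know which seeds to test. ``Only the start vertex is advised'' has no valid reading:
\begin{itemize}
\item Enumerating every walk from the start costs $d^{t-1}$ time, which is exponential.
\item Following a fixed, advice-independent edge sequence makes the visited seeds a deterministic function of the start. Then some set of density $\alpha$ is missed with probability about $(1-\alpha)/t$: let it avoid every point visited from a $(1-\alpha)/t$ fraction of starts. This is nowhere near $2^{-n}$, so the union bound cannot go through.
\end{itemize}
Advising the edges honestly costs $\Theta(t)=\Theta(wn)$ bits. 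That is quadratic when $w(n)=\Theta(n)$ and breaks the $O(a(n)+w(n)+n)$ bound. Your first bullet has the same slip: $O(n\log w)$ steps at $O(1)$ bits each is not $O(w+n)$ bits.

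\textbf{The missing idea.} The $1/w$ loss comes only from choosing the level at random, and your own decider already removes it by trying every level for each seed. Take the levels to be prefixes of one pairwise-independent hash $\bits^{w}\to\bits^{w+1}$; the paper uses a random Toeplitz matrix plus a shift, with $3w+1$ seed bits. For nonempty $S$ and $k=\lceil\log|S|\rceil+1$, the survivor count $Z$ satisfies $\mu=\E Z\in(1/4,1/2]$ and $\E[Z(Z-1)]\le\mu^2$. Hence $\Pr[Z=1]\ge\mu-\mu^2\ge3/16$. So for every $x\in L$, the seeds on which the OR over all levels accepts have constant density.

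A constant-degree expander walk of $O(n)$ steps then misses each such set with probability below $2^{-n-2}$. Advising the whole walk (start plus $O(n)$ edge labels) costs $3w+O(n)$ bits. No level is advised, and a single advised level could not serve all $x$ anyway.

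Your fallback could also be made to work: a hitter for density-$1/w$ sets with $O(w+n)$ seed bits and polynomially many samples, for example small-bias translates inside an $O(n)$-step walk over shifts. However, the proposal only names it and does not carry it out.
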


\paragraph*{Proof idea.}
Compile $R(x,a_n,z)$ into a circuit $C$ with exactly $m=w(n)$ free inputs;
internal gates introduce no witness variables. For $m\ge1$, a random Toeplitz
matrix $T\in\F_2^{(m+1)\times m}$ and shift $b\in\F_2^{m+1}$ define a pairwise
independent hash $h_s(z)=Tz+b$ with $3m+1$ seed bits. For every
$k=1,\ldots,m+1$, apply the assumed solver to
$C(z)\wedge[h_s(z)_{1\ldots k}=0^k]$ and OR its answers. Unsatisfiable $C$ is
always rejected. If $C$ has $M>0$ solutions, take $k=\lceil\log M\rceil+1$.
The survivor count $Z$ has $\mu=\E Z\in(1/4,1/2]$ and
$\E[Z(Z-1)]\le\mu^2$, hence
\[
 \Pr[Z=1]\ge\E Z-\E[Z(Z-1)]\ge\mu-\mu^2\ge3/16.
\]
Testing all prefixes avoids a factor-$m$ loss. Off-promise acceptances are
harmless because the original circuit is then satisfiable.

For each positive $n$-bit input, good seeds occupy at least $3/16$ of seed space.
A stationary length-$O(n)$ walk on a fixed constant-degree explicit expander
misses each such set with probability below $2^{-n-2}$ \cite{RVW02}. A union bound
provides one walk hitting all at most $2^n$ good sets. Its initial vertex and edge
labels cost $3m+1+O(n)$ bits, rather than quadratic advice for independent seeds.
Advise this walk with $a_n$; reconstruct it and OR the tests. All negative inputs
reject on every seed and all positives accept on some visited seed. The case
$m=0$ is direct. Full hashing, spectral, and running-time details are in
\cref{sec:conditional}.

\Cref{m:mc-U} is the direct substitution of the comparator verifier's
$a(n)=3n+5$ and $w(n)=5n+12$. In particular, it also yields the earlier
selector consequence $\Psel\subseteq\Pclass/O(n)$ under
$\mathsf U$. This is not a black-box simulation: it compiles the actual selector
or comparator program into ordinary circuits. A comparison oracle supplies no
such circuit, and $\mathsf U$ does not solve promise-unique circuits containing
arbitrary oracle gates. Nor does $\Pclass=\mathrm{UP}$ justify these calls, since
restricted circuits may be off-promise. A generic $\NP=\mathrm{RP}$ simulation
does not preserve seed length as a function of witness length.

\section{Open problems}\label{m:open}\label{sec:open}

\paragraph*{Bounded-error heuristics.}
Freezing eliminates advice by preserving one erroneous run, even on a negligible
set of inputs and tapes. What advice lower bounds hold when both error and
abstention are measured under the uniform distribution?

\paragraph*{Constants and decoding time.}
The query-model gap is $1/4$ versus $1/2$; polynomial-time decoding still has
Ko's upper coefficient $1$. Can the lower bound increase, or can efficient
indexing beat coefficient $1$? Inner anchors can serve multiple nested layers.
For binary nondeterministic advice, can coefficient $3$ or the certificate and
call constants be reduced? Higher-arity comparators require more than the
one-premise implication behind the short-forcing proof.

\paragraph*{Weakening $\mathsf U$.}
The transfer uses the promise solver only on hash restrictions of circuits
compiled from comparison verifiers. Could a weaker hypothesis suffice with
linear seed length? Neither $\Psel\subseteq\Pclass/O(n)$ nor the analogous
unconditional containment for $\Mc$ is established here.

\paragraph*{Research methods and AI assistance}
OpenAI ChatGPT was used for substantive mathematical assistance: developing and
revising the weighted-certificate algorithms, transcript-freezing argument,
randomized lifting, query-only covering-family bound, joint oracle construction,
and advice-preserving isolation transfer; checking parameter calculations; and
locating literature. The human-led discussion supplied the hidden-core
construction and small-bias improvements. For the supplied binary-comparator
draft, ChatGPT condensed and retypeset the proof, checked its advice, witness,
and query accounting, and integrated its consequences. It also prepared and
reorganized LaTeX. \Cref{app:ai} records specific uses and provenance limitations.
No AI system is an author; the human author(s) remain responsible for the claims,
proofs, and references.
\label{m:last-page}

\par\bigskip
\begingroup
\small
\phantomsection
\addcontentsline{toc}{section}{References}
\bibliographystyle{plainurl}
\bibliography{linear_certificates_quadratic_barriers_arxiv}
\endgroup
\clearpage
\appendix
\crefalias{section}{appendix}
\section{Tournament lemmas and certificate completeness}
\label{sec:preliminaries}\label{app:prelim}

All logarithms are base two except $\ln$. Asymptotic bounds concern
$n\ge2$; finitely many shorter slices can be handled separately. Ordinary
advice is fixed simultaneously for all inputs of a length and independently
of random coins; it need not be computable. Time bounds are measured with
the designated advice unless a clock is expressly part of the model.
Polynomial mean time means an expectation at each length, not a pointwise
polynomial bound.

\subsection{Conventions}

A language $L\subseteq\bits^*$ is P-selective if it has a polynomial-time
computable function $f$ satisfying
\begin{equation}
 f(x,y)\in\{x,y\},\qquad
 \{x,y\}\cap L\ne\varnothing\Longrightarrow f(x,y)\in L.
 \label{eq:selector-definition}
\end{equation}
We may make $f$ commutative by first ordering its two arguments
lexicographically. This transformation preserves
\eqref{eq:selector-definition} and polynomial running time.

For distinct strings of the same length, write
\begin{equation}
 x\to y\quad\Longleftrightarrow\quad f(x,y)=x.
 \label{eq:orientation}
\end{equation}
There are no loops. Thus $\to$ is a tournament orientation. Fixing $n$,
put
\[
 X=\bits^n,\qquad P=L\cap X,\qquad N=X\setminus L.
\]
Every vertex in $P$ beats every vertex in $N$.

For a nonnegative weight function $\nu$ on a finite set, write
$\nu(A)=\sum_{x\in A}\nu(x)$. A weight function used on an induced
subtournament need not sum to one. In particular, when a distribution on
$X$ is restricted to $P$ or $N$, we do not renormalize it.

\subsection{A weighted tournament lemma}

\begin{lemma}[Lower tail of two-step witness weight]
\label{lem:weighted}
Let $V$ be a finite tournament with nonnegative vertex weights $\nu$.
Choose $p\in V$ minimizing the weighted outdegree
\[
 d^+(p)=\sum_{z:p\to z}\nu(z).
\]
For $x$ with $p\to x$, define
\[
 I_x=\{z:x\to z\to p\},\qquad
 J_x=\{z:p\to z\to x\}.
\]
Then
\begin{equation}
 \nu(I_x)\ge\nu(J_x)+\nu(x).
 \label{eq:pointwise-weight}
\end{equation}
For every $t\ge0$,
\begin{equation}
 \nu\{x:p\to x,\ \nu(I_x)\le t\}\le2t.
 \label{eq:weighted-tail}
\end{equation}
\end{lemma}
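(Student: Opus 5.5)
We prove the pointwise bound \eqref{eq:pointwise-weight} first, by comparing weighted outdegrees, and then derive the tail bound \eqref{eq:weighted-tail} by double counting over the set of weak vertices.

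\emph{Pointwise bound.} Fix $x$ with $p\to x$. Partition $V\setminus\{x,p\}$ according to the orientations of the edges with $x$ and with $p$.
\begin{itemize}
\item Vertices $z$ with $x\to z$ and $p\to z$ contribute to both $d^+(x)$ and $d^+(p)$, so they cancel.
\item Vertices with $z\to x$ and $z\to p$ contribute to neither.
\item $I_x$ (where $x\to z\to p$) contributes only to $d^+(x)$.
\item $J_x$ (where $p\to z\to x$) contributes only to $d^+(p)$.
\end{itemize}
Finally, $x$ itself lies in $p$'s out-neighborhood, while $p\notin N^+(x)$. This gives
\[
 d^+(x)-d^+(p)=\nu(I_x)-\nu(J_x)-\nu(x).
\]
Minimality of $p$ makes the left side nonnegative, which is exactly \eqref{eq:pointwise-weight}.

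\emph{Tail bound.} Let $A=A_t=\{x:p\to x,\ \nu(I_x)\le t\}$. Summing the pointwise bound over $A$, and dropping the parts of $J_x$ outside $A$ since weights are nonnegative, gives
\[
 t\,\nu(A)\ \ge\ \sum_{x\in A}\nu(I_x)\ \ge\ \sum_{x\in A}\bigl(\nu(J_x\cap A)+\nu(x)\bigr).
\]
For $x,z\in A$ we have $p\to z$, so $z\in J_x$ exactly when $z\to x$. Hence
\[
 \sum_{x\in A}\nu(J_x\cap A)=\sum_{z\to x,\ x,z\in A}\nu(x)\nu(z)...
\]
To be precise, the double sum $\sum_{x\in A}\sum_{z\in J_x\cap A}\nu(z)$ is a sum of $\nu(z)$ over ordered pairs with $z\to x$, which is not yet symmetric in the weights. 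To fix this, weight the summation by $\nu(x)$: multiply the pointwise inequality by $\nu(x)$ before summing. This yields
\[
 t\,\nu(A)\ \ge\ \sum_{x\in A}\nu(x)\nu(I_x)\ \ge\ \sum_{x\in A}\nu(x)\nu(J_x\cap A)+\sum_{x\in A}\nu(x)^2.
\]
In the first term, every unordered pair $\{x,z\}\subseteq A$ is counted exactly once, through whichever vertex loses the edge, with product weight $\nu(x)\nu(z)$. That term therefore equals $\bigl(\nu(A)^2-\sum_{x\in A}\nu(x)^2\bigr)/2$, and we get
\[
 t\,\nu(A)\ \ge\ \frac{\nu(A)^2+\sum_{x\in A}\nu(x)^2}{2}\ \ge\ \frac{\nu(A)^2}{2}.
\]
This matches the inequality quoted in the main text. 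If $\nu(A)=0$ there is nothing to prove; otherwise dividing by $\nu(A)$ gives $\nu(A)\le2t$.

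The delicate point is the weighting. The unweighted sum does not symmetrize, because it produces $\nu(z)$ over ordered pairs rather than products. The fix is to weight each $x$ by $\nu(x)$, using $\nu(x)\nu(I_x)\le t\,\nu(x)$ for $x\in A$, so that the pair sum becomes the symmetric quantity $\sum\nu(x)\nu(z)$.
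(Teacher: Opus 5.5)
Your proof is correct and follows the paper's argument: the same degree-difference identity $d^+(x)-d^+(p)=\nu(I_x)-\nu(J_x)-\nu(x)$ for the pointwise bound, and the same $\nu(x)$-weighted pair count yielding $t\,\nu(A_t)\ge\frac12\bigl(\nu(A_t)^2+\sum_{x\in A_t}\nu(x)^2\bigr)$. In a final write-up, delete the abandoned unweighted first attempt, because its opening inequality $t\,\nu(A)\ge\sum_{x\in A}\nu(I_x)$ does not hold in general; summing $\nu(I_x)\le t$ over $A$ gives only $t|A|$.
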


\begin{proof}
Partition $V\setminus\{p,x\}$ according to the two orientations involving
$p$ and $x$. The common out-neighbors cancel in the degree difference;
the edge $p\to x$ contributes $-\nu(x)$. Consequently,
\begin{equation}
 d^+(x)-d^+(p)=\nu(I_x)-\nu(J_x)-\nu(x).
 \label{eq:degree-difference}
\end{equation}
The left-hand side is nonnegative by the choice of $p$, which proves
\eqref{eq:pointwise-weight}.

Let $A_t=\{x:p\to x,\ \nu(I_x)\le t\}$. If $x,y\in A_t$ and
$y\to x$, then $y\in J_x$, since $p\to y$. Therefore
\begin{align*}
 t\nu(A_t)
 &\ge\sum_{x\in A_t}\nu(x)\nu(I_x)\\
 &\ge\sum_{x\in A_t}\nu(x)\bigl(\nu(J_x)+\nu(x)\bigr)\\
 &\ge\sum_{\substack{x,y\in A_t\\y\to x}}\nu(x)\nu(y)
       +\sum_{x\in A_t}\nu(x)^2\\
 &=\frac12\left(\nu(A_t)^2+\sum_{x\in A_t}\nu(x)^2\right)\\
 &\ge\frac12\nu(A_t)^2.
\end{align*}
The equality counts each unordered pair once. If $\nu(A_t)=0$, the
claim is immediate; otherwise division by $\nu(A_t)$ proves
\eqref{eq:weighted-tail}.
\end{proof}

\begin{corollary}[Unweighted form]
\label{cor:unweighted}
With unit vertex weights and $k\ge1$ an integer,
\begin{equation}
 \bigl|\{x:p\to x,\ |I_x|\le k\}\bigr|\le2k-1.
 \label{eq:unweighted-tail}
\end{equation}
Every vertex beaten by $p$ has a two-step path to $p$.
\end{corollary}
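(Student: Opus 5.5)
The unweighted form follows by specializing \cref{lem:weighted} to $\nu\equiv1$ and then sharpening the quadratic inequality using integrality. With unit weights, put $A=\{x:p\to x,\ |I_x|\le k\}$ and $a=|A|$. The chain of inequalities in the proof of \cref{lem:weighted} (with $t=k$) gives
\[
 ka\ \ge\ \sum_{x\in A}|I_x|\ \ge\ \sum_{x\in A}\bigl(|J_x|+1\bigr)
 \ \ge\ \binom a2+a=\frac{a(a+1)}2 .
\]
This retains the diagonal term $\sum_{x\in A}\nu(x)^2=a$, which the weighted statement discards. If $a=0$ the bound is trivial. Otherwise divide by $a$ to get $a+1\le2k$, that is, $a\le2k-1$. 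This is exactly \eqref{eq:unweighted-tail}.

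The only care point is this retained diagonal term. Without it one gets just $a\le2k$, which is off by one. So I will not quote the final displayed bound \eqref{eq:weighted-tail}. Instead I will reuse its intermediate line $t\nu(A_t)\ge\tfrac12(\nu(A_t)^2+\sum\nu(x)^2)$ and substitute $\sum\nu(x)^2=a$.

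For the second sentence, take any $x$ with $p\to x$. Inequality \eqref{eq:pointwise-weight} with unit weights gives $|I_x|\ge|J_x|+1\ge1$. So some $z$ satisfies $x\to z\to p$, which is a two-step path from $x$ to $p$. I expect no real obstacle. The only subtlety is the off-by-one handled above, and noting that the pairwise count uses that $y\to x$ with $y\in A$ forces $y\in J_x$ because $p\to y$.
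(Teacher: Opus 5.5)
Your proposal is correct and matches the paper's proof: you specialize the weighted lemma's chain to unit weights and retain the diagonal term, which gives $ka\ge a(a+1)/2$ and hence $a\le2k-1$. You then read the two-step path off the pointwise bound $|I_x|\ge|J_x|+1\ge1$, just as the paper does.
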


\begin{proof}
The second claim follows from \eqref{eq:pointwise-weight}.
For the first, set $A=\{x:p\to x,\ |I_x|\le k\}$ and $s=|A|$.
The preceding proof, retaining the diagonal term, gives
$ks\ge(s^2+s)/2$. If $s>0$, this implies $s\le2k-1$.
\end{proof}

\subsection{Two advised vertices}

Let $\nu$ now be a full-support probability distribution on $X$. Assume
first that $P$ and $N$ are both nonempty. Choose
\begin{align}
 p&\in\operatorname*{arg\,min}_{u\in P}
       \sum_{z\in P:u\to z}\nu(z),\label{eq:positive-anchor}\\
 q&\in\operatorname*{arg\,min}_{u\in N}
       \sum_{z\in N:z\to u}\nu(z).
 \label{eq:negative-anchor}
\end{align}
These choices are nonuniform. Computing the minima is not part of the
interpreter's task. The advice stores $p,q$ and a constant-size flag
indicating whether either class is empty. Unused fields may be padded,
so the advice length is $2n+O(1)$.

On input $x$, the \emph{direct tests} are
\begin{align}
 x=p\ \text{or}\ x\to p&\quad\Longrightarrow\quad\text{output }1,
 \label{eq:positive-direct}\\
 x=q\ \text{or}\ q\to x&\quad\Longrightarrow\quad\text{output }0.
 \label{eq:negative-direct}
\end{align}
Let $\mathcal H$ be the set of inputs not decided by these tests. For
$x\in\mathcal H$, define
\begin{align}
 C_+(x)&=\{z\in X:x\to z\to p\},\label{eq:Cplus}\\
 C_-(x)&=\{z\in X:q\to z\to x\},\label{eq:Cminus}\\
 C(x)&=C_+(x)\cup C_-(x),\qquad
 \rho(x)=\nu(C(x)).\label{eq:rho}
\end{align}
An intermediate vertex in $C_+(x)$ certifies that $x$ is positive; one
in $C_-(x)$ certifies that $x$ is negative. Testing a proposed certificate
uses a constant number of evaluations of $f$.

\begin{proposition}[Soundness, completeness, and density profile]
\label{prop:anchors}
With the designated advice, all the direct tests and two-step
certificates are sound. For every $x\in\mathcal H$,
\begin{equation}
 \rho(x)\ge\nu(x)>0.
 \label{eq:rho-atom}
\end{equation}
Moreover, for every $t\ge0$,
\begin{equation}
 \nu\{x\in\mathcal H:\rho(x)\le t\}\le4t.
 \label{eq:rho-tail}
\end{equation}
\end{proposition}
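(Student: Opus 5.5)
Soundness comes first and is direct. If $x=p$ or $x\to p$ with $p\in P$, then $x\in P$: otherwise $x\in N$ and $p\in P$ would give $p\to x$. Symmetrically, $q\to x$ with $q\in N$ forces $x\in N$, since every vertex of $P$ beats every vertex of $N$. For the two-step certificates, $z\in C_+(x)$ means $z\to p$, so $z\in P$ by the first test, and then $x\to z$ gives $x\in P$. Likewise $z\in C_-(x)$ means $q\to z$, so $z\in N$, and $z\to x$ gives $x\in N$. The empty-class flags handle the degenerate slices: if $P$ is empty, every input is rejected, and symmetrically when $N$ is empty.

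For the density profile, split $\mathcal H=(\mathcal H\cap P)\cup(\mathcal H\cap N)$ and treat each part by \cref{lem:weighted}.

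\emph{Positive part.} Apply \cref{lem:weighted} to the tournament induced on $P$ with the unrenormalized weights $\nu|_P$. By \eqref{eq:positive-anchor}, $p$ minimizes weighted outdegree inside $P$. Take $x\in\mathcal H\cap P$. Since $x\ne p$ and $x\not\to p$, we have $p\to x$. The set $I_x$ computed inside $P$ is contained in $C_+(x)$, so $\rho(x)\ge\nu(I_x)\ge\nu(x)$ by \eqref{eq:pointwise-weight}. Moreover
\[
 \{x\in\mathcal H\cap P:\rho(x)\le t\}\subseteq\{x\in P:p\to x,\ \nu(I_x)\le t\},
\]
and \eqref{eq:weighted-tail} bounds the weight of the right-hand set by $2t$.

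\emph{Negative part.} Reverse all edges within $N$. Then $q$, which minimizes weighted indegree by \eqref{eq:negative-anchor}, minimizes weighted outdegree in the reversed tournament. For $x\in\mathcal H\cap N$ we have $x\ne q$ and $q\not\to x$, so $x\to q$; in the reversed tournament this reads $q\to x$. Here $I_x$ in the reversed tournament is $\{z\in N:q\to z\to x\}$ in original orientation, which is contained in $C_-(x)$. The same two bounds follow, with a further $2t$ for the tail.

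Adding the two parts gives $\nu\{x\in\mathcal H:\rho(x)\le t\}\le 4t$. Full support of $\nu$ gives $\nu(x)>0$, which completes \eqref{eq:rho-atom}.

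The main obstacle is bookkeeping rather than depth. First, the lemma must be applied to a sub-tournament with non-normalized weights, which the lemma explicitly allows. Second, one must check that $\mathcal H$ membership yields the hypothesis $p\to x$ in the positive case and its reversed analogue in the negative case. Third, the intermediate vertices must lie in the same class, which holds because $I_x$ is taken within $P$ (respectively $N$) and so is a subset of $C_\pm(x)$.
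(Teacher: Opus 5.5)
Your proof is correct and follows the paper's argument. Soundness comes from the cross-class orientation, and \cref{lem:weighted} is then applied separately to the subtournament on $P$ and to the reversed subtournament on $N$, with the original unrenormalized weights, each class contributing $2t$ to the tail. The only difference is cosmetic: you use the containment $I_x\subseteq C_\pm(x)$, whereas the paper notes equality (with $C_-(x)=\varnothing$ on positive inputs), and containment is all the atom and tail bounds need.
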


\begin{proof}
Because $p$ is positive, $x\to p$ forces $x$ to be positive. Similarly,
since $q$ is negative, $q\to x$ forces $x$ to be negative. This proves
soundness of the direct tests.

If $x\to z\to p$, then $z$ is positive, and hence so is $x$.
If $q\to z\to x$, then $z$ is negative, and hence so is $x$.
Thus the two certificate types cannot conflict on any input.

For $x\in\mathcal H\cap P$, the failure of the positive direct test
implies $p\to x$. A vertex $z$ with $z\to p$ must belong to $P$.
Therefore $C_+(x)$ is exactly the set $I_x$ in the tournament induced
by $P$, and $C_-(x)=\varnothing$. Apply \cref{lem:weighted} to that
subtournament, using the original weights $\nu$.

For $x\in\mathcal H\cap N$, the failure of the negative direct test
implies $x\to q$. Reverse all edges in the tournament induced by $N$.
The choice of $q$ in \eqref{eq:negative-anchor} is a minimum-outdegree
choice in the reversed tournament, and its two-step paths correspond
exactly to $q\to z\to x$ in the original orientation. The lemma
therefore applies to this class as well.

The pointwise part of the lemma gives \eqref{eq:rho-atom} in both
classes. Each class contributes at most $2t$ to the lower tail; adding
the two bounds gives \eqref{eq:rho-tail}. If $P$ or $N$ is empty, the
flag decides the whole slice, and the conclusions hold with
$\mathcal H=\varnothing$.
\end{proof}

\begin{remark}[No labels for intermediate vertices]
The advice stores only the two anchors. It does not store the membership
labels of sampled intermediate vertices: the edges $z\to p$ and
$q\to z$ supply precisely the label information needed for soundness.
\end{remark}

\section{Full proofs of the average-case upper bounds}
\label{sec:upper}\label{app:upper}

The following statements concern every unrelativized P-selective language.
Their interpreters use the selector only through its evaluations, so the
proofs also apply relative to an arbitrary oracle. The running-time
qualifications distinguish exact mean-time computation from a
worst-case polynomial-time errorless heuristic.

\subsection{Main upper bounds}

\begin{theorem}[Exact deterministic computation with polynomial mean time]
\label{thm:main-deterministic}
For every $L\in\Psel$, there are a deterministic machine $D$, a polynomial
$Q$, and ordinary advice strings $a_n$ such that
\begin{align}
 |a_n|&\le3n+O(1),\label{eq:main-advice}\\
 D(x,a_n)&=\chi_L(x)\quad\text{for every }x\in\bits^n,\label{eq:main-exact}\\
 \E_{x\sim U_n}T_D(x,a_n)&\le Q(n).\label{eq:main-mean}
\end{align}
The worst-case time is $2^n\poly(n)$. The mean number of certificate
tests, and the mean stage cost before polynomial per-operation overhead,
are $O(n^4)$.
\end{theorem}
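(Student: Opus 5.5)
The plan is to combine the two-anchor certificates of \cref{prop:anchors} (with $\nu=U_n$) with a derandomized search driven by small-bias translates. Advise the positive anchor $p$, the negative anchor $q$, the empty-class flags, and one shift $v\in\F_2^n$; this costs $3n+O(1)$ bits.

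On input $x$, the decider $D$ works as follows.
\begin{enumerate}[(i)]
\item It first handles empty classes by the flag.
\item It then runs the direct tests \eqref{eq:positive-direct} and \eqref{eq:negative-direct}.
\item For each stage $j=1,\dots,n$ it builds the explicit $2^{-j/2}$-biased multiset $S_j$ of \cite{AGHP92}, of size $O(n^22^j)$. For each $s\in S_j$ it tests whether $z=v+s$ lies in $C_+(x)$ or $C_-(x)$, using at most four selector evaluations. It outputs the certified bit as soon as one test succeeds.
\item If every stage fails, it enumerates all $z\in X$ exhaustively.
\end{enumerate}

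Exactness holds for every shift $v$. Soundness follows from \cref{prop:anchors}. For completeness, \eqref{eq:rho-atom} gives $\rho(x)\ge 2^{-n}>0$, so $C(x)\neq\varnothing$ and the exhaustive fallback always finds a certificate. The worst-case time is therefore $2^n\poly(n)$, since all stages and the fallback together cost $O(n^22^n)$ candidates times polynomial overhead.

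For the mean time, fix an unresolved $x$ and apply \eqref{m:eq-translate} to $C=C(x)$ with $\lambda=2^{-(j-1)/2}$. This shows that stage $j\ge2$ is reached with probability at most $\min\{1,2^{-(j-1)}/\rho(x)\}$ over $v\sim U_n$, because reaching stage $j$ requires missing the translate $v+S_{j-1}$. Charge stage $j$ its full size $O(n^22^j)$, including the setup of the finite field and generator, which is polynomial per element. Charge the fallback $O(n2^n)$; it is reached with probability at most $2^{-n}/\rho(x)$. Summing, each $j$ contributes $O(n^2)$ when $2^{j}\ge 1/\rho(x)$ and $O(n^22^j)\le O(n^2/\rho(x))$ when $2^j<1/\rho(x)$, so the expected charge is $O(n^3/\rho(x))$, say $\le c_0 n^3/\rho(x)$. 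The fallback contributes $O(n)/\rho(x)$. Averaging over $x\sim U_n$ and applying \eqref{m:eq-reciprocal} with $\beta=2^{-n}$ gives
\[
\E_{x}\E_v[\text{charge}]\le O(n^3)\bigl(1+4\ln 2^n\bigr)=O(n^4).
\]
Inputs resolved directly cost $O(1)$ calls.

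To choose the advice, interchange the two expectations: some fixed $v$ achieves mean charge $O(n^4)$ over $x$. Advise that $v$. Each charged candidate costs $\poly(n)$ time, counting selector time and field arithmetic, so $\E_x T_D(x,a_n)\le Q(n)$ for a suitable polynomial $Q$.

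The main obstacle is bounding the stages honestly. The translates for different $j$ share one shift and are not independent, so the reach probability must come from missing only the immediately preceding stage, never from a product of failure probabilities. The setup must also be charged lazily, only for stages actually reached, so that the $2^j$-sized stages are weighted by $2^{-(j-1)}/\rho(x)$. I would also check the edge cases: when $\rho(x)$ is close to $1$, where the minimum with $1$ matters at small $j$; small $n$; and the case $\lambda^2(1-\rho)/\rho$ with $\rho=1$.
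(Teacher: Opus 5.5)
Your proposal is correct and follows the paper's proof essentially step for step. It uses the same ingredients: the anchors $p,q$ with one common shift $v$ of the explicit $2^{-j/2}$-biased multisets $S_j$; reach probabilities for stage $j$ obtained only from missing $v+S_{j-1}$; a per-input charge of $O(n^3/\rho(x))$; and then \eqref{m:eq-reciprocal} with $\beta=2^{-n}$, followed by an averaging choice of the shift. There is one small arithmetic slip: a stage with $2^{j-1}\ge1/\rho(x)$ contributes $|S_j|\,2^{-(j-1)}/\rho(x)=O(n^2/\rho(x))$, not $O(n^2)$, although your stated total $O(n^3/\rho(x))$, matching the paper's $2Cn^2+(2Cn^2(n-1)+1)/\rho$, is unaffected.
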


\begin{theorem}[Exact randomized computation under samplable distributions]
\label{thm:main-randomized}
Let $L\in\Psel$ and let $\mu_n$ be exactly samplable in worst-case
polynomial time. There is a randomized interpreter with $2n+O(1)$ ordinary
advice bits that never gives an incorrect answer, terminates almost
surely on every input, and satisfies
\begin{equation}
 \E_{x\sim\mu_n}\E_\omega T_R(x,b_n;\omega)\le\poly(n).
 \label{eq:main-random-mean}
\end{equation}
Its mean number of sampled certificates is $O(n)$. Truncation after $M$
samples gives average abstention at most $8/(M+1)$, or $4/(M+1)$ when
using the direct uniform-distribution construction.
\end{theorem}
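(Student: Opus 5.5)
The plan is to run the anchor construction of \cref{prop:anchors} for a full-support mixture and then sample certificates. Fix $n$ and put $\nu_n=(\mu_n+U_n)/2$. Then $\nu_n$ has full support with $\nu_n(x)\ge2^{-n-1}$, and $\mu_n\le2\nu_n$ pointwise.

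The advice $b_n$ consists of a positive anchor $p$ and a negative anchor $q$, chosen as in \eqref{eq:positive-anchor} and \eqref{eq:negative-anchor} with respect to $\nu_n$, together with the empty-class flags. This costs $2n+O(1)$ bits.

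The interpreter on input $x$ proceeds as follows.
\begin{enumerate}[(1)]
\item It applies the flags and the direct tests \eqref{eq:positive-direct} and \eqref{eq:negative-direct}.
\item Otherwise it repeats: flip a fair coin; sample $z\sim\mu_n$ with the exact worst-case polynomial-time sampler, or $z\sim U_n$; then, using at most four selector calls, check whether $z\in C_+(x)$ (output $1$) or $z\in C_-(x)$ (output $0$).
\end{enumerate}
Each round draws $z$ exactly from $\nu_n$ and costs worst-case $\poly(n)$ time. By \cref{prop:anchors}, every answer is sound, so the interpreter never errs on any tape.

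For termination and the sample count, fix $x\in\mathcal H$. The rounds are independent, and each succeeds with probability exactly $\rho(x)=\nu_n(C(x))$. By \eqref{eq:rho-atom}, $\rho(x)\ge\nu_n(x)>0$, so the number of rounds is geometric. Hence the interpreter terminates almost surely, on every input and not only on the support of $\mu_n$, and its expected number of rounds is $1/\rho(x)$.

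I then integrate the tail bound \eqref{eq:rho-tail}. Write $\beta=2^{-n-1}$, and note that $1/\rho\le1/\beta$ on $\mathcal H$. This gives
\[
 \E_{x\sim\nu_n}\frac{\ind_{\mathcal H}(x)}{\rho(x)}
 =\int_0^{1/\beta}\nu_n\{x\in\mathcal H:\rho(x)<1/s\}\,ds
 \le1+\int_1^{1/\beta}\frac4s\,ds
 =1+4\ln(1/\beta)=O(n).
\]
Since $\mu_n\le2\nu_n$, the same expectation under $\mu_n$ is at most twice this, which is still $O(n)$. Multiplying by the polynomial per-round cost, and adding the polynomial cost of the direct tests, yields \eqref{eq:main-random-mean}.

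For truncation after $M$ rounds, the miss probability on $x$ is $g(\rho(x))$ with $g(t)=(1-t)^M$, a decreasing function. Writing $\E g(\rho)$ as a Stieltjes integral against the distribution function of $\rho$ and integrating by parts gives $\E_\nu[g(\rho)\ind_{\mathcal H}]\le\int_0^1\nu\{\rho\le t\}\,|g'(t)|\,dt$. Inserting \eqref{eq:rho-tail} bounds this by
\[
 \int_0^1 4t\,M(1-t)^{M-1}\,dt=\frac{4}{M+1}.
\]
This is a Beta integral: $M\int_0^1 t(1-t)^{M-1}\,dt=M\cdot\frac{1}{M(M+1)}=\frac{1}{M+1}$.

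Transferring to $\mu_n$ via $\mu_n\le2\nu_n$ gives $8/(M+1)$. In the direct uniform construction $\nu_n=\mu_n=U_n$, so there is no factor of $2$, and the bound is $4/(M+1)$.

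The step needing the most care is the two integrations. Specifically, one must justify that the layer-cake formula may be used with the non-strict tail bound $\nu\{\rho\le t\}\le4t$, and that the boundary terms in the integration by parts vanish. They do: at $t=0$ the tail bound $\nu\{\rho\le t\}\le4t$ makes the boundary term vanish, and at $t=1$ we have $g(1)=0$. The cap $\rho\ge\beta$ is exactly what makes the reciprocal mean finite, and hence logarithmic in $1/\beta$ rather than unbounded. The rest is bookkeeping, namely that one tape of polynomially many coins suffices per round, and that the sampler is exact, so each round's draw really has law $\nu_n$.
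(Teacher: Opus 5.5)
Your proposal is correct and matches the paper's proof essentially step for step: the mixture $\nu_n=(\mu_n+U_n)/2$ with anchors chosen for $\nu_n$, geometric sampling with success probability $\rho(x)$, the reciprocal-density integration of \eqref{eq:rho-tail} with $\beta=2^{-n-1}$ (the paper's \cref{lem:reciprocal}), and the factor-$2$ domination $\mu_n\le2\nu_n$ for both the mean and the truncation bound. Your Stieltjes integration by parts for truncation is the same computation as the paper's identity $(1-\rho)^M=\int_\rho^1M(1-t)^{M-1}\,dt$ followed by interchanging sum and integral (\cref{lem:truncation}), and that route avoids boundary terms entirely.
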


\begin{theorem}[Polynomial-time errorless heuristics]
\label{thm:main-heuristic}
For every $L\in\Psel$ and every fixed positive integer $c$, a deterministic
polynomial-time interpreter using
\begin{equation}
 n+(2c+2)\log n+O(1)
 \label{eq:main-heuristic-advice}
\end{equation}
ordinary advice bits gives only answers in $\{\chi_L(x),\unk\}$ and has
uniform abstention probability at most $n^{-c}$. A refinement uses only
\begin{equation}
 n+(c+2)\log n+\log\log n+O_c(1)
 \label{eq:refined-heuristic-advice}
\end{equation}
bits. The advice may depend on $c$. The $3n+O(1)$ advice sequence in
\cref{thm:main-deterministic}, by contrast, supports every fixed
inverse-polynomial timeout guarantee without changing the advice.
\end{theorem}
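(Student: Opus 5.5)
The statement to prove is \cref{thm:main-heuristic}; the last statement shown is the cluster of three main upper bounds, and it ends with it. I will plan its proof by formalizing the ``one shift and one cut'' sketch after \cref{m:heuristic}.

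\textbf{Witness sets and their tail.} I will use one-step witnesses with no anchors. For $x\in P$ let $W_x=\{y\in P:x\to y\}$. For $x\in N$ let $W_x=\{y\in N:y\to x\}$. Put $w(x)=\mu(W_x)$ with $\mu=U_n$. The tail bound $\mu\{x:w(x)\le t\}\le 4t$ follows from the counting in \cref{lem:weighted}. Within $P$, let $A=\{x\in P:w(x)\le t\}$. For each pair in $A$, one endpoint beats the other, so that endpoint counts the other in its $W$. Hence $t\mu(A)\ge\sum_{x\in A}\mu(x)w(x)\ge\mu(A)^2/2$, which gives $\mu(A)\le 2t$. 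The same bound holds in $N$ with edges reversed, and the two classes together give $4t$.

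\textbf{Sample and labels from one shift and one cut.} Let $\varepsilon=n^{-c}$. Take an explicit $(\varepsilon/8)$-biased multiset $S\subseteq\F_2^n$ of size $M=O(n^2/\varepsilon^2)$ \cite{AGHP92}. The sample is $v+S$. Consider inputs with $w(x)>\varepsilon/8$. By \cref{m:eq-translate} with $\lambda=\varepsilon/8$ and $\rho>\varepsilon/8$, each such input is missed with probability at most $\lambda^2/\rho<\varepsilon/8$ over $v$. The remaining inputs have mass at most $\varepsilon/2$. By averaging over $v$, some shift $v^*$ has total abstention below $\varepsilon$.

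Given $v^*$, the interpreter deduplicates the sample. It then builds a Hamiltonian path $y_1\to\cdots\to y_k$ by standard insertion, using $O(M^2)$ selector calls. Selectivity forbids an edge from a negative vertex to a positive one, so the labels along the path form a positive prefix followed by a negative suffix. The advice is therefore $v^*$ plus a cut index in $\{0,\dots,M\}$. This costs $n+\log M+O(1)=n+(2c+2)\log n+O(1)$ bits.

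On input $x$, the interpreter regenerates the path and its labels. If some labeled-positive $y$ has $x\to y$, it answers $1$. If some labeled-negative $y$ has $y\to x$, it answers $0$. Otherwise it answers $\unk$. Soundness holds on every input because the labels are correct. The verifier never needs to know whether a witness is genuine, since any such edge is a sound implication. Completeness holds whenever $W_x\cap(v^*+S)\neq\varnothing$. The running time is $\poly(M)$.

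\textbf{Refined bound and main obstacle.} For \eqref{eq:refined-heuristic-advice}, I will not split at a threshold. Instead I will bound the averaged miss probability by $\E_x\min\{1,\lambda^2/w(x)\}$ and integrate the tail bound $4t$. The integrand contributes $\lambda^2\int_{\lambda^2}^1 4\,dt/t^2$ near small $t$, which gives a total of $O(\lambda^2\log(1/\lambda))$. I should redo the integration carefully: it produces roughly $O(\lambda^2(1+\log(1/\lambda^2)))$. Solving this against $\varepsilon$ gives $\lambda^2\approx\varepsilon/\log(1/\varepsilon)$. With $|S|=O(n^2/\lambda^2)$, this yields $M=O(n^2\log(1/\varepsilon)/\varepsilon)$, so $\log M=(c+2)\log n+\log\log n+O_c(1)$.

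I expect this integral to be the main obstacle, specifically getting the constants right so that only a single $\log\log n$ term survives. The remaining steps are routine: the size of the small-bias set, and the claim that polynomially large $M$ keeps the interpreter polynomial-time.
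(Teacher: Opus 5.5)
Your plan is the paper's own proof: the paper specializes \cref{prop:heuristic-simple,prop:heuristic-refined} to $r=n$ and $\varepsilon=n^{-c}$, using one-step witnesses with a $4t$ tail, a single $\varepsilon/8$-biased translate, and one Hamiltonian-path cut. For the refinement it uses $\E\min\{1,a/w\}\le a+4a\ln(1/a)$ with $a=\lambda^2\approx\varepsilon/\log(1/\varepsilon)$. Your displayed integrand is missing the factor $t$ that comes from the tail bound, but the order you end up with is correct.

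There are two small repairs:

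\begin{enumerate}[(i)]
\item The paper puts $x$ itself into $W_x$. Equivalently, the interpreter also answers when $x$ is a sampled vertex, using that vertex's cut label. This supplies the diagonal term that the inequality $\sum_{x\in A}\mu(x)w(x)\ge\mu(A)^2/2$ needs. With your loopless $W_x$ the pair count only gives $\mu(A)\le2t+2^{-n}$, and a sink of the positive subtournament has $w(x)=0$. The damage is only $O(2^{-n})$ extra abstention. Still, your stated bound is false as written, and \cref{lem:integrated-miss} assumes $w>0$.
\item You did not address the theorem's last sentence. It is \cref{cor:timeouts}: truncate the mean-time decider of \cref{thm:main-deterministic} at $\lceil Q(n)n^c\rceil$ steps and apply Markov's inequality, leaving the advice unchanged.
\end{enumerate}
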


\subsection{Exact randomized search}
\label{sec:randomized}

The following reciprocal-density estimate is the link between the
combinatorial bound and average running time.

\begin{lemma}[Mean reciprocal certificate density]
\label{lem:reciprocal}
Under the hypotheses of \cref{prop:anchors}, suppose $\nu(x)\ge\beta>0$
for every $x\in X$. Set
\[
 B(x)=\begin{cases}1/\rho(x),&x\in\mathcal H,\\0,&x\notin\mathcal H.
 \end{cases}
\]
Then
\begin{equation}
 \E_{x\sim\nu}[B(x)]\le1+4\ln(1/\beta).
 \label{eq:reciprocal}
\end{equation}
\end{lemma}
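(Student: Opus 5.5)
We will prove \eqref{eq:reciprocal} by the layer-cake formula for $1/\rho$, using the lower-tail bound \eqref{eq:rho-tail} for small densities and the atom bound \eqref{eq:rho-atom} to cap the relevant range of $t$ at $\beta$.

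\emph{Layer-cake representation.} For $x\in\mathcal H$, $B(x)=1/\rho(x)$ with $0<\rho(x)\le1$. We write
\[
 \frac1{\rho(x)}=1+\int_1^\infty \ind[1/\rho(x)>s]\,ds
 =1+\int_0^1 \ind[\rho(x)<t]\,\frac{dt}{t^2},
\]
using the substitution $s=1/t$. Taking expectations over $x\sim\nu$ and applying Tonelli gives
\[
 \E_\nu B\le \nu(\mathcal H)+\int_0^1 \frac{\nu\{x\in\mathcal H:\rho(x)<t\}}{t^2}\,dt .
\]

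\emph{Two bounds on the integrand.} By the atom bound \eqref{eq:rho-atom} and the hypothesis $\nu(x)\ge\beta$, every $x\in\mathcal H$ has $\rho(x)\ge\beta$. Hence the integrand vanishes for $t\le\beta$. For $t\in(\beta,1]$ we use the tail bound \eqref{eq:rho-tail}, which gives $\nu\{\rho<t\}\le4t$. The integral is therefore at most $\int_\beta^1 4t/t^2\,dt=4\ln(1/\beta)$. Since $\nu(\mathcal H)\le1$, the total is at most $1+4\ln(1/\beta)$.

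\emph{Main obstacle.} The only delicate point is that the raw tail bound $4t$ alone would make $\int_0 4/t\,dt$ diverge. The lower cutoff at $\beta$ from the atom bound is what converts this into a logarithm. One also needs $\rho\le1$, which holds because $\nu$ is a probability distribution; the range above $t=1$ then contributes nothing. If $\mathcal H$ is empty, the bound is trivial.
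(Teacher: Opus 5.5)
Your proof is correct and is essentially the paper's argument: the paper caps $B$ at $1/\beta$ using the atom bound, then integrates $\Pr[B>u]$, bounding it by $1$ for $u<1$ and by $4/u$ via the tail bound for $1\le u\le1/\beta$. Your layer-cake computation in the variable $t$ is the same calculation after the substitution $u=1/t$.
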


\begin{proof}
By \eqref{eq:rho-atom}, $0\le B(x)\le1/\beta$.
For $u\ge1$, \eqref{eq:rho-tail} implies
\[
 \Pr_{x\sim\nu}[B(x)>u]
 \le\nu\{x\in\mathcal H:\rho(x)\le1/u\}\le4/u.
\]
Integrating the tail gives
\[
 \E B=\int_0^{1/\beta}\Pr[B>u]\,du
 \le1+4\int_1^{1/\beta}\frac{du}{u}
 =1+4\ln(1/\beta).
\]
\end{proof}

\subsubsection*{The certificate-search procedure}

After performing the direct tests, repeatedly sample independent
$z\sim\nu$. Output $1$ if $z\in C_+(x)$, output $0$ if
$z\in C_-(x)$, and otherwise continue. Soundness follows from
\cref{prop:anchors}. For $x\in\mathcal H$, the number $\tau(x)$ of
samples is geometric with success probability $\rho(x)$, so
\begin{equation}
 \E_{\omega}\tau(x)=1/\rho(x).
 \label{eq:geometric}
\end{equation}
Put $\tau(x)=0$ on directly decided inputs. The procedure terminates
almost surely on every input by \eqref{eq:rho-atom}.

\begin{lemma}[Average truncation error]
\label{lem:truncation}
For every integer $m\ge1$, the probability, averaged over $x\sim\nu$
and the independent samples, that the procedure has not answered after
$m$ samples is at most
\begin{equation}
 \frac4{m+1}.
 \label{eq:truncation}
\end{equation}
\end{lemma}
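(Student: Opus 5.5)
The plan is to compute the probability exactly per input and then average it using the tail bound \eqref{eq:rho-tail}. For $x\notin\mathcal H$ the procedure answers at once, so it contributes nothing. For $x\in\mathcal H$ the samples are independent, and each one lands in $C(x)$ with probability $\rho(x)$. So the probability of no answer after $m$ samples is exactly $(1-\rho(x))^m$. The quantity to bound is therefore
\[
 \E_{x\sim\nu}\bigl[\ind_{\mathcal H}(x)(1-\rho(x))^m\bigr].
\]

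Next, write this expectation as a Stieltjes integral against the distribution function $F(t)=\nu\{x\in\mathcal H:\rho(x)\le t\}$, for $t\in[0,1]$. Since $(1-t)^m$ is decreasing, integration by parts (the layer-cake formula) gives
\[
 \int_{[0,1]}(1-t)^m\,dF(t)=\int_0^1 m(1-t)^{m-1}F(t)\,dt .
\]
The boundary term vanishes because $(1-1)^m=0$, and the atom at $t=0$ contributes nothing because $F(0)=0$ by \eqref{eq:rho-atom}. Substituting $F(t)\le4t$ from \eqref{eq:rho-tail} bounds the right side by
\[
 4\int_0^1 m\,t(1-t)^{m-1}\,dt=4m\,B(2,m)=\frac{4m}{m(m+1)}=\frac{4}{m+1},
\]
which is the claimed bound.

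The only delicate step is justifying the integration by parts, since $F$ may have jumps. The cleanest way to avoid this is to work pointwise. For each $x\in\mathcal H$ write
\[
 (1-\rho(x))^m=\int_{\rho(x)}^1 m(1-t)^{m-1}\,dt,
\]
take the expectation over $x$, and apply Tonelli's theorem. The inner indicator $\ind[\rho(x)\le t]$ then integrates over $x$ to exactly $F(t)$. This gives the same estimate without any regularity assumption on $F$.
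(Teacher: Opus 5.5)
Your proof is correct and is essentially the paper's argument. Both use the pointwise identity $(1-\rho)^m=\int_{\rho}^{1}m(1-t)^{m-1}\,dt$, interchange it with the average over inputs (here only a finite sum, so Tonelli is immediate) to get $m\int_0^1F(t)(1-t)^{m-1}\,dt$, and then apply $F(t)\le4t$ to reach $4m\int_0^1t(1-t)^{m-1}\,dt=4/(m+1)$.
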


\begin{proof}
For $x\in\mathcal H$, the probability of missing all certificates is
$(1-\rho(x))^m$. Let
$F(t)=\nu\{x\in\mathcal H:\rho(x)\le t\}$.
Using the identity
\[
 (1-\rho)^m=\int_{\rho}^{1}m(1-t)^{m-1}\,dt
\]
and interchanging the finite sum over inputs with the integral, the
average abstention probability is
\[
 m\int_0^1 F(t)(1-t)^{m-1}\,dt
 \le4m\int_0^1 t(1-t)^{m-1}\,dt
 =\frac4{m+1}.
\]
Here \eqref{eq:rho-tail} supplies the inequality.
\end{proof}

\begin{proof}[Proof of \cref{thm:main-randomized}]
First take $\mu_n=U_n$ and choose the anchors using $\nu=U_n$.
Then $\beta=2^{-n}$. Equations~\eqref{eq:geometric} and
\eqref{eq:reciprocal} give
\begin{equation}
 \E_{x\sim U_n,\omega}\tau(x)\le1+4n\ln2.
 \label{eq:uniform-random-samples}
\end{equation}
Each sample and certificate test takes polynomial time, so the total
expected running time is polynomial. \Cref{lem:truncation} proves the
uniform truncation bound.

For a general efficiently samplable $\mu_n$, define
\begin{equation}
 \nu_n=\tfrac12\mu_n+\tfrac12U_n.
 \label{eq:mixture}
\end{equation}
This distribution is efficiently samplable and has full support. It
satisfies
\[
 \nu_n(x)\ge2^{-n-1},\qquad \mu_n(x)\le2\nu_n(x).
\]
Choose the two anchors using the weights $\nu_n$ and use fresh
independent samples from $\nu_n$. Then
\begin{equation}
 \E_{x\sim\mu_n,\omega}\tau(x)
 \le2\E_{x\sim\nu_n,\omega}\tau(x)
 \le2\bigl(1+4(n+1)\ln2\bigr).
 \label{eq:general-random-samples}
\end{equation}
The same domination inequality, applied to the failure probabilities
in \cref{lem:truncation}, gives the bound $8/(m+1)$.

The interpreter contains the fixed algorithms for $f$ and for sampling
$\mu$. The possibly inefficient choices of the anchors are made only
in the existence proof for the advice. There is no restriction that the
sampler for $\mu_n$ use $O(n)$ random bits; its worst-case polynomial
running time is sufficient.
\end{proof}

\begin{remark}[What the expectation does and does not say]
The expectation in \eqref{eq:main-random-mean} is over both the input and
the interpreter's random choices. It is not a polynomial bound for
$\E_{\omega}T_R(x,b_n;\omega)$ on every individual input. In particular,
we do not infer ordinary-advice membership in $\BPP$ or $\mathrm{ZPP}$
from this theorem.
\end{remark}

\subsection{Small-bias translates}
\label{sec:smallbias}

A finite nonempty multiset $S$ in $\F_2^r$ is $\lambda$-biased if
\[
 \left|\E_{s\in S}(-1)^{\langle u,s\rangle}\right|\le\lambda
 \quad\text{for every }u\ne0.
\]
Multisets are used so that a uniform index into an explicit construction
has exactly the stated distribution.

\begin{lemma}[Translate hitting bound]
\label{lem:translate}
If $S$ is $\lambda$-biased and $C\subseteq\F_2^r$ has density
$\rho>0$, then
\begin{equation}
 \Pr_{v\sim U_r}[(v+S)\cap C=\varnothing]
 \le\frac{\lambda^2(1-\rho)}{\rho}.
 \label{eq:translate}
\end{equation}
\end{lemma}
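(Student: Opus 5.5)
The plan is a second-moment (Fourier/Parseval) argument on the averaging operator over translates. Define $g=\ind_C-\rho$ on $\F_2^r$, so that $\E_v g(v)=0$ and $\E_v g(v)^2=\rho(1-\rho)$. For a shift $v$, consider the empirical hit frequency
\[
 A(v)=\E_{s\in S}\ind_C(v+s),
\]
with the multiset average. The event $(v+S)\cap C=\varnothing$ is exactly $A(v)=0$, which forces $A(v)-\rho=-\rho$ and hence $|A(v)-\rho|\ge\rho$. Chebyshev's inequality therefore gives
\[
 \Pr_v[A(v)=0]\le\frac{\E_v(A(v)-\rho)^2}{\rho^2}.
\]
So it suffices to show $\E_v(A(v)-\rho)^2\le\lambda^2\rho(1-\rho)$, since then the bound becomes $\lambda^2(1-\rho)/\rho$.

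To bound the variance, expand $g$ in characters: $g=\sum_{u\ne0}\hat g(u)\chi_u$, where $\chi_u(v)=(-1)^{\langle u,v\rangle}$. The zero coefficient vanishes because $g$ has mean zero. Write $(Tg)(v)=\E_{s\in S}g(v+s)=A(v)-\rho$. Each character is an eigenfunction of $T$:
\[
 T\chi_u=\Bigl(\E_{s\in S}(-1)^{\langle u,s\rangle}\Bigr)\chi_u,
\]
and the eigenvalue has absolute value at most $\lambda$ for $u\ne0$ by the bias hypothesis. By Parseval with respect to the uniform measure,
\[
 \E_v(Tg)^2=\sum_{u\ne0}|\hat g(u)|^2\Bigl|\E_{s}(-1)^{\langle u,s\rangle}\Bigr|^2\le\lambda^2\sum_{u\ne0}|\hat g(u)|^2=\lambda^2\E g^2=\lambda^2\rho(1-\rho).
\]
This gives the required variance bound, and the lemma follows.

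No step here is a real obstacle. The only care needed is to apply Parseval in the normalized (uniform-expectation) convention, and to note that multisets cause no difficulty because the average over $S$ respects multiplicity. The case $\rho=1$ is trivial: the right-hand side is $0$, and indeed every translate meets $C$.
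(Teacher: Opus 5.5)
Your proposal is correct and follows essentially the same route as the paper: the averaging operator over $S$ is diagonal in the character basis with eigenvalues of modulus at most $\lambda$ on nonzero characters. Applying Parseval to $g=\ind_C-\rho$ and comparing with the value $-\rho$ at every missed translate gives $\theta\rho^2\le\lambda^2\rho(1-\rho)$, exactly as in the paper's proof.
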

\begin{proof}
Use the uniform inner product on real functions on $\F_2^r$, and define
$Ag(v)=\E_{s\in S}g(v+s)$. The characters
$\chi_u(v)=(-1)^{\langle u,v\rangle}$ form an orthonormal basis and satisfy
$A\chi_u=(\E_s\chi_u(s))\chi_u$. Therefore
$\|Ag\|_2\le\lambda\|g\|_2$ for every mean-zero $g$.
For $g=\ind_C-\rho$, we have $\|g\|_2^2=\rho(1-\rho)$.
At every missed translate, $Ag(v)=-\rho$. If the missed fraction is
$\theta$, then
$\theta\rho^2\le\|Ag\|_2^2\le\lambda^2\rho(1-\rho)$.
\end{proof}

We use an explicit construction of Alon, Goldreich, H\aa stad, and
Peralta \cite{AGHP92}: a bias-$\eta$ multiset on $r$ bits has size
$O(r^2/\eta^2)$, polynomial-time indexed generation after setup bounded by
its size times a polynomial. The precise parameters and the construction
are stated and proved in \cref{lem:smallbias} in
Appendix~\ref{app:technical}.

The setup in \cref{lem:smallbias} is intentionally bounded in terms of
the multiset size, not asserted to be polynomial in $h$. In the
polynomial-time heuristic the multiset is polynomial-sized. In the
exact average-time decider, setup is performed only when its search
stage is reached; its full cost is charged to that stage below.

\subsection{Exact deterministic search with three linear-size fields}
\label{sec:deterministic}

Choose $p,q$ using $\nu=U_n$. For $1\le j\le n$, let $S_j$ be a fixed
explicit $2^{-j/2}$-biased multiset in $\F_2^n$, with
\begin{equation}
 |S_j|\le Cn^2 2^j
 \label{eq:stage-size}
\end{equation}
for an absolute constant $C$. For example, in \cref{lem:smallbias} take
$h_j=\lceil\log n\rceil+\lceil j/2\rceil$.
Fix a shift $v\in\F_2^n$. After the direct tests, search successively
through the certificate candidates in
\[
 v+S_1,\ v+S_2,\ \ldots,\ v+S_n.
\]
Stop upon finding a certificate of either sound type. If all stages
fail, enumerate every string of length $n$. This final search succeeds
by \cref{prop:anchors}, for every input and every shift.

\begin{proof}[Proof of \cref{thm:main-deterministic}]
For an unresolved input $x$, put $\rho=|C(x)|/2^n$. Let $A_j(v,x)$ be
the event that stage $j$ is reached and $A_*(v,x)$ the event that the
exhaustive fallback is reached. Define the \emph{charged stage cost}
\begin{equation}
 B_v(x)=\sum_{j=1}^n |S_j|\ind_{A_j(v,x)}
          +2^n\ind_{A_*(v,x)},
 \label{eq:charged-cost}
\end{equation}
and put $B_v(x)=0$ on directly decided inputs. This charges the entire
size of each reached stage even if a certificate is found early.
Consequently it bounds the number of candidate tests, and
$\poly(n)(1+B_v(x))$ bounds all computation, including the lazy setup
of the finite-field representations.

Reaching stage $j\ge2$ implies missing $v+S_{j-1}$. By
\cref{lem:translate},
\[
 \Pr_v[A_j(v,x)]\le\min\{1,2^{-(j-1)}/\rho\},\qquad
 \Pr_v[A_*(v,x)]\le\min\{1,2^{-n}/\rho\}.
\]
No independence between stages is needed; they use the same shift.
Thus
\begin{align}
 \E_v B_v(x)
 &\le |S_1|+\sum_{j=2}^n |S_j|
       \min\{1,2^{-(j-1)}/\rho\}
       +2^n\min\{1,2^{-n}/\rho\}\notag\\
 &\le 2Cn^2+\frac{2Cn^2(n-1)+1}{\rho}
 =O(n^3/\rho).
 \label{eq:per-input-stage-cost}
\end{align}
Using \cref{lem:reciprocal} with $\beta=2^{-n}$ gives
\begin{equation}
 \E_v\E_{x\sim U_n}B_v(x)=O(n^4).
 \label{eq:mean-stage-cost}
\end{equation}
There is therefore a single shift $v_n$ whose uniform mean charged cost
has this bound. Store $p,q,v_n$ and the empty-class flags: $3n+O(1)$
bits. The interpreter is deterministic and exact on every input; all
randomness above serves only to select good fixed advice.
The maximum charged cost is $O(n^2 2^n)$, so worst-case time is
$2^n\poly(n)$.
\end{proof}

\begin{corollary}[One advice sequence, all polynomial timeouts]
\label{cor:timeouts}
There is a fixed $3n+O(1)$ ordinary advice sequence such that, for every
fixed $c>0$, an errorless polynomial-time truncation of the corresponding
deterministic decider has uniform abstention at most $n^{-c}$.
\end{corollary}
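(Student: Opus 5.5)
Fix the advice sequence $a_n=(p,q,v_n,\text{flags})$ produced in the proof of \cref{thm:main-deterministic}. It satisfies the uniform mean bound $\E_{x\sim U_n}B_{v_n}(x)=O(n^4)$ from \eqref{eq:mean-stage-cost}. This advice does not depend on $c$, so one sequence serves every timeout.

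For a given $c>0$, the plan is to truncate the deterministic decider after a fixed polynomial budget and output $\unk$ if it has not finished. We run the direct tests and then the successive stages $v_n+S_1,\dots$, exactly as before. Whenever the charged stage cost already incurred would exceed $T_c(n)=K n^{4+c}$, for a suitable constant $K$, we stop and output $\unk$.

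Errorlessness follows from \cref{prop:anchors}. Every bit the decider outputs comes either from a direct test or from a verified two-step certificate, and both are sound. Truncation only replaces some answers by $\unk$.

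The running time is polynomial in every case. Before the cutoff the charged cost is at most $T_c(n)$, and by the accounting in the proof of \cref{thm:main-deterministic} the actual computation is at most $\poly(n)(1+B)$. This includes the lazy finite-field setup. One point needs care: the setup and scan of a single stage must not overshoot the budget. We therefore refuse to enter stage $j$ whenever $\sum_{i\le j}|S_i|>T_c(n)$, abstaining instead. Refusing is no worse than entering, because under the charging convention reaching stage $j$ already incurs its full size. The exhaustive fallback is handled the same way.

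The abstention estimate uses Markov's inequality. An input is marked $\unk$ only if its full charged cost satisfies $B_{v_n}(x)>T_c(n)$. Hence
\[
 \Pr_{x\sim U_n}\bigl[B_{v_n}(x)>Kn^{4+c}\bigr]\le\frac{O(n^4)}{Kn^{4+c}}\le n^{-c}
\]
once $K$ exceeds the hidden constant in \eqref{eq:mean-stage-cost}. The main point to check is that truncating by charged cost rather than wall-clock time keeps the Markov argument valid. It does, because $B_{v_n}$ is the quantity whose mean is bounded and it is determined by $x$ alone. Finitely many small $n$ can be handled by table lookup inside the machine, since the constant in $T_c$ may depend on $c$.
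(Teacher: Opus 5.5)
Your proof is correct and matches the paper's argument. You fix the $3n+O(1)$ advice of \cref{thm:main-deterministic}, truncate at the mean bound times $n^c$, and apply Markov's inequality with the advice unchanged; the only difference is that you truncate on the charged stage cost $B_{v_n}$ (mean $O(n^4)$) rather than on actual running time (mean $Q(n)$), which is equally valid given your check that abstaining forces $B_{v_n}(x)>T_c(n)$.
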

\begin{proof}
If the mean running-time bound is $Q(n)$, use the timeout
$\lceil Q(n)n^c\rceil$ and return $\unk$ on timeout. Markov's inequality
bounds the timeout probability by $n^{-c}$. No advice is changed.
\end{proof}

\subsection{One-step witnesses}

Let $\mu$ be any probability distribution on $X$; full support is not
required in this section. For every $x\in X$, define
\begin{equation}
 W_x=\begin{cases}
   \{y\in P:f(x,y)=x\},&x\in P,\\
   \{y\in N:f(x,y)=y\},&x\in N,
 \end{cases}
 \qquad w(x)=\mu(W_x).
 \label{eq:one-step-witnesses}
\end{equation}
Here equality of the strings is allowed; in particular $x\in W_x$.
A correctly labeled member of $W_x$ certifies the label of $x$ using
one selector evaluation.

\begin{lemma}[One-step witness lower tail]
\label{lem:one-step}
For every $t\ge0$,
\begin{equation}
 \mu\{x:w(x)\le t\}\le4t.
 \label{eq:one-step-tail}
\end{equation}
\end{lemma}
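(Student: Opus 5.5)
The plan is to split the inputs by membership class and reuse the pair-counting argument from the proof of \cref{lem:weighted}, this time applied directly to outdegrees rather than to two-step sets. Write $A^+_t=\{x\in P:w(x)\le t\}$ and $A^-_t=\{x\in N:w(x)\le t\}$. Then $\{x:w(x)\le t\}=A^+_t\cup A^-_t$, and it suffices to show $\mu(A^\pm_t)\le2t$ for each sign.

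For the positive class, I unpack \cref{eq:one-step-witnesses}. For $x\in P$ we have $W_x=\{x\}\cup\{y\in P:x\to y\}$, so $w(x)=\mu(x)+d^+_P(x)$. Here $d^+_P$ is the $\mu$-weighted outdegree in the tournament induced on $P$, with weights not renormalized. I then estimate
\[
 t\,\mu(A^+_t)\ \ge\ \sum_{x\in A^+_t}\mu(x)\,w(x)
 \ \ge\ \sum_{x\in A^+_t}\mu(x)^2+\sum_{\substack{x,y\in A^+_t\\ x\to y}}\mu(x)\mu(y).
\]
The second inequality holds because restricting the out-neighbours of $x$ to $A^+_t$ only drops nonnegative terms. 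Each unordered pair of distinct vertices of $A^+_t$ is oriented exactly one way, so the double sum equals $\tfrac12\bigl(\mu(A^+_t)^2-\sum_{x\in A^+_t}\mu(x)^2\bigr)$. The right-hand side is therefore at least $\tfrac12\mu(A^+_t)^2$. If $\mu(A^+_t)=0$ there is nothing to prove; otherwise dividing gives $\mu(A^+_t)\le2t$.

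For the negative class, $x\in N$ has $W_x=\{x\}\cup\{y\in N:y\to x\}$, because $f(x,y)=y$ means $y\to x$. So $w(x)=\mu(x)+d^-_N(x)$, and the same computation in the reversed tournament on $N$ gives $\mu(A^-_t)\le2t$. Adding the two bounds yields \cref{eq:one-step-tail}.

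I expect no real obstacle here. The only points to watch are bookkeeping: $x\in W_x$ supplies the diagonal term, full support is never used, and the weights are deliberately left unnormalized on each class.
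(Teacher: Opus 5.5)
Your proof is correct and follows the paper's own proof essentially step for step. You split the slice by class and lower-bound $t\,\mu(A)$ by a weighted pair count in which each distinct pair contributes once (by commutativity) and the diagonal, coming from $x\in W_x$, contributes fully. This gives $\mu(A)\le 2t$; you then repeat the count with the selection direction reversed on $N$ and add the two bounds.
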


\begin{proof}
Set $A=\{x\in P:w(x)\le t\}$. Then
\begin{align*}
 t\mu(A)
 &\ge\sum_{x\in A}\mu(x)w(x)\\
 &\ge\sum_{x,y\in A}\mu(x)\mu(y)\ind[f(x,y)=x]\\
 &=\frac12\left(\mu(A)^2+\sum_{x\in A}\mu(x)^2\right)
 \ge\frac12\mu(A)^2.
\end{align*}
For distinct $x,y$, commutativity makes exactly one of the two ordered
pairs contribute; diagonal pairs contribute their full weight.
Hence $\mu(A)\le2t$. The identical argument with the selection direction
reversed gives
$\mu\{x\in N:w(x)\le t\}\le2t$. Add the two inequalities.
\end{proof}

\subsection{Encoding all sample labels by one cut}

\begin{lemma}[Hamiltonian-path label compression]
\label{lem:path}
Given a list of at most $M$ strings in $X$, a deterministic algorithm
using $O(M^2)$ selector evaluations constructs an ordering of the
distinct strings
\[
 y_1\to y_2\to\cdots\to y_k.
\]
There is an integer $h\in\{0,\ldots,k\}$ such that
\begin{equation}
 \chi_L(y_i)=\begin{cases}1,&i\le h,\\0,&i>h.
 \end{cases}
 \label{eq:cut-labels}
\end{equation}
Thus the entire membership labeling of the sample is determined by at
most $\lceil\log(M+1)\rceil$ advice bits, once the sample is known.
\end{lemma}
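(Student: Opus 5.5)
The plan is to build the Hamiltonian path by insertion sort with respect to the selector orientation, and then to show that the membership labels along any such path are monotone.

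First, remove duplicates from the list by pairwise string comparison. This uses no selector calls and leaves $k\le M$ distinct strings. Then insert the strings one at a time into a directed path $y_1\to\cdots\to y_j$. To insert a new vertex $u$, query $f(u,y_1)$. If $u\to y_1$, prepend $u$. Otherwise scan for the first index $i$ with $u\to y_{i+1}$; since $y_i\to u$ already holds, placing $u$ between $y_i$ and $y_{i+1}$ keeps every consecutive edge correctly oriented. If no such index exists, then $y_j\to u$ and we append $u$. This is the standard argument that every tournament has a Hamiltonian path. Each insertion uses at most $j$ evaluations, so the total is $O(M^2)$. Commutativity of $f$ ensures the orientation is well defined.

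Second, prove the cut property. Suppose some consecutive pair has $y_i\notin L$ and $y_{i+1}\in L$. The pair meets $L$, so selectivity forces $f(y_i,y_{i+1})\in L$, that is, $f(y_i,y_{i+1})=y_{i+1}$. This means $y_{i+1}\to y_i$, contradicting the path edge $y_i\to y_{i+1}$. Hence the label sequence along the path has no adjacent $0,1$ transition. A $0/1$ sequence with no adjacent $0$ followed by $1$ is nonincreasing, so it has the form $1^h0^{k-h}$ for some $h\in\{0,\ldots,k\}$. Only consecutive edges are used here, not transitivity of $f$.

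Finally, the path construction is deterministic given the sample, so an interpreter that regenerates the sample also regenerates the same ordering. Advising $h\in\{0,\ldots,k\}\subseteq\{0,\ldots,M\}$ therefore determines every label. This takes $\lceil\log(M+1)\rceil$ bits.

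No step is a real obstacle. The only point needing care is the insertion invariant, specifically the case analysis showing that a valid insertion slot always exists.
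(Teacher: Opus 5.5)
Your proposal is correct and matches the paper's proof: it uses the same insertion rule (place the new vertex immediately before the first path vertex it beats, otherwise append) with $O(M^2)$ evaluations. It also uses the same observation that an adjacent $0,1$ pair would contradict selectivity, so the labels have the form $1^h0^{k-h}$ and are encoded by one cut index of $\lceil\log(M+1)\rceil$ bits.
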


\begin{proof}
Remove duplicates using a fixed deterministic rule. Construct the path
by insertion. To insert a new vertex $z$ into an existing path, put it
immediately before the first vertex it beats, or append it if there is
none. If it is inserted before position $j>1$, every earlier vertex,
including the preceding vertex, beats $z$; the two new adjacent edges
therefore have the required orientation. Insertion at the first
position and appending at the end are justified by the same rule.
There are at most $O(M^2)$ comparisons.

An adjacent pair labeled $0,1$ would have its negative vertex beating
its positive vertex, contradicting selectivity. Every binary sequence
with no adjacent $0,1$ transition is a prefix of ones followed by a
suffix of zeros. This proves \eqref{eq:cut-labels}.
The path need not be a transitive ordering of the tournament; only its
adjacent edges and the cross-class orientation are used.
\end{proof}

\subsection{Short-advice heuristics from one translate}
\label{sec:heuristics}

Assume $\mu_n=G_n(U_{r(n)})$, where $G_n$ and the polynomially bounded
integer $r(n)\ge1$ are uniformly polynomial-time computable. A sampler
using no coins can be supplied one unused coin. For a polynomial-time
computable accuracy $0<\varepsilon\le1/2$ with
$1/\varepsilon=\poly(n)$, we will construct a deterministic errorless
heuristic. The sampler description and the rule choosing the accuracy
are part of the fixed interpreter, not its advice.

\begin{proposition}[An explicit linear-leading-term bound]
\label{prop:heuristic-simple}
Under these hypotheses, average abstention at most $\varepsilon$ is
achievable with ordinary advice of length
\begin{equation}
 r(n)+2\log r(n)+2\log(1/\varepsilon)+O(1).
 \label{eq:heuristic-simple}
\end{equation}
All conclusive answers are correct on every input, including inputs
outside the support of $\mu_n$.
\end{proposition}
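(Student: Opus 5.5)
The interpreter and advice follow the sketch in the main text: one shift of a small-bias multiset on seed space regenerates a polynomial-size sample, and one cut index encodes all its labels. Fix $\eta=\varepsilon/8$ and let $S\subseteq\F_2^{r}$ be an explicit $\eta$-biased multiset from \cref{lem:smallbias}, with $M=|S|=O(r^2/\varepsilon^2)$. For a shift $v\in\F_2^r$, the sample is the list $G_n(v+s)$, $s\in S$. The interpreter regenerates this list and builds the Hamiltonian path of \cref{lem:path}. It then reads the advised cut $h$ to label every sampled string. On input $x$ it outputs $1$ if some sampled $y$ labeled positive has $x=y$ or $f(x,y)=x$, outputs $0$ if some sampled $y$ labeled negative has $x=y$ or $f(x,y)=y$, and otherwise outputs $\unk$. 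The advice is $v$ (which costs $r$ bits) and $h$ (which costs $\lceil\log(M+1)\rceil=2\log r+2\log(1/\varepsilon)+O(1)$ bits), which matches \eqref{eq:heuristic-simple}. Running time is polynomial because $M$ and $1/\varepsilon$ are polynomial and the path costs $O(M^2)$ evaluations.

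Soundness is independent of $\mu_n$, and I check it first. Under the designated cut, \cref{lem:path} makes every sample label correct. A correctly labeled positive $y$ with $f(x,y)=x$ forces $x\in P$ by selectivity, since otherwise the positive $y$ would have been selected. Symmetrically, a negative $y$ with $f(x,y)=y$ forces $x\in N$. So every conclusive answer is correct on every input, including inputs outside the support of $\mu_n$.

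Coverage is the main step. Pull everything back to seed space. For input $x$, let $C_x=\{s\in\F_2^r:G_n(s)\in W_x\}$, whose uniform density is exactly $w(x)$. Whenever the translate $v+S$ meets $C_x$, the interpreter answers on $x$. Split inputs at threshold $\varepsilon/8$:
\begin{itemize}
\item By \cref{lem:one-step}, inputs with $w(x)\le\varepsilon/8$ have $\mu$-mass at most $\varepsilon/2$.
\item For an input with $w(x)>\varepsilon/8$, \cref{lem:translate} bounds the probability over $v$ of missing $C_x$ by $\eta^2/w(x)<(\varepsilon/8)^2\cdot 8/\varepsilon=\varepsilon/8$.
\end{itemize}
Averaging over $x\sim\mu_n$ and $v$ gives expected abstention below $\varepsilon/2+\varepsilon/8<\varepsilon$, so some fixed shift achieves it. The delicate point is the bookkeeping: I must use that $C_x$ is defined on seed space, where $U_r$ and the small-bias property live, rather than on $X$, so that $\mu_n=G_n(U_r)$ transfers densities exactly even when $G_n$ is non-injective. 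I also need duplicates in the sample to be harmless, which they are because the path is built on distinct strings but membership tests are over the whole regenerated list.
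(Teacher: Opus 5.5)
Your proposal is correct and follows the paper's proof essentially step for step. It uses the same ingredients:
\begin{itemize}
\item a bias-$\varepsilon/8$ multiset on seed space;
\item the pulled-back sets $C_x$ of density $w(x)$;
\item the split at threshold $\varepsilon/8$, using the one-step tail bound and the translate hitting bound;
\item averaging over shifts to fix a good $v$;
\item a single Hamiltonian-path cut encoding all sample labels, for total advice $r+\lceil\log(M+1)\rceil$.
\end{itemize}
Your explicit soundness check and your remarks on duplicates and non-injective $G_n$ are consistent with the paper's argument.
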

\begin{proof}
Apply \cref{lem:smallbias} on $r=r(n)$ bits with bias
$\lambda=\varepsilon/8$, obtaining $M=O(r^2/\varepsilon^2)$ points.
For a shift $v$, use the sample list $(G_n(v+s):s\in S)$.
For each input $x$, the set
\[
 C_x=\{z\in\bits^r:G_n(z)\in W_x\}
\]
has uniform density $w(x)=\mu_n(W_x)$. By \cref{lem:one-step}, inputs
with $w(x)\le\varepsilon/8$ have total measure at most
$\varepsilon/2$. For any remaining input, \cref{lem:translate} bounds
its probability of being missed by
$\lambda^2/w(x)\le\varepsilon/8$. Therefore
\begin{equation}
 \E_v\Pr_{x\sim\mu_n}[\text{sample misses }W_x]
 \le\varepsilon/2+\varepsilon/8<\varepsilon.
 \label{eq:heuristic-simple-error}
\end{equation}
Fix a shift attaining this bound. By \cref{lem:path}, one cut position
in a Hamiltonian path encodes the correct labels of the entire sample.
Store this position together with $v$. The interpreter regenerates the
sample, reconstructs its labels, and returns the label certified by any
sampled one-step witness; otherwise it returns $\unk$.

The advice length is $r+\lceil\log(M+1)\rceil$, proving
\eqref{eq:heuristic-simple}. All inference is sound, and all construction
and path computations are polynomial since $M=\poly(n)$.
\end{proof}

Integrating the same witness-density tail gives the sharper logarithmic
overhead in \cref{thm:main-heuristic}; the additional estimate and its
application are in \cref{lem:integrated-miss,prop:heuristic-refined} of
Appendix~\ref{app:technical}.

\begin{proof}[Proof of \cref{thm:main-heuristic}]
Take $G_n$ to be the identity and $\varepsilon=n^{-c}$ in
\cref{prop:heuristic-simple,prop:heuristic-refined}. Then $r=n$ and
$\log(1+\log(1/\varepsilon))=\log\log n+O_c(1)$.
The final assertion follows from \cref{cor:timeouts}.
\end{proof}

\section{Hidden-core estimates and parameter optimization}\label{app:core}
\label{sec:core}

The next construction addresses correctness on every input rather than
uniform coverage. Its small core is negligible under the uniform input
distribution, but hides independent labels that a worst-case decider must
still determine.

\begin{theorem}[Finite query--advice tradeoff]
\label{thm:finite-lower}
Let $q,k\ge1$, $\ell\ge0$, and $1\le m<n$ be integers with
$4q\le2^{n-m}$. Suppose
\begin{align}
 \ell+3&\le k(n-m-\log(4q))-m,\label{eq:lower-condition-one}\\
 \ell+3&\le\left\lfloor\frac{2^m}{k}\right\rfloor2^{-k}\log e.
 \label{eq:lower-condition-two}
\end{align}
For every deterministic black-box decider with query bound $q$, a random
instance from the distribution below is decided by no advice string of
length at most $\ell$ with probability strictly greater than $1/2$.
\end{theorem}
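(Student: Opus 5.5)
The plan is to derive \cref{thm:finite-lower} from the distributional estimate \cref{m:delta} by a union bound over advice strings. Fix the deterministic black-box decider $D$ with query bound $q$, and let $I$ be drawn from the hidden-core distribution of \cref{m:core} with parameters $n,m$. For each advice string $a$ of length at most $\ell$, the procedure $D_a=D(\cdot,a)$ is a fixed deterministic $q$-query procedure whose code incorporates $a$. Since the distribution does not depend on the procedure or its description length, \cref{m:delta} applies to every $D_a$ uniformly. It gives
\[
 \Pr_I[D_a\text{ decides }I]\le\Delta(n,m,k,q),
\]
which is valid because $4q\le2^{n-m}$ and $k\ge1$ are among our hypotheses.

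Next I bound the two terms of $\Delta$ using the two conditions. By \eqref{eq:lower-condition-one},
\[
 m-k(n-m-\log(4q))\le-(\ell+3),
\]
so the first term is at most $2^{-\ell-3}$. For the second term, use $1-x\le e^{-x}$:
\[
 (1-2^{-k})^{\lfloor2^m/k\rfloor}\le\exp\bigl(-\lfloor2^m/k\rfloor2^{-k}\bigr)=2^{-\lfloor2^m/k\rfloor2^{-k}\log e}.
\]
By \eqref{eq:lower-condition-two} this is at most $2^{-\ell-3}$. Hence $\Delta\le2^{-\ell-2}$.

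There are fewer than $2^{\ell+1}$ advice strings of length at most $\ell$. A union bound therefore gives
\[
 \Pr_I\bigl[\exists a,\ |a|\le\ell,\ D_a\text{ decides }I\bigr]<2^{\ell+1}\cdot2^{-\ell-2}=\tfrac12.
\]
So with probability strictly greater than $1/2$, no such advice works, which is the statement. The strict inequality comes from the count $2^{\ell+1}-1<2^{\ell+1}$.

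The substantive content lies entirely in \cref{m:delta}, which I am taking as given. Its full proof rests on the discovery and freshness lemmas referenced there, and that is where the real obstacle sits: the conditional-independence invariant in the round-based exposure. The only points to check in this reduction are two. First, "decides" must mean correctness on every input of the slice, in particular on all core points, which is exactly the event \cref{m:delta} bounds. Second, fixing the advice inside the code is legitimate because the estimate is uniform over procedures.
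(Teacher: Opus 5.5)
Your proposal is correct and matches the paper's proof. The paper cites \cref{lem:discovery,lem:fresh} directly, which is exactly the two-term estimate that \cref{m:delta} packages. It then bounds each term by $2^{-\ell-3}$ using the two conditions and $1-x\le e^{-x}$, and takes a union bound over the $2^{\ell+1}-1$ advice strings to get success probability strictly below $1/2$.
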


\subsection{The distribution}

Write $x=(\beta,w)$ with bucket $\beta\in\bits^m$ and tail
$w\in\bits^{n-m}$. Choose independently and uniformly a tail $w_\beta$
and a label $\lambda_\beta\in\bits$ for every bucket, and one orientation
coin $c_{\beta\gamma}$ for every unordered pair of distinct buckets.
Put
\[
 t_\beta=(\beta,w_\beta),\qquad
 \Core=\{t_\beta:\beta\in\bits^m\},\qquad
 P=\{t_\beta:\lambda_\beta=1\}.
\]
The tournament is defined as follows. Outside $\Core$, lexicographically
smaller strings beat larger strings. Every member of $\Core$ beats every
point outside $\Core$. For two core points, opposite labels make the
positive point win; equal labels use their independent orientation coin.
This is a legal instance, regardless of all the choices.

Core membership has density $2^{m-n}$, but the input may itself be a core
point. The lower bound must therefore quantify the cost of discovering
\emph{other} core points and the ambiguity that remains after discovery.
It does not merely assume that uniform sampling is the only available
query strategy.

\subsection{Ambiguous runs}

Fix one advice string. On input $t_\beta$, define $\AR_\beta$ to be the
run of $D$ with the following modified answers: in a comparison between
$t_\beta$ and $t_\gamma$, $\gamma\ne\beta$, make $t_\beta$ lose when
$\lambda_\gamma=1$ and win when $\lambda_\gamma=0$. Answer all other
queries using $\tau$. These modifications define a fixed tournament
oracle, so the query bound still applies.

Let $E_\beta$ be the other buckets whose actual core point occurs as an
endpoint of any query in $\AR_\beta$. Let $D_\beta\subseteq E_\beta$ be
the buckets actually compared with $t_\beta$, and let $o_\beta$ be the
output. These sets concern actual endpoint equalities, whether or not
the procedure can recognize a discovered core point.

The ambiguous run depends on neither $\lambda_\beta$ nor any incident
coin $c_{\beta\gamma}$. Its transcript coincides with the true transcript
whenever, for every $\gamma\in D_\beta$ with
$\lambda_\gamma=\lambda_\beta$, the coin $c_{\beta\gamma}$ points in the
ambiguous direction. When the labels differ, agreement is automatic.
Consequently, after fixing all variables except $\lambda_\beta$ and
its incident coins, the probability that the true run agrees with the
ambiguous run and outputs the wrong label is at least
\begin{equation}
 \tfrac12\,2^{-|D_\beta|}.
 \label{eq:ambiguous-error}
\end{equation}
Repeated queries of an edge do not create additional constraints.

\begin{lemma}[Discovering core points]
\label{lem:discovery}
For each bucket $\beta$,
\begin{equation}
 \Pr[|E_\beta|\ge k]\le(4q)^k2^{-k(n-m)}.
 \label{eq:discovery}
\end{equation}
In particular,
\begin{equation}
 \Pr[\max_\beta|E_\beta|\ge k]
 \le2^{m-k(n-m-\log(4q))}.
 \label{eq:discovery-union}
\end{equation}
\end{lemma}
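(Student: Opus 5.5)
We prove \eqref{eq:discovery} by exposing the ambiguous run $\AR_\beta$ one query at a time and bounding each individual discovery. Fix $\beta$ and condition on $w_\beta$, all labels, and all coins. The ambiguous answers then depend on the other tails only through which queried strings are core points. The remaining randomness is the independent uniform tails $w_\gamma$ for $\gamma\ne\beta$.

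The exposure proceeds as follows. At each query, reveal for each endpoint $u=(\gamma,w)$ with $\gamma\ne\beta$ whether $u=t_\gamma$. Suppose bucket $\gamma$ is still undiscovered. Its tail is then uniform over the tails not yet excluded by earlier negative endpoint tests on bucket $\gamma$. The run has at most $q$ queries, hence at most $2q$ endpoints, so at least $2^{n-m}-2q\ge2^{n-m-1}$ possibilities remain, using $4q\le2^{n-m}$. Hence each endpoint reveals a new core point with conditional probability at most $2^{1-(n-m)}$, whatever the history.

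The key observation is that the queries are adaptive, but the run is deterministic given the answers. We can therefore view each outcome as a sequence of at most $2q$ endpoint tests. In this view, $|E_\beta|\ge k$ requires $k$ of these tests to succeed. Fix in advance which $k$ positions among the $2q$ succeed; there are at most $\binom{2q}{k}\le(2q)^k$ choices. By the chain rule, the probability that all of those positions succeed is at most $(2^{1-(n-m)})^k$, because each bound holds conditionally on the past. Multiplying gives $(2q)^k2^{k}2^{-k(n-m)}=(4q)^k2^{-k(n-m)}$.

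One technical point needs care. A successful test changes the conditional distribution of that bucket, and later tests in the same bucket are not new discoveries. So we count only first discoveries per bucket, which keeps every counted event of the "undiscovered bucket" type.

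Finally, a union bound over the $2^m$ buckets gives \eqref{eq:discovery-union}, since $2^m(4q)^k2^{-k(n-m)}=2^{m-k(n-m-\log(4q))}$.

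The main obstacle is making the chain-rule step rigorous under adaptivity. Which endpoint is queried next depends on earlier discoveries. One could also worry that the ambiguous answers leak tail information, but they do not: comparisons between non-core or mixed vertices are determined by membership in $\Core$ and lexicographic order, which we have revealed. I would formalize this with a martingale or decision-tree argument. The tree is indexed by the pattern of revealed answers, and we sum over the success position sets along each branch.
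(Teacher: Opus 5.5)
Your proposal is correct and follows essentially the same route as the paper. It conditions on the labels, coins, and $w_\beta$, and exposes the at most $2q$ endpoint tests with the core-membership indicator adjoined. It observes that an undiscovered bucket's tail stays uniform over at least $2^{n-m}-2q\ge2^{n-m-1}$ values, so each new discovery has conditional probability at most $2^{1-(n-m)}$. It then bounds the probability of each fixed set of first-discovery slots by iterated conditioning, and finishes with a union bound over at most $(2q)^k$ such sets, averaging, and a union bound over the $2^m$ buckets.

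The chain-rule step you flag as the main obstacle goes through exactly as you describe. Each earlier slot event is determined by the history before the later slots, so the per-slot conditional bound multiplies.
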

\begin{proof}
Fix all labels and orientation coins, and fix the input tail $w_\beta$.
Expose the at most $2q$ query endpoints sequentially, augmenting the
observable transcript with whether each endpoint actually equals the
core point in its bucket. This augmentation is used only in the
analysis; it grants no information to the decider.

For an undiscovered bucket $\gamma\ne\beta$, every untried tail gives
the same previous augmented history: none of its points seen so far was
special. Conditional on this history, its hidden tail remains uniform
over at least
\[
 2^{n-m}-2q\ge2^{n-m-1}
\]
possibilities. The conditional probability that an endpoint discovers a
new bucket is therefore at most $2^{1-(n-m)}$. This remains true if a
query's two endpoints are exposed in either fixed order, because both
endpoints were chosen before their answers were given.

If at least $k$ different buckets are discovered, some $k$ endpoint slots
are new discoveries. For each fixed set of slots, iterated conditioning
bounds its probability by $2^{k(1-(n-m))}$. There are at most $(2q)^k$
choices. This proves \eqref{eq:discovery}, uniformly in the fixed labels,
coins, and input tail. Averaging and then taking a union bound over the
$2^m$ possible input buckets proves \eqref{eq:discovery-union}.
\end{proof}

\begin{lemma}[Few discoveries leave fresh label-and-edge choices]
\label{lem:fresh}
For every fixed choice of all tails,
\begin{equation}
 \Pr\bigl[D\text{ is correct on all of }\Core
       \ \text{and}\ \max_\beta|E_\beta|<k\bigr]
 \le(1-2^{-k})^{\lfloor2^m/k\rfloor}.
 \label{eq:fresh}
\end{equation}
\end{lemma}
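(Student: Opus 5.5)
Fix all tails once and for all. The remaining randomness consists of the labels $\lambda_\beta$ and the coins $c_{\beta\gamma}$, all independent and uniform. I will expose ambiguous runs in rounds and show that each round, given everything exposed before it, has conditional probability at least $2^{-k}$ of exhibiting either an error or a run with $|E_\beta|\ge k$. The rounds proceed as follows. We keep a set $U$ of \emph{used} buckets, initially empty. In round $r$ we pick the lexicographically least bucket $\beta_r\notin U$ and expose $\AR_{\beta_r}$. Because the tails are fixed, this run is determined by the labels $\lambda_\gamma$ of the buckets $\gamma\in E_{\beta_r}$ and by the coins among buckets other than $\beta_r$ that it actually queries. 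In the opposite direction, it does not read $\lambda_{\beta_r}$ or any coin incident to $\beta_r$. We reveal these queried variables adaptively, query by query. If $|E_{\beta_r}|\ge k$ we stop, since the target event has already failed. Otherwise we add $\beta_r$ and $E_{\beta_r}$ to $U$, so at most $k$ buckets are added. We then reveal $\lambda_{\beta_r}$ and the coins $c_{\beta_r\gamma}$ for $\gamma\in D_{\beta_r}$, together with all coins queried by the run. This keeps the invariant from the sketch: every exposed label and both endpoints of every exposed coin lie in $U$.

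The core step is the freshness claim. At the start of round $r$, $\beta_r\notin U$, so by the invariant $\lambda_{\beta_r}$ and all coins incident to $\beta_r$ are still unexposed, uniform, and independent of the history. The ambiguous run may expose labels and coins of other, new buckets, but never those of $\beta_r$. Hence, conditioned on the whole history through the exposure of $\AR_{\beta_r}$, the variables $\lambda_{\beta_r}$ and $(c_{\beta_r\gamma})_{\gamma\in D_{\beta_r}}$ remain fresh. As in \eqref{eq:ambiguous-error}, the true run on $t_{\beta_r}$ coincides with $\AR_{\beta_r}$ and outputs $o_{\beta_r}\ne\lambda_{\beta_r}$ with conditional probability at least $\tfrac12 2^{-|D_{\beta_r}|}\ge 2^{-k}$, using $|D_{\beta_r}|\le k-1$. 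On that event $D$ errs on the core point $t_{\beta_r}$. Moreover $o_{\beta_r}$ and $D_{\beta_r}$ are fixed before these fresh variables are revealed, so the bound is legitimate; since labels differing from $\lambda_{\beta_r}$ need no coin, it is in fact only an underestimate. One subtlety needs care here. The coins queried by the run have both endpoints in $E_{\beta_r}\cup(\text{noncore})$, so they are not incident to $\beta_r$ and the invariant is preserved.

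It remains to count rounds and chain the estimates. Each completed round grows $|U|$ by at most $k$, so at least $\lfloor 2^m/k\rfloor$ rounds can be started before $U$ exhausts all $2^m$ buckets. On the target event (all core inputs correct and $\max_\beta|E_\beta|<k$), every round is completed without error. By the previous paragraph, each round avoids error with conditional probability at most $1-2^{-k}$ given the past, and if a round hits $|E|\ge k$ the event has already failed. Multiplying these conditional probabilities along the adaptive exposure gives \eqref{eq:fresh}. I expect the main obstacle to be the rigorous bookkeeping of this adaptive conditioning, in particular showing that the choice of the next bucket and the information revealed so far never depend on $\lambda_{\beta_r}$ or its incident coins. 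I would handle it by formalizing the exposure as a stopping-time martingale on the filtration generated by revealed variables, and verifying the invariant inductively.
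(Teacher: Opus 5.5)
Your proposal is correct and is essentially the paper's proof: the same round-by-round exposure with a used-bucket set $U$ and the invariant that exposed labels and both endpoints of exposed coins lie in $U$; the same freshness of $\lambda_\beta$ and its incident coins after simulating $\AR_\beta$; the same per-round bound $\tfrac12 2^{-|D_\beta|}\ge2^{-k}$; and the same count of $\lfloor 2^m/k\rfloor$ startable rounds, combined by multiplying conditional survival probabilities. The only slip is cosmetic: you mention ``noncore'' coin endpoints, but coins exist only between buckets, and those read by $\AR_{\beta_r}$ have both endpoints in $E_{\beta_r}$.
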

\begin{proof}
Expose labels and coins in rounds, maintaining a set $U$ of used buckets,
initially empty. The invariant is that every already exposed label
belongs to $U$, and both endpoints of every already exposed coin belong
to $U$. No global success event is conditioned on during this process.

In a round choose the first bucket $\beta\notin U$. Simulate
$\AR_\beta$, exposing the labels of discovered buckets and the coins
among them as needed. Stop unsuccessfully for the event in
\eqref{eq:fresh} if $|E_\beta|\ge k$. The simulation has not read
$\lambda_\beta$ or any coin incident with $\beta$.

If the simulation finishes with fewer than $k$ discoveries, expose
$\lambda_\beta$ and the coins $c_{\beta\gamma}$ for
$\gamma\in D_\beta$. All are still independent fair variables,
conditional on this round's preceding history: $\beta$ was unused and
none of these variables was needed to simulate the ambiguous run.
With conditional probability at least
\[
 \tfrac12\,2^{-|D_\beta|}\ge2^{-k},
\]
the output is wrong and every needed equal-label coin points in the
ambiguous direction. In that event the true run errs, so the event in
\eqref{eq:fresh} fails. One may equivalently demand the ambiguous
direction from every exposed incident coin, a possibly stronger event
with the same lower bound.

Otherwise add $\{\beta\}\cup E_\beta$ to $U$. At most $k$ buckets are
added, and the exposure invariant is preserved. Thus, unless an earlier
round has already excluded the event, there is an unused bucket for each
of $\lfloor2^m/k\rfloor$ rounds. Conditional on surviving earlier rounds,
the probability of surviving the next is at most $1-2^{-k}$; early
termination due to many discoveries only reduces it. Multiplication
proves the bound.
\end{proof}

\begin{proof}[Proof of \cref{thm:finite-lower}]
For one fixed advice string, \cref{lem:discovery,lem:fresh} bound the
probability that it decides the entire instance by
\begin{equation}
 2^{m-k(n-m-\log(4q))}
 +(1-2^{-k})^{\lfloor2^m/k\rfloor}.
 \label{eq:fixed-advice-probability}
\end{equation}
The first term is at most $2^{-\ell-3}$ by
\eqref{eq:lower-condition-one}. The second is at most
\[
 \exp\bigl(-\lfloor2^m/k\rfloor2^{-k}\bigr)
 =2^{-\lfloor2^m/k\rfloor2^{-k}\log e}
 \le2^{-\ell-3}
\]
by \eqref{eq:lower-condition-two}. There are
$\sum_{j=0}^{\ell}2^j=2^{\ell+1}-1$ eligible advice strings. A union
bound makes the probability that any succeeds strictly smaller than
$2^{\ell+1}\cdot2^{-\ell-2}=1/2$.
\end{proof}

\subsection{Quadratic asymptotics and the query tradeoff}

\begin{corollary}[Quadratic advice for polynomially many queries]
\label{cor:quadratic}
There is an absolute constant $C$ such that, whenever
$1\le q\le2^{n-4}$ and
\[
 B(n,q)=\left\lfloor\frac{(n-\log q)^2}{4}-Cn\log(n+2)\right\rfloor
 \ge0,
\]
every deterministic $q$-query black-box decider has a legal instance
that no advice of length at most $B(n,q)$ decides. For
$q=\poly(n)$ this gives $n^2/4-O_q(n\log n)$ necessary advice bits.
For $q\le2^{\varepsilon n}$ with fixed $0<\varepsilon<1$, the bound is
$(1-\varepsilon)^2n^2/4-O(n\log n)$.
\end{corollary}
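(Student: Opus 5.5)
This corollary is a parameter instantiation of \cref{thm:finite-lower}. Given $n$, $q$ and a target $\ell=B(n,q)$, it suffices to exhibit integers $m,k$ with $1\le m<n$, $k\ge1$ and $4q\le2^{n-m}$ satisfying \eqref{eq:lower-condition-one} and \eqref{eq:lower-condition-two}. Then a random instance defeats all advice of length at most $\ell$ with probability exceeding $1/2$, so some fixed legal instance is decided by none. I would write $L=n-\log q$ and take
\[
 m=\lfloor L/2\rfloor,\qquad k=m-\lceil3\log(n+2)\rceil-2,
\]
mirroring the main-text accounting. The hypothesis $B(n,q)\ge0$, with $C$ large, forces $L^2/4\ge Cn\log(n+2)$, so $L\ge2\sqrt{C n\log(n+2)}$. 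This makes $m$ and $k$ comfortably positive and lets $C$ absorb all finite-range exceptions.

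For \eqref{eq:lower-condition-two}, observe that $2^m/k\ge 2^{m}/m$ and $2^{-k}=2^{-m}\cdot2^{\lceil3\log(n+2)\rceil+2}\ge 4(n+2)^3 2^{-m}$. Hence $\lfloor2^m/k\rfloor2^{-k}\gtrsim (n+2)^3/m\ge (n+2)^2$, after handling the floor using $2^m/k\ge2$. Since $\ell\le n^2/4$, the bound $(n+2)^2\log e/2\ge \ell+3$ suffices. I would check the floor loss carefully, bounding $\lfloor y\rfloor\ge y/2$ for $y\ge1$.

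For \eqref{eq:lower-condition-one}, compute $n-m-\log(4q)=L-m-2\ge L/2-2$. This gives
\[
 k(n-m-\log(4q))-m\ge (L/2-3\log(n+2)-4)(L/2-2)-L/2\ge L^2/4-O(L\log(n+2)),
\]
and since $L\le n$ the error term is $O(n\log(n+2))$. The condition $4q\le2^{n-m}$ reads $L-m\ge2$, which holds once $L\ge4$. That follows from $q\le2^{n-4}$ in the stated range, together with the largeness of $L$ already forced.

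The main obstacle is purely bookkeeping. One must make a single absolute $C$ cover every admissible $(n,q)$, including small $L$ where $k$ could be nonpositive. My approach is that $B\ge0$ already excludes those cases once $C$ exceeds a fixed threshold, so nonvacuous instances automatically have $L$ of order $\sqrt{n\log n}$. The asymptotic forms follow by substitution. For $q=\poly(n)$, we have $\log q=O(\log n)$, giving $(n-O(\log n))^2/4=n^2/4-O_q(n\log n)$. For $q\le2^{\varepsilon n}$, we have $L\ge(1-\varepsilon)n$, and $B$ is monotone decreasing in $q$.
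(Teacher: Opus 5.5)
Your proposal is correct and follows essentially the same route as the paper. You plug $m=\lfloor(n-\log q)/2\rfloor$ and $k=m-\Theta(\log n)$ into \cref{thm:finite-lower}, use the same $\lfloor y\rfloor\ge y/2$ estimate for \eqref{eq:lower-condition-two} and the same expansion for \eqref{eq:lower-condition-one}, and enlarge $C$ so that the excluded range is vacuous. Your use of $\lceil3\log(n+2)\rceil$ in place of $\lceil3\log n\rceil$, and your explicit derivation of $n-\log q\ge2\sqrt{Cn\log(n+2)}$ from $B(n,q)\ge0$ to guarantee $k\ge1$, only make the paper's bookkeeping slightly more explicit.
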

\begin{proof}
Write $d=n-\log q$ and take
\[
 m=\lfloor d/2\rfloor,\qquad
 h=\lceil3\log n\rceil+2,\qquad k=m-h.
\]
We verify the nonvacuous range $k\ge1$; by increasing the absolute
constant $C$, all excluded parameters have $B(n,q)<0$.
Since $d-m\ge d/2\ge2$, we have $4q\le2^{n-m}$.
Let
\[
 \ell_0=\left\lfloor k(d-m-2)-m-3\right\rfloor.
\]
Then \eqref{eq:lower-condition-one} holds for $\ell=\ell_0$, and
\[
 \ell_0\ge d^2/4-O(n\log n).
\]
For example, expand
$(m-h)(d-m-2)=m(d-m)-2m-h(d-m-2)$ and use
$m(d-m)\ge d^2/4-1$, $h=O(\log n)$, and $d\le n$.
If $\ell_0<0$ there is again no nonnegative lower bound to assert.

For $k\ge1$ the elementary inequality
$\lfloor2^m/k\rfloor\ge2^m/(2k)$ gives
\[
 \left\lfloor\frac{2^m}{k}\right\rfloor2^{-k}\log e
 \ge\frac{2^h}{2k}\log e
 \ge\frac{4n^3}{n}\log e=4n^2\log e,
\]
where $k\le m\le n/2$. Meanwhile $\ell_0+3\le n^2/4$.
Thus \eqref{eq:lower-condition-two} holds, and
\cref{thm:finite-lower} applies. The two displayed special cases follow
by substitution. Any smaller nonnegative advice bound also fails.
\end{proof}

\subsection{Finite randomized tradeoff and its optimization}
\label{sec:randomized-lower}
The complete lifting lemma is \cref{m:lift} in the main text. Here we record
its finite-parameter consequences and check the quadratic asymptotics.

For the hidden-core distribution of \cref{sec:core}, the discovery and
fresh-variable lemmas imply, for every fixed deterministic $Q$-query
procedure, the bound
\begin{equation}
 \Delta(n,m,k,Q)=
 2^{m-k(n-m-\log(4Q))}
 +(1-2^{-k})^{\lfloor2^m/k\rfloor}
 \label{eq:randomized-delta}
\end{equation}
on the probability of correctness on all inputs, provided
$4Q\le2^{n-m}$. In fact they bound the larger event of correctness on
all core points. Crucially, $\Delta$ is independent of the fixed
procedure, its advice, and any hardwired random tape.

\begin{theorem}[Randomized finite query--advice tradeoff]
\label{thm:randomized-finite}
Let $q,k\ge1$, $\ell\ge0$, and $1\le m<n$ be integers, and put
$Q=t_nq$, where $t_n=12(n+2)+1$. Suppose $4Q\le2^{n-m}$ and
\begin{align}
 \ell+4&\le k(n-m-\log(4Q))-m,
 \label{eq:randomized-condition-one}\\
 \ell+4&\le\left\lfloor\frac{2^m}{k}\right\rfloor2^{-k}\log e.
 \label{eq:randomized-condition-two}
\end{align}
For every randomized black-box decider with query bound $q$ and any
finite coin bound, a random instance from the hidden-core distribution
has no bounded-error correct ordinary advice string of length at most
$\ell$, with probability strictly greater than $2/3$.
\end{theorem}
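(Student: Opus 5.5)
The plan is to combine the uniform deterministic estimate of \cref{m:delta}, applied at the amplified query bound $Q=t_nq$, with the lifting \cref{m:lift} and a union bound over advice strings.

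\textbf{Step 1: the deterministic bound at $Q$ queries.} Fix a randomized $q$-query decider $M$ with any finite coin bound. Every deterministic procedure making at most $Q$ queries is covered by \cref{lem:discovery,lem:fresh}, applied with $q$ replaced by $Q$; this is legitimate because the hypothesis $4Q\le2^{n-m}$ is exactly the range those lemmas need. Summing the two lemmas, such a procedure decides the whole hidden-core instance with probability at most $\Delta(n,m,k,Q)$ from \eqref{eq:randomized-delta}. This holds uniformly in the procedure's description, its hardwired advice, and any hardwired tape, which is exactly the hypothesis \eqref{m:eq-lift-hyp} of \cref{m:lift} with $\delta=\Delta$.

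\textbf{Step 2: making $\Delta$ small.} I show $\Delta\le2^{-\ell-3}$ by bounding each term by $2^{-\ell-4}$.
\begin{itemize}
\item For the first term, \eqref{eq:randomized-condition-one} gives $m-k(n-m-\log(4Q))\le-\ell-4$.
\item For the second term, use $(1-2^{-k})^{N}\le\exp(-N2^{-k})=2^{-N2^{-k}\log e}$ with $N=\lfloor2^m/k\rfloor$. Then \eqref{eq:randomized-condition-two} gives at most $2^{-\ell-4}$.
\end{itemize}
This is the same calculation as in the proof of \cref{thm:finite-lower}, with one extra bit of slack to absorb the factor $4/3$ in Step 3.

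\textbf{Step 3: lifting and the union bound.} For each fixed advice string $a$, \cref{m:lift} gives
\[
\Pr_I[a\text{ is bounded-error correct for }M]\le\tfrac43\,2^{-\ell-3}.
\]
There are $2^{\ell+1}-1<2^{\ell+1}$ strings of length at most $\ell$. A union bound shows that some such string is bounded-error correct with probability less than $2^{\ell+1}\cdot\tfrac43\cdot2^{-\ell-3}=1/3$. So with probability strictly greater than $2/3$, no such advice exists, which is the claim.

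\textbf{Where care is needed.} The main point is checking that \cref{m:lift} applies without any dependence on the coin length. Two things must hold. First, the amplified procedure $D_{a,R}$ with a fixed concatenated tape $R$ must be a deterministic procedure with at most $t_nq=Q$ queries. Second, $\Delta$ must be independent of that procedure, so the bound survives averaging over $R$. Both hold by construction. The only genuinely new input compared with the deterministic theorem is the slack: replacing $\ell+3$ by $\ell+4$ so that the final union-bound total is $1/3$.
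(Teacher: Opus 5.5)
Your proposal is correct and follows the paper's proof essentially step for step. Like the paper, you bound each summand of $\Delta(n,m,k,Q)$ at the amplified query count $Q=t_nq$ by $2^{-\ell-4}$ to get $\Delta\le2^{-\ell-3}$, apply \cref{m:lift} to each fixed advice string with $\delta=\Delta$, and take a union bound over the $2^{\ell+1}-1$ strings to reach a total strictly below $2^{\ell+1}\cdot\frac43\cdot2^{-\ell-3}=\frac13$.
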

\begin{proof}
The first summand of \eqref{eq:randomized-delta} is at most
$2^{-\ell-4}$ by \eqref{eq:randomized-condition-one}. The second is at
most
\[
 \exp\!\left(-\left\lfloor2^m/k\right\rfloor2^{-k}\right)
 \le2^{-\ell-4}
\]
by \eqref{eq:randomized-condition-two}. Thus
$\Delta\le2^{-\ell-3}$. For each fixed advice, apply
\cref{m:lift} with $\delta=\Delta$. A union bound over
the $2^{\ell+1}-1$ advice strings gives
\begin{align*}
 \Pr_I[\exists a,\ |a|\le\ell:\ a\text{ is bounded-error correct}]
 &\le(2^{\ell+1}-1)\frac43\Delta\\
 &<2^{\ell+1}\frac43\,2^{-\ell-3}=\frac13.
\end{align*}
Only advice strings are counted; the number of random tapes does not
occur in the estimate.
\end{proof}

\begin{corollary}[Quadratic randomized advice, independent of coin length]
\label{cor:randomized-quadratic}
There is an absolute constant $C_{\rm R}$ such that, whenever
$1\le q\le2^{n-4}$ and
\begin{equation}
 B_{\rm R}(n,q)=
 \left\lfloor\frac{(n-\log q)^2}{4}
       -C_{\rm R}n\log(n+2)\right\rfloor\ge0,
 \label{eq:randomized-quadratic}
\end{equation}
every randomized $q$-query black-box decider, with any finite coin bound,
has a legal instance on which no ordinary advice of length at most
$B_{\rm R}(n,q)$ gives error at most $1/3$ on every input.
For polynomial $q$ the lower bound is $n^2/4-O_q(n\log n)$.
For $q\le2^{\varepsilon n}$, where $0<\varepsilon<1$ is fixed, it is
$(1-\varepsilon)^2n^2/4-O(n\log n)$.
\end{corollary}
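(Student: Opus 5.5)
The plan is to invoke \cref{thm:randomized-finite} with explicit parameters and check that its two finite conditions hold with $\ell=B_{\rm R}(n,q)$, mirroring the deterministic optimization in \cref{cor:quadratic}. The only change is that the query bound becomes $Q=t_nq$ with $t_n=12(n+2)+1$.

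\textbf{Choice of parameters.} Put $d=n-\log Q$. Since $\log t_n=\log n+O(1)$, we have $d=n-\log q-O(\log n)$. Choose
\[
 m=\lfloor d/2\rfloor,\qquad h=\lceil3\log n\rceil+2,\qquad k=m-h .
\]
First we restrict to the nonvacuous range, where $k\ge1$ and $d\ge4$; this is the step needing care. The hypothesis $q\le2^{n-4}$ alone does not give $Q\le2^{n-4}$, because of the factor $t_n$. The fix is to enlarge $C_{\rm R}$ so that $(n-\log q)^2/4-C_{\rm R}n\log(n+2)<0$ whenever $d<2h+4$. Indeed, $d<2h+4$ forces $n-\log q=O(\log n)$, so $(n-\log q)^2/4=O(\log^2 n)$, which is below $C_{\rm R}n\log(n+2)$ for a large absolute constant. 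Thus whenever $B_{\rm R}\ge0$, we have $d\ge2h+4$. This gives $k\ge1$, $d-m\ge d/2\ge2$, and hence $4Q\le2^{n-m}$.

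\textbf{Condition \eqref{eq:randomized-condition-one}.} Set $\ell_0=\lfloor k(d-m-2)-m-4\rfloor$. Expand
\[
 (m-h)(d-m-2)=m(d-m)-2m-h(d-m-2),
\]
and use $m(d-m)\ge d^2/4-1$ together with $h=O(\log n)$, $d\le n$. This gives $\ell_0\ge d^2/4-O(n\log n)$. Then replace $d$ by $n-\log q-O(\log n)$: the cross term $(n-\log q)\cdot O(\log n)$ is $O(n\log n)$. Hence $\ell_0\ge(n-\log q)^2/4-O(n\log n)$, so choosing $C_{\rm R}$ large gives $B_{\rm R}\le\ell_0$.

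\textbf{Condition \eqref{eq:randomized-condition-two}.} Use $\lfloor2^m/k\rfloor\ge2^m/(2k)$. Then
\[
 \lfloor 2^m/k\rfloor 2^{-k}\log e\ \ge\ 2^h/(2k)\cdot\log e\ \ge\ 4n^3/n\cdot\log e .
\]
This is at least $4n^2\log e$, which exceeds $n^2/4+4\ge B_{\rm R}+4$.

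\textbf{Conclusion.} Condition \eqref{eq:randomized-condition-one} holds for $\ell_0$, and therefore for any smaller nonnegative $\ell$, in particular $\ell=B_{\rm R}(n,q)$. So \cref{thm:randomized-finite} applies. With probability above $2/3$ a hidden-core instance admits no bounded-error correct advice of length at most $B_{\rm R}$, and in particular such an instance exists.

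The special cases follow by substitution. For polynomial $q$, $\log q=O(\log n)$, giving $n^2/4-O_q(n\log n)$. For $q\le2^{\varepsilon n}$, $(n-\log q)^2\ge(1-\varepsilon)^2n^2$.

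The main obstacle is the bookkeeping at the boundary of the range. The factor $t_n$ costs $O(\log n)$ in $d$, and we must show that every excluded case is absorbed by making $B_{\rm R}$ negative through the choice of $C_{\rm R}$.
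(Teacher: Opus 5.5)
Your proposal is correct and follows the paper's proof essentially step for step. You use the same parameters $d=n-\log(t_nq)$, $m=\lfloor d/2\rfloor$, $h=\lceil3\log n\rceil+2$, $k=m-h$ and $\ell_0=\lfloor k(d-m-2)-m-4\rfloor$, the same two condition checks, and the same absorption of the excluded range into $C_{\rm R}$. Your cutoff $d<2h+4$ is slightly stronger than the paper's ``$d'<4$ or $k<1$'', which changes nothing.
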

\begin{proof}
Set $Q=t_nq$, $d'=n-\log Q$, and choose
\[
 m=\lfloor d'/2\rfloor,\qquad
 h=\lceil3\log n\rceil+2,\qquad k=m-h.
\]
In the nonvacuous range take
\[
 \ell_0=\left\lfloor k(d'-m-2)-m-4\right\rfloor.
\]
As in the proof of \cref{cor:quadratic}, the first finite condition
holds by definition, while
\[
 \left\lfloor\frac{2^m}{k}\right\rfloor2^{-k}\log e
 \ge4n^2\log e>\ell_0+4.
\]
The query condition follows from $d'\ge4$. These statements apply when
$k\ge1$ and $\ell_0\ge0$; taking $C_{\rm R}$ sufficiently large makes
\eqref{eq:randomized-quadratic} negative in the excluded parameter range.
For completeness, $d'<4$ or $k<1$ implies $n-\log q=O(\log n)$,
which is absorbed by $C_{\rm R}n\log(n+2)$.
Finally,
\begin{align*}
 \ell_0
 &\ge\frac{(n-\log Q)^2}{4}-O(n\log n)\\
 &\ge\frac{(n-\log q)^2}{4}-O(n\log n),
\end{align*}
because $\log t_n=O(\log n)$ and $0\le n-\log q\le n$.
Increase $C_{\rm R}$ once more to ensure $B_{\rm R}(n,q)\le\ell_0$,
and apply \cref{thm:randomized-finite}. Substitution gives the two
special cases.
\end{proof}

\paragraph{Inverse-polynomial advantage.}
The same leading quadratic bound holds for pointwise success
$1/2+\gamma(n)$ when $1/\gamma(n)=\poly(n)$.
To see this, put $p=1/2-\gamma$. For $0<\gamma<1/2$, optimizing the
exponential-moment bound for $t$ independent repetitions gives
\[
 \Pr[\text{majority error}]
 \le\bigl(2\sqrt{p(1-p)}\bigr)^t
 =(1-4\gamma^2)^{t/2}\le e^{-2\gamma^2t}.
\]
Choose odd $t\ge(n+2)\ln2/(2\gamma^2)$ and repeat the transfer proof.
The amplified query count is $Q=O(qn/\gamma^2)$, so for polynomial
$q$ and inverse-polynomial $\gamma$ the loss remains $O(n\log n)$.
Exact success ($\gamma=1/2$) is already covered by the deterministic
bound. This observation concerns a specified uniform advantage on every
input, not arbitrary unbounded-error acceptance probabilities.

\section{The query-only constant and the single-core family}\label{app:constant}
\label{sec:constant}

For polynomial-time interpreters, the comparison remains the
$1/4$ hidden-core lower coefficient against Ko's coefficient $1$.
A smaller upper coefficient is proved below only in the unrestricted
internal-computation query model of \cref{m:model}.

\subsection{Ko's polynomial-time upper bound}

For comparison, Ko's upper bound has an elementary black-box
implementation using $n^2+n$ bits and at most $n$ queries
\cite{Ko83}. In a nonempty remaining positive subtournament, a
minimum-outdegree vertex is beaten by at least half the other vertices.
Store it and remove it together with all positive vertices that beat it.
The remaining count $s$ becomes at most $(s-1)/2$, so at most $n$
iterations exhaust at most $2^n$ positives. Store the selected vertices
in $n$ fields of $n$ bits and a presence mask of $n$ bits. Accept exactly
when the input equals or beats a present positive anchor.

\subsection{A half-quadratic universal family in the query model}
\label{sec:half}

\paragraph*{Model qualification.} Arbitrary finite oracle-independent
computation is allowed here. The universal family below is computable,
but its indexing is not shown polynomial-time. The theorem therefore does
not establish $\Psel\subseteq\Pclass/(n^2/2+O(n\log n))$ and does not
improve Ko's polynomial-time coefficient.

Call a tuple $S=(v_1,\ldots,v_k)$ \emph{valid} for $(\tau,P)$ if all
its entries lie in $P$, and every $x\in P$ either is an entry or beats
some entry. The empty tuple is valid exactly when $P$ is empty. Given
a valid tuple, the decoder accepts exactly when $x$ equals or beats an
entry. This rejects every negative input.

\begin{lemma}[Many sufficiently good greedy choices]
\label{lem:quantile}
In a tournament on $s\ge1$ vertices, let the outdegrees in increasing
order be $d_1\le\cdots\le d_s$. For $1\le h\le s$,
\begin{equation}
 d_h\le\frac{s+h-2}{2}.
 \label{eq:degree-quantile}
\end{equation}
In particular, for $h=\lceil s/(n+1)\rceil$ there are at least $h$
vertices whose outdegree is at most $\alpha s-1/2$, where
\[
 \alpha=\frac{n+2}{2(n+1)}.
\]
\end{lemma}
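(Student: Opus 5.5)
The plan is a counting argument on the $h$ smallest outdegrees, followed by an arithmetic check at the chosen $h$.

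\textbf{Quantile bound.} Let $A$ be the set of $h$ vertices with the smallest outdegrees $d_1,\dots,d_h$. The subtournament induced on $A$ has $\binom h2$ edges, and each of them leaves some vertex of $A$. Hence
\[
 \sum_{i\le h}d_i\ \ge\ \binom h2 .
\]
This inequality points the wrong way for an upper bound on $d_h$, so I bound the complement $B$ of the $s-h+1$ largest-outdegree vertices instead. Let $B=\{h,\dots,s\}$ in the sorted order. Every vertex of $B$ has outdegree at least $d_h$, so
\[
 (s-h+1)\,d_h\ \le\ \sum_{i\ge h}d_i .
\]
The edges counted on the right either lie inside $B$ or go from $B$ to the remaining $h-1$ vertices. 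Therefore
\[
 \sum_{i\ge h}d_i\ \le\ \binom{s-h+1}2+(s-h+1)(h-1).
\]
Dividing by $s-h+1$ gives
\[
 d_h\ \le\ \frac{s-h}2+h-1\ =\ \frac{s+h-2}2 ,
\]
which is \eqref{eq:degree-quantile}. The only care needed is that $s-h+1\ge1$, which holds because $h\le s$.

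\textbf{The particular choice of $h$.} Take $h=\lceil s/(n+1)\rceil$, so $1\le h\le s$. The vertices $1,\dots,h$ all have outdegree at most $d_h$, so it suffices to show $d_h\le\alpha s-1/2$. By the first part this reduces to
\[
 \frac{s+h-2}2\ \le\ \frac{(n+2)s}{2(n+1)}-\frac12 .
\]
Multiplying by $2$ and simplifying, this is equivalent to
\[
 h-1\ \le\ \frac{s}{n+1}.
\]
That holds because $h<s/(n+1)+1$.

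\textbf{Expected obstacle.} The main difficulty is only choosing the right counting set in the first step: one must bound the top $s-h+1$ vertices rather than the bottom $h$. The rest is routine arithmetic. I would also check that integrality causes no problems; none arises, since the required inequalities are real-valued and already hold.
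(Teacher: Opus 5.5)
Your proof is correct and is the paper's argument: bound the total outdegree of the $s-h+1$ largest-degree vertices by $\binom{s-h+1}{2}+(s-h+1)(h-1)$, divide by $s-h+1$, and then use $h<s/(n+1)+1$ for the particular choice of $h$. One wording slip: $B=\{h,\dots,s\}$ is itself the set of the $s-h+1$ largest-outdegree vertices, not the complement of that set (nor of $A$); the inequalities you actually use are right.
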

\begin{proof}
The $s-h+1$ largest-degree vertices have total outdegree at most
\[
 \binom{s-h+1}{2}+(s-h+1)(h-1).
\]
Their smallest degree is $d_h$. Divide by $s-h+1$ to obtain
\eqref{eq:degree-quantile}. Since $h\le s/(n+1)+1$, the last assertion
follows.
\end{proof}

\begin{lemma}[Uniform tuple success probability]
\label{lem:tuple-prob}
Define
\begin{equation}
 H_n=\binom{n+2}{2}+(n+2)\lceil\log(n+1)\rceil.
 \label{eq:H}
\end{equation}
Choose $K$ uniformly from $\{0,\ldots,n+2\}$ and then choose $K$
independent uniform vertices of $X$. For every legal instance, the
resulting tuple is valid with probability at least
\begin{equation}
 p_n=\frac{2^{-H_n}}{n+3}.
 \label{eq:tuple-prob}
\end{equation}
\end{lemma}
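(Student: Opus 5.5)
We lower-bound the probability of one specific favourable outcome: the greedy process of \cref{lem:quantile}, run with sufficiently good choices and stopped after exactly $K$ steps. Fix a legal instance with $|P|=s_0\le2^n$. If $P=\varnothing$, the outcome $K=0$ already gives probability $1/(n+3)\ge p_n$. Otherwise we run the following process on the positive subtournament. At step $i$ the residual positive set $R_{i-1}$ has size $s$, with $R_0=P$. Call a vertex $v\in R_{i-1}$ \emph{good} if its outdegree inside $R_{i-1}$ is at most $\alpha s-1/2$. After choosing $v_i$, set $R_i=R_{i-1}\setminus(\{v_i\}\cup\{u:u\to v_i\})$. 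This removes $v_i$ and everything that beats it, so it keeps only the out-neighbours of $v_i$ within $R_{i-1}$.

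Choosing a good $v_i$ gives $|R_i|\le\alpha s-1/2$, since $|R_i|$ is just the outdegree of $v_i$ in $R_{i-1}$. Every $x\in R_{i-1}\setminus R_i$ equals or beats $v_i$, and all entries lie in $P$. So once some $R_K=\varnothing$, the tuple $(v_1,\dots,v_K)$ is valid. By \cref{lem:quantile}, the number of good vertices at a step with residual size $s$ is at least $\lceil s/(n+1)\rceil$. A uniform vertex of $X$ is therefore good with probability at least $s/((n+1)2^n)$.

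First, the number of steps. Set $s_i=|R_i|$, so $s_{i}\le\alpha s_{i-1}-1/2$. I need the residual to be empty after at most $n+2$ steps. The key inequality is $\alpha^{n+2}2^n<1$. Indeed,
\[
 \alpha^{n+2}=2^{-(n+2)}\bigl(1+\tfrac1{n+1}\bigr)^{n+2}\le 2^{-(n+2)}e^{(n+2)/(n+1)}<2^{-n}.
\]
Hence $s_{n+2}<1$, so $s_{n+2}=0$, and the process stops at some first $K\le n+2$ with $s_K=0$. The outcome I count is that the random $K$ equals this stopping time and the successive uniform vertices are good at every step. The first factor has probability exactly $1/(n+3)$. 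This needs care, because $K$ depends on the choices made. I will instead fix the process as ``each step picks a good vertex uniformly among the good ones'' to define a deterministic count. Cleaner still, I sum over all sequences of good choices: the probability that $K=k$ and that $(v_1,\dots,v_k)$ is a good-choice sequence emptying the residual exactly at step $k$ is at least $\frac1{n+3}\Pr[\text{good choices emptying at time }k]$. The events for different $k$ are disjoint, and at least one $k\le n+2$ always occurs for good sequences. So it suffices to bound the probability that $n+2$ independent uniform draws are good at every step while the residual is nonempty, counting steps after emptying as free.

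Second, the main obstacle: showing that the product over steps $\prod_i s_{i-1}/((n+1)2^n)$ is at least $2^{-H_n}$. Because $s_{i-1}$ shrinks roughly like $\alpha^{i-1}2^n$, step $i$ contributes about $2^{-(i-1)}(n+1)^{-1}$ times a factor $(1+1/(n+1))^{i-1}\ge1$. The trouble is that actual residual sizes may be much smaller than this envelope, which would make good vertices rarer. I would handle this by conditioning on the realized sizes: the product $\prod_{i\le K}s_{i-1}/2^n$ together with $s_K=0$ still has to be bounded below. Using $s_{i-1}\ge1$ and $s_{i-1}\ge 2s_i$ roughly (from $s_i\le\alpha s_{i-1}$ with $\alpha\approx1/2$), I get $s_{i-1}\ge s_0\cdot$ nothing useful directly. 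So I would charge instead against the number of steps: $K\le \log_{1/\alpha}s_0+1$. The product $\prod s_{i-1}$ over a chain ending at $s_{K-1}\ge1$ with ratios at least $1/\alpha$ is at least $\alpha^{-\binom{K}{2}}$, giving $2^{-nK}\alpha^{-\binom K2}$. For $K\le n+2$ this is at least $2^{-\binom{n+2}{2}}$ up to the $(1+1/(n+1))$ corrections. The case of small $K$ needs checking against fewer factors of $2^{-n}$; I would verify that case by the same envelope. Finally, $(n+1)^{-K}\ge2^{-(n+2)\lceil\log(n+1)\rceil}$, which supplies the second summand of $H_n$, and the $1/(n+3)$ factor gives $p_n$.
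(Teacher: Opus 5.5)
Your proposal is correct and is essentially the paper's proof. It uses the same greedy process on low-outdegree vertices of the residual positive tournament, termination within $n+2$ steps from $2^n\alpha^{n+2}<1$, the reverse-shrinkage bound $s_j\ge\alpha^{-(k-1-j)}$ giving the per-path factor $((n+1)2^n)^{-k}\alpha^{-k(k-1)/2}$, and the factor $1/(n+3)$ from $K$. Several details need tightening:
\begin{itemize}
\item Your bound $e^{(n+2)/(n+1)}<4$ is false at $n=1$; check $2\cdot(3/4)^3<1$ directly there, as the paper does.
\item The step from per-path products to a probability is not conditioning on realized sizes. It is the identity $2^{-nk}=\Pr_{\mathrm{aux}}[\pi]\prod_j(g_j/2^n)$, with $g_j$ the number of good vertices at step $j$, summed over terminal paths $\pi$ of the process that picks uniformly among good vertices.
\item The $(1+1/(n+1))$ correction costs at most $(n+2)(\log e)/2<n+2$ bits, so it is absorbed by $\binom{n+2}{2}$, which exceeds $(n+2)(n-1)/2$ by exactly $n+2$.
\item Smaller $k$ follow because the exponent is monotone in $k$.
\end{itemize}
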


\begin{proof}
The empty-positive case follows by choosing $K=0$. For $P\ne\varnothing$,
consider an auxiliary adaptive process used only in the analysis.
On a nonempty residual positive tournament $R_j$ of size $s_j$, choose
uniformly among its $h_j=\lceil s_j/(n+1)\rceil$ smallest-outdegree
vertices, breaking degree ties canonically. Append the chosen vertex
$v_j$, and remove it and all residual vertices that beat it. The new
residual set is the out-neighborhood of $v_j$, so
\[
 s_{j+1}\le\alpha s_j-1/2\le\alpha s_j.
\]
Continue until no positive vertices remain. Every resulting tuple is valid.
Since
\[
 2^n\alpha^{n+2}
 =\tfrac14\left(1+\frac1{n+1}\right)^{n+2}<1,
\]
each path has length $k\le n+2$. The displayed inequality holds directly
for $n=1$, and for $n\ge2$ follows from
$(1+1/(n+1))^{n+2}<e^{4/3}<4$.

Fix a terminal path of length $k$, with residual sizes
$s_0,\ldots,s_{k-1}$. Its probability under the auxiliary process is
$\prod_jh_j^{-1}$. The probability of its vertex sequence under $k$
independent uniform draws is $N^{-k}$. Their ratio is
\[
 \prod_{j=0}^{k-1}\frac{h_j}{N}.
\]
The reverse shrinkage inequalities and $s_{k-1}\ge1$ give
$s_j\ge\alpha^{-(k-1-j)}$. Hence the ratio is at least
\begin{equation}
 ((n+1)N)^{-k}\alpha^{-k(k-1)/2}=2^{-F(k)},
 \qquad F(k)=k(n+\log(n+1))-\frac b2k(k-1),
 \label{eq:path-ratio}
\end{equation}
where $b=\log(1/\alpha)\in(0,1)$.
For $0\le k\le n+1$,
\[
 F(k+1)-F(k)=n+\log(n+1)-bk\ge\log(n+1)-1\ge0.
\]
Thus $F(k)\le F(n+2)$. Furthermore,
\begin{align*}
 F(n+2)
 &= (n+2)\left(\frac{n-1}{2}+\log(n+1)
       +\frac{n+1}{2}\log\left(1+\frac1{n+1}\right)\right)\\
 &\le(n+2)\left(\frac{n-1}{2}+\log(n+1)+\frac{\log e}{2}\right)
 \le H_n.
\end{align*}
Here $(\log e)/2<1$.

For every terminal path, the independent-tuple probability, including
its length choice, is at least $2^{-H_n}/(n+3)$ times its auxiliary
probability. Sum over all terminal paths; their auxiliary probabilities
sum to one. This proves \eqref{eq:tuple-prob}. The auxiliary choices
may depend on the instance, but the actual sampling distribution does not.
\end{proof}

\begin{theorem}[Half-quadratic advice with linear query complexity]
\label{thm:half}
There is a single deterministic black-box decoder, with unrestricted
finite internal computation, that decides every legal length-$n$
tournament instance using at most $n+2$ comparisons and ordinary advice
of length
\begin{equation}
 b_n=H_n+2n+\lceil\log(n+3)\rceil+2
     =\frac{n^2}{2}+O(n\log n).
 \label{eq:half-advice}
\end{equation}
\end{theorem}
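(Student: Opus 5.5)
The plan is a probabilistic universal-family argument built on \cref{lem:tuple-prob}. There are finitely many legal length-$n$ instances, at most $2^{\binom{N}{2}}\cdot 2^{N}$ with $N=2^n$. By \cref{lem:tuple-prob}, a random tuple is valid for each fixed instance with probability at least $p_n=2^{-H_n}/(n+3)$.

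First I fix a family size $T$ and draw $T$ independent random tuples $S_1,\dots,S_T$ from the distribution of \cref{lem:tuple-prob}. For a fixed instance, the probability that no $S_i$ is valid is at most $(1-p_n)^T\le\exp(-Tp_n)$. A union bound over all instances requires $Tp_n>\ln(\#\text{instances})$. Since $\ln 2^{N^2}=O(4^n)$, it suffices to take $T=\lceil (n+3)2^{H_n}\cdot 4^{n}\rceil$, up to a constant. Then some family of $T$ tuples is universal, meaning that every legal instance has a valid tuple in it.

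Next I make the family canonical. I fix an ordering on families of $T$ tuples over $X$ and let $\mathcal F_n$ be the lexicographically first universal family. Deciding universality is a finite, oracle-independent computation: enumerate all tournaments and positive sets and check validity. So the decoder can reconstruct $\mathcal F_n$ from $n$ alone, at unbounded but finite internal cost.

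The advice is the index of a valid tuple in $\mathcal F_n$, taking $\lceil\log T\rceil$ bits. Note that $\log T\le H_n+\log(n+3)+2n+O(1)$, which is where the $2n+\lceil\log(n+3)\rceil+2$ terms of \eqref{eq:half-advice} come from. I may need to tune $T$ to a power of two, or absorb the constant, to match the stated $b_n$ exactly.

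The decoder works as follows. It reconstructs $\mathcal F_n$, reads the indexed tuple $(v_1,\dots,v_k)$ with $k\le n+2$, and accepts exactly when $x$ equals some $v_i$ or $x\to v_i$. Equality is checked internally, so it uses at most $n+2$ comparisons. Correctness follows from validity:
\begin{itemize}
\item every positive input is an entry or beats one;
\item a negative input cannot equal an entry, since all entries are positive;
\item a negative input cannot beat a positive entry, since positives beat negatives.
\end{itemize}

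The main obstacle is the exact constant bookkeeping in $b_n$. I must bound the number of legal instances sharply enough that $\log\ln(\#\text{instances})\le 2n+1$. That bound is $\ln\bigl(2^{N(N-1)/2}\cdot 2^{N}\bigr)<N^2=4^n$, which suffices. Everything else is the routine probabilistic-method union bound.
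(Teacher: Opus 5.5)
Your proposal is correct and follows the paper's proof. Both arguments take a union bound over the at most $2^{N^2}$ legal instances using the validity probability $p_n$ from \cref{lem:tuple-prob}, fix the lexicographically first universal family found by finite oracle-independent search, and advise an index into it; the decoder then tests $x=v_i$ or $x\to v_i$ against at most $n+2$ entries.

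The bookkeeping differs only cosmetically:
\begin{itemize}
\item The paper draws exactly $M=2^{b_n}$ tuples, so $Mp_n\ge 4N^2$ and the index has exactly $b_n$ bits.
\item Your $T=(n+3)2^{H_n}4^n$ gives $Tp_n=N^2>\ln 2^{N^2}$, which already suffices. The index then has $H_n+2n+\lceil\log(n+3)\rceil\le b_n$ bits, which you can pad to length $b_n$.
\end{itemize}
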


\begin{proof}
There are at most $2^{\binom N2+N}\le2^{N^2}$ legal instances.
Draw $M=2^{b_n}$ independent tuples from the distribution of
\cref{lem:tuple-prob}. A fixed instance receives no valid tuple with
probability at most $e^{-Mp_n}$. The choice of $b_n$ ensures
$Mp_n\ge4N^2$. A union bound over all instances gives total failure
probability at most
\[
 2^{N^2}e^{-4N^2}<1.
\]
Thus there exists an ordered family of $M$ tuples containing a valid
tuple for every legal instance.

Fix, for every $n$, the lexicographically first such family. It depends
only on $n$. A uniform machine can compute it by exhaustive search over
finite families and legal instances; testing this universal property is
a finite computation over internally represented tournaments, not a
query to the actual input oracle. The computation terminates by the
existence argument. Advise the $b_n$-bit index of a valid tuple.
Reconstruct it and test whether the input equals or beats an entry.
There are at most $n+2$ entries, so this uses at most $n+2$ actual
oracle comparisons. Correctness follows from validity.
\end{proof}

Together with \cref{cor:randomized-quadratic}, the query-only leading-constant gap is consequently
\[
 \frac14n^2-O(n\log n)
 \quad\hbox{versus}\quad
 \frac12n^2+O(n\log n).
\]
The lower bound continues to allow bounded-error randomness; the upper
bound is deterministic. Efficient indexing of a covering family with
these parameters would be an additional result, not a consequence of
the counting argument.

\subsection{The single-core family on typical instances}

The following statement specifies the sense in which this distribution
matches the lower bound. It does not bound the advice complexity of
\emph{every} instance in its support.

\begin{proposition}[Typical-instance upper bound]
\label{prop:typical}
Fix an integer $d\ge1$. For all sufficiently large $m$, a deterministic
black-box interpreter decides an instance from \cref{sec:core}, using
\begin{equation}
 (m+d)(n-m)+4m+O(1)
 \label{eq:typical-advice}
\end{equation}
ordinary advice bits and $O(m)$ queries, with probability at least
$1-2^{-d}-\exp(-\Omega(m))$ over the instance. Here $m$ is a specified
function of $n$. For even $n$ and $m=n/2$, the bound is
$n^2/4+O_d(n)$.
\end{proposition}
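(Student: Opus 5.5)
The plan follows the recipe of \cref{m:intro}: the advice lists the first $m+d$ positive tails, found with the help of a short mask. Let the positive buckets in increasing order be $\beta_1<\beta_2<\cdots$. Since the labels $\lambda_\beta$ are independent fair bits, a Chernoff bound shows that the first $4m$ buckets contain at least $m+d$ positive ones, except with probability $\exp(-\Omega(m))$. The advice will consist of three parts:
\begin{itemize}
\item a $4m$-bit label mask for these buckets;
\item the tails $w_{\beta_1},\ldots,w_{\beta_{m+d}}$ of the first $m+d$ positive ones, costing $(m+d)(n-m)$ bits;
\item $O(1)$ flag bits for the failure modes.
\end{itemize}
The bucket indices are recovered from the mask, so only tails are stored. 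This gives \eqref{eq:typical-advice}.

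The interpreter reconstructs the anchors $a_i=t_{\beta_i}$ for $i\le m+d$, which are known to be positive core points. It accepts $x$ exactly when $x$ equals or beats some anchor, which takes $m+d=O(m)$ queries. Soundness is immediate, since beating a positive vertex forces positivity. Every noncore point loses to all core points, so noncore inputs and negative core points never beat an anchor: a negative core point loses to every positive one. Hence all negatives are rejected correctly.

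The remaining task is completeness. A positive $t_\gamma$ that is not an anchor must beat at least one anchor. Its edges to the anchors are given by independent fair coins $c_{\gamma\beta_i}$, because the labels are equal. So $t_\gamma$ fails with probability $2^{-(m+d)}$, and a union bound over at most $2^m$ positive buckets gives failure probability at most $2^{-d}$. The point requiring care, and which I expect to be the main obstacle, is the conditioning. The anchors are selected using the labels, and the coins are independent of both labels and tails. So conditional on all labels, the coins between each non-anchor positive bucket and the fixed anchor set remain independent and uniform, and the union bound is legitimate. Adding the Chernoff failure gives total failure at most $2^{-d}+\exp(-\Omega(m))$.

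Finally, take $m$ to be $n/2$ for even $n$ (or $\lfloor n/2\rfloor$ in general). Then $(m+d)(n-m)=n^2/4+dn/2$ and $4m=2n$, so the advice is $n^2/4+O_d(n)$, as claimed. The Chernoff constant only needs mean $2m\ge 2(m+d)/1.5$ for large $m$.
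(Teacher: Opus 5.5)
Your proposal is correct and follows the paper's proof essentially step for step. The shared steps are a $4m$-bit label mask that locates the first $m+d$ positive tails, a binomial tail bound ensuring the mask contains enough positives, acceptance exactly when the input equals or beats an anchor, and a union bound over at most $2^m$ positives using the independent equal-label coins, conditional on all labels and tails.
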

\begin{proof}
Store the label mask of the first $4m$ buckets and the tails of the first
$m+d$ positively labeled buckets in that mask. The mask identifies
which buckets those tails belong to. With probability
$1-\exp(-\Omega(m))$, the mask contains enough positives: its number
of ones is binomial with mean $2m$, while $m+d\le3m/2$ for large $m$.
This binomial-tail estimate also follows directly by applying the
exponential Markov inequality to the sum of independent fair bits.

The interpreter accepts if the input equals or beats any of these
positive anchors, and rejects otherwise. Every negative is rejected.
Conditional on all labels and tails and on the existence of the anchors,
each unadvised positive loses to all $m+d$ anchors with probability
$2^{-(m+d)}$, because their equal-label edge coins are independent.
A union bound over at most $2^m$ positives bounds the chance of any
false rejection by $2^{-d}$. The bit count and query bound are immediate.
\end{proof}

For any prescribed \emph{fixed} $\eta>0$, choose a fixed $d$ large enough
that $2^{-d}<\eta$. This yields success on a fraction arbitrarily close
to one, with $O_\eta(n)$ lower-order advice overhead. It is not a
probability-$1-o(1)$ assertion with a uniform $O(n)$ constant, and it
does not preclude stronger lower bounds on rare members of the same
family. Multiscale cores are one possible way to seek a stronger leading
constant, not a logically necessary consequence of this proposition.

\paragraph{Why a naive nested-core argument does not add costs.}
Suppose an inner core $C$ lies below all outer positives and above all
outer negatives. Any advised positive in $C$ is already beaten by every
outer positive, and any advised negative in $C$ already beats every
outer negative. Thus inner anchors can certify the outer layers, rather
than requiring independent advice for them. This does not rule out a
stronger multiscale construction, but it invalidates the immediate
addition of single-core lower bounds. No lower coefficient above $1/4$
is established here.

\section{Joint oracle construction and advice models}
\label{sec:oracle}\label{app:oracle}

\subsection{The joint sparse-length construction}

For $n\ge2$, put
\begin{align}
 g(n)&=\max\left\{0,
    \left\lfloor n^2/4-n(\log n)^2\right\rfloor\right\},
    \label{eq:g}\\
 h(n)&=\max\left\{0,
    \left\lfloor n-(\log n)^2\right\rfloor\right\}.
    \label{eq:h}
\end{align}
Set $g(0)=g(1)=h(0)=h(1)=0$. The lower-order losses dominate the
logarithmic losses for every fixed polynomial query bound.

\paragraph{Ordinary randomized advice.}
For an oracle $W$, we write $L\in\BPP^W/g(n)$ if there exist a
probabilistic oracle machine $M$, a polynomial $p$, and strings $a_n$ with
$|a_n|\le g(n)$ such that $M^W(x,a;R)$ halts within $p(n)$ steps for every
$x\in\bits^n$, every advice string $a$ of length at most $g(n)$, and every
coin tape $R$, and
\[
 \Pr_R[M^W(x,a_n;R)=\chi_L(x)]\ge2/3
 \quad\text{for every }n\ge0\text{ and }x\in\bits^n.
\]
Bounded error is required only under the designated advice $a_n$; no
bounded-error gap is imposed on other advice strings. The separation
holds under this more permissive convention, and therefore also under
the convention requiring a bounded-error gap on every input--advice pair.
A polynomial-time errorless oracle heuristic is interpreted with the
same designated-advice convention, with soundness and coverage as in
\cref{m:model}.

\begin{theorem}[A joint average--worst-case oracle separation]
\label{thm:oracle}
There are an oracle $W$ and a language $L\in\PselW$ with all the following
properties.
\begin{enumerate}[(i)]
\item No ordinary-advice bounded-error polynomial-time interpreter decides
$L$ with advice budget $g(n)$:
\begin{equation}
 L\notin\BPP^W/g(n).
 \label{eq:oracle-lower}
\end{equation}
The interpreter may use any polynomial number of coins.
\item For every polynomial-time randomized $W$-oracle errorless heuristic
$H$, there is a length $n$ at which every advice string of length at most
$h(n)$ that is sound for $L$ has uniform coverage strictly less than $1/2$.
Thus no designated advice sequence of that length budget gives soundness
and coverage at least $1/2$ at every length.
\item The same language has all the upper bounds in
\cref{thm:main-deterministic,thm:main-randomized,thm:main-heuristic},
relative to $W$. In particular it has $3n+O(1)$-bit exact deterministic
advice with polynomial uniform mean time and $n+O_c(\log n)$-bit
polynomial-time errorless advice with uniform abstention at most $n^{-c}$.
The randomized mean-time bound allows every exactly polynomial-time
$W$-samplable input distribution. It also has the linear coin-dependent
advice of \cref{prop:dependent-advice}.
\end{enumerate}
\end{theorem}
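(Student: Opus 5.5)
The plan is a stagewise diagonalization over sparse active lengths, following the construction sketch after \cref{m:joint}. Enumerate all pairs (probabilistic oracle machine, clock exponent), and interleave two requirement types: worst-case requirements $R_{2e}$ against $\BPP^W/g(n)$ interpreters and heuristic requirements $R_{2e+1}$ against errorless $W$-oracle heuristics with advice $h(n)$. Each machine recurs with arbitrarily large clock exponents. This ensures every polynomially clocked machine is eventually met at some stage $i$ whose clock $n^i$ dominates its running time. Define active lengths by $n_{i+1}=2^{n_i}$, with $n_1$ large enough that \eqref{m:eq-stage} holds for all $i$. This is possible because $g(n)=n^2/4-n(\log n)^2$ eventually lies below $B_{\mathrm R}(n,n^i)=(n-i\log n)^2/4-C_{\mathrm R}n\log(n+2)$ only once $n$ exceeds a threshold depending on $i$. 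Since $n_i$ grows as a tower, I will check that $(\log n_i)^2$ eventually dominates $i\log n_i+C_{\mathrm R}\log n_i$; the freezing condition is verified the same way.

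The oracle encodes, at each active length $n_i$, the edge relation of a tournament on $\bits^{n_i}$ via strings of length $2n_i+2$, and the membership bits via strings of length $2^{n_i}+2=n_{i+1}+2$. $L$ is empty at inactive lengths. At stage $i$, every record of earlier stages is already frozen. A machine clocked by $n_i^i$ on an $n_i$-bit input cannot query strings of length $n_{i+1}+2$, by the first condition of \eqref{m:eq-stage}. Hence its oracle view consists of fixed information plus the current edge records. This makes it a black-box $q$-query tournament procedure with $q=n_i^i$, with arbitrary finite oracle-independent computation absorbed. At a worst-case stage I apply \cref{m:quadratic} to obtain a legal instance on which no advice of length at most $g(n_i)$ is bounded-error correct, for every finite coin bound. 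At a heuristic stage I apply \cref{m:freezing} to get an instance where all sound advice of length at most $h(n_i)$ has coverage below $1/2$. I then write that instance's edges and labels into $W$ and freeze them. Because the designated advice must work at every length, failure at one active length refutes the machine, which gives (i) and (ii).

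The step I expect to be the main obstacle is global $\mathrm P^W$-selectivity, with the accompanying check that stage-$i$ hardness is not destroyed by what the selector reads. The selector $f(x,y)$ works as follows.
\begin{enumerate}[(a)]
\item If $x$ or $y$ has inactive length, it returns the other input (or either, when both are inactive).
\item If both have the same active length, it reads the edge record.
\item If the lengths are active with $r<s$, it queries the membership record of the shorter input. That record has length $2^r+2\le s+2$, so the query is polynomial in the combined length. It returns the shorter input if that is positive and the longer one otherwise.
\end{enumerate}
Correctness follows because each instance is legal. The selector's extra queries are to membership records the diagonalized machine cannot reach, so the reduction to the black-box model remains valid.

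For (iii), I use the fact that \cref{m:det,m:random,m:heuristic} rely only on selector evaluations, so they relativize to $W$ with the same advice lengths; the same holds for \cref{prop:dependent-advice}. For the final sentence: the $\Theta(n)$ mean-time order combines \cref{m:det} with the observation that truncating a putative exact decider at twice its polynomial mean yields a sound heuristic with coverage $1/2$, which (ii) forbids with $h(n)$ advice. The $\Theta(n^2)$ worst-case order combines (i) with Ko's relativizing $n^2+n$ upper bound.
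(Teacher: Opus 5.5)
Your proposal is correct and follows the paper's proof essentially step for step: sparse active lengths $n_{i+1}=2^{n_i}$ satisfying the three stage inequalities, edge records of length $2n_i+2$ and padded membership records of length $2^{n_i}+2$ so that each clocked run reduces to an $n_i^i$-query black-box procedure, \cref{m:quadratic} applied directly to the probabilistic machine at worst-case stages and \cref{m:freezing} at heuristic stages, and the cross-length selector that reads only the shorter input's membership record. The only details the paper adds that you leave implicit are that active lengths are recognized in polynomial time (by iterating the recurrence with capped arithmetic) and that malformed or non-record strings receive the fixed answer zero.
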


\begin{proof}
\emph{Requirements and clocks.}
Enumerate all pairs consisting of an oracle advice program and one of two
requirement types: worst-case decision or errorless heuristics. Repeat
every pair at arbitrarily large stage indices. For example, if the pairs
are enumerated as $R_1,R_2,\ldots$, assign $R_j$ to each stage $i$ whose
$2$-adic valuation is $j-1$. At stage $i$, clock the assigned program
$M_i$ at $n^i$ steps on every input, permitted advice, and coin tape;
coin use is consequently at most $n^i$. Normalize timeout and malformed
outputs by a fixed convention for the requirement type. Every actual
polynomial-time decider or heuristic is represented at a sufficiently
large clock exponent, and in each required type.

Choose a fixed sufficiently large power of two $n_1$ and set
\begin{equation}
 n_{i+1}=2^{n_i}.
 \label{eq:sparse-lengths}
\end{equation}
The initial constant can be chosen so that, for every stage $i$,
\begin{align}
 n_i^i&\le2^{n_i-4},
 &g(n_i)&\le B_{\rm R}(n_i,n_i^i),
 \label{eq:stage-inequalities}\\
 \frac{(2n_i^i+1)(2^{h(n_i)+1}-1)}{2^{n_i}}&<\frac12.
 &&\label{eq:heuristic-stage-inequality}
\end{align}
Here $B_{\rm R}$ is from \cref{cor:randomized-quadratic}; the first
inequality explicitly enforces its query-range hypothesis.
To check simultaneous feasibility, put $s_i=\log n_i$. The quadratic
estimate is
\[
 \frac{(n_i-is_i)^2}{4}-C_{\rm R}n_i\log(n_i+2)
 \ge \frac{n_i^2}{4}-(i/2+C_{\rm R}+1)n_i s_i
\]
for large $n_i$, so $s_i\ge i/2+C_{\rm R}+2$ suffices for the second
inequality in \eqref{eq:stage-inequalities}. At the chosen large lengths,
$h(n_i)\le n_i-s_i^2$, and the left side of
\eqref{eq:heuristic-stage-inequality} is at most
\[
 6n_i^i2^{-s_i^2}=2^{-s_i^2+is_i+\log6}.
\]
It is below $1/2$ once $s_i\ge i+5$. The recurrence
$s_{i+1}=2^{s_i}$ maintains these inequalities and $is_i\le n_i-4$
after a sufficiently large initial choice. It also ensures the
nonvacuity of the finite lower bounds. Active lengths are recognizable
in polynomial time by iterating the recurrence with capped binary
arithmetic and stopping before exceeding the given input length.

\emph{Oracle records.}
At each active length choose a legal instance $(\tau_i,P_i)$ and set
$L\cap\bits^{n_i}=P_i$. At all other lengths let $L$ be empty.
Use two disjoint record types:
\begin{align}
 E(u,v)&=00uv,
 &|u|=|v|=n_i,\\
 B(x)&=01x0^{\,2^{n_i}-n_i},
 &|x|=n_i.
 \label{eq:oracle-records}
\end{align}
For $E(u,v)$ require $u<v$ lexicographically; its bit is $1$ exactly
when $u$ beats $v$ in $\tau_i$. The membership record $B(x)$ has length
$2^{n_i}+2$ and bit $\chi_{P_i}(x)$; its length determines $n_i$.
Malformed records and all other strings receive answer zero.

\emph{Stage isolation.}
At stage $i$, all earlier instances have been fixed. A run of $M_i$ on
length $n_i$ can access neither its own membership records, whose length
is $2^{n_i}+2>n_i^i$, nor later records, whose shortest possible length
is $2n_{i+1}+2>n_i^i$. Thus on every tape and for every permitted advice,
the only accessible answers depending on the current instance are its
same-length tournament edges. Replace every other accessible answer by
its already fixed value. This yields a black-box procedure making at
most $q=n_i^i$ edge queries per run; the earlier finite oracle tables can
be absorbed into its unrestricted internal computation. They do not
depend on the current instance.

\emph{Worst-case stages.}
Apply \cref{cor:randomized-quadratic} and
\eqref{eq:stage-inequalities}. Choose $(\tau_i,P_i)$ so that for every
advice $a$ of length at most $g(n_i)$ there is some $x\in\bits^{n_i}$
with
\[
 \Pr_R[M_i^W(x,a;R)\ne\ind[x\in P_i]]>1/3.
\]
This diagonalizes against the probabilistic machine directly, not merely
against a deterministic tape enumeration.

\emph{Heuristic stages.}
Apply \cref{m:freezing} to the current heuristic with query bound
$n_i^i$ and advice bound $h(n_i)$. Choose its transitive tournament
instance. Every sound advice string has coverage bounded by the left
side of \eqref{eq:heuristic-stage-inequality}, hence strictly less than
$1/2$. In particular, no advice gives both zero error and the required
coverage on this slice.

In either type, fix all current edge and membership records permanently.
These are finite choices: the bounded input, advice, and tape spaces
allow exhaustive inspection, although no efficiency is required of the
oracle construction. Later stages change no answer accessible to any
run of the current machine. The repeated enumeration covers every
polynomial-time program with a sufficiently large clock, proving both
lower-bound assertions.

\emph{A global polynomial-time selector.}
If neither input has active length, both are negative, so return either
canonically. If exactly one has active length, return it. If both have
the same active length, return the winner according to the corresponding
edge record, or the common string for equal inputs. If both lengths are
active and $r<s$, the shorter string $x$ has a membership record of length
$2^r+2\le s+2$. Query it, returning $x$ when it is positive and the longer
string otherwise. This is correct whenever either input is positive.
All length tests and the padded query take polynomial time in the
combined input length. Canonical treatment of the arguments makes the
selector commutative. Therefore $L\in\PselW$.

Finally, the upper-bound interpreters evaluate only the selector and,
where specified, the sampler. Their combinatorial proofs relativize to
this same $W$. \Cref{prop:dependent-advice} applies to every relativized
P-selective language, completing the last assertion as well.
\end{proof}

\begin{corollary}[Tight average-case and worst-case orders on one language]
\label{cor:joint-mean}
For the same $W,L$, no exact decider with ordinary advice of length at
most $h(n)$ has polynomial uniform mean running time. This includes
zero-error randomized exact deciders whose mean is over both inputs and
coins. Nevertheless deterministic exact mean-time computation and
polynomial-time errorless heuristics have ordinary $O(n)$ advice for
this language, whereas worst-case bounded-error polynomial-time
computation has advice complexity $\Theta(n^2)$.
\end{corollary}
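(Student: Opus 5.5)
The corollary has three assertions, and I will derive each from results already established for the same $W,L$ of \cref{thm:oracle}.

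\emph{Lower bound for exact deciders.} I argue by truncation. Suppose an exact decider with ordinary advice of length at most $h(n)$ has uniform mean time at most $T(n)$ for some polynomial $T$. The decider may be randomized zero-error, with the mean taken over both $x\sim U_n$ and coins.
\begin{enumerate}[(a)]
\item Build the heuristic $H$ that runs it for $2T(n)$ steps and outputs $\unk$ on timeout.
\item $H$ is polynomial-time and makes at most $2T(n)$ oracle queries. It is errorless under the designated advice, because the decider never gives a wrong bit on any input or tape.
\item By Markov's inequality over the joint input--coin space, $\Pr[\text{timeout}]\le1/2$. So the designated advice is sound and has uniform coverage at least $1/2$ at every length.
\end{enumerate}
This contradicts \cref{thm:oracle}(ii). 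That clause says some length has every sound advice of length at most $h(n)$ covering less than $1/2$.

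One bookkeeping point needs care. The oracle diagonalization quantifies over clocked programs with the designated-advice convention. I must check that the truncated machine is represented in the enumeration at some clock exponent. I also need coverage exactly $\ge1/2$ rather than $>1/2$, since (ii) gives strict inequality $<1/2$. Using the factor $2$ in the timeout handles the strictness. Since $h(n)=n-O((\log n)^2)$, this gives the $\Omega(n)$ lower order for exact mean time.

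\emph{Upper bounds.} These come directly from \cref{thm:oracle}(iii), which transfers \cref{thm:main-deterministic,thm:main-heuristic} relative to $W$. That gives $3n+O(1)$ deterministic exact advice with polynomial uniform mean time, and $n+O_c(\log n)$ polynomial-time errorless advice. Combined with the lower bound, the average-case order is $\Theta(n)$.

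\emph{Worst-case order.} The lower bound $g(n)=n^2/4-O(n(\log n)^2)$ is \cref{thm:oracle}(i). For the matching upper order, I apply Ko's $n^2+n$-bit deterministic interpreter from \cref{sec:constant}. It uses only selector evaluations, so it relativizes to the $\mathrm P^W$-selector. Thus $L\in\Pclass^W/(n^2+n)\subseteq\BPP^W/(n^2+n)$, which gives $\Theta(n^2)$.

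The main obstacle is the truncation step, specifically the exact match with the quantifier form of (ii) discussed above. Everything else is citation.
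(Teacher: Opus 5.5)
Your proposal is correct and follows the paper's proof essentially step for step: truncate at twice the polynomial mean and use Markov's inequality over inputs and coins to get a sound heuristic with coverage at least $1/2$ and unchanged advice, contradicting \cref{thm:oracle}(ii); then take the linear upper bounds from \cref{thm:oracle}(iii), the quadratic lower order from (i), and the quadratic upper order from Ko's relativized $n^2+n$-bit interpreter. The bookkeeping you flag is harmless: truncation makes the machine clocked on every advice string and tape, and (ii) already gives coverage strictly below $1/2$, against your coverage of at least $1/2$.
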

\begin{proof}
If an exact decider had uniform mean time at most a polynomial $Q(n)$,
truncate it after $\lceil2Q(n)\rceil$ steps and return $\unk$ on timeout.
Markov's inequality gives coverage at least $1/2$, with no errors and
unchanged advice, contradicting \cref{thm:oracle}(ii). This argument also
applies when the mean includes internal randomness. The linear upper
bounds are supplied by \cref{thm:oracle}(iii), while
$h(n)=n-o(n)$ gives the linear lower order. The quadratic lower order
follows from \eqref{eq:oracle-lower} and $g(n)=(1/4-o(1))n^2$; Ko's
polynomial-time upper bound from \cref{sec:constant} relativizes and
supplies the quadratic upper order.
\end{proof}

\paragraph{Scope.}
The time guarantee distinguishes exact mean-time computation from
worst-case polynomial time. The heuristic lower bound uses zero error,
not merely high average accuracy. The finite hard tournaments need not
be computable by unrelativized polynomial-time selectors; the oracle
construction is what turns their slices into one globally selectable
language. The oracle lower bounds exclude advice budgets required to hold
at every length: the diagonal failures occur at active lengths. They do
not assert hardness at every length; the language is empty at inactive
lengths.

\subsection{Nondeterministic advice and opposing relativizations}

\begin{proposition}[Relativized nondeterministic linear advice]
\label{prop:np-advice}
For every oracle $A$,
\begin{equation}
 \mathrm{P}^{A}\text{-}\mathrm{sel}
 \subseteq\NP^{A}/(n+1)\ \cap\ \mathrm{coNP}^{A}/(n+1).
 \label{eq:np-advice}
\end{equation}
The NP verifier can use exactly $n$ witness bits, with a dummy witness
in direct or empty-slice cases.
\end{proposition}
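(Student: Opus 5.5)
The plan is the classical positive-anchor verifier of \cite{HT96,HNP98}, applied to the tournament induced by the commutative $A$-oracle selector on $P=L\cap\bits^n$. Fix $n$ and make the $\mathrm{P}^A$-selector $f$ commutative as in the conventions, so that $\to$ is a tournament on $X=\bits^n$. If $P$ is nonempty, let $p$ be a minimum-outdegree vertex of the subtournament induced by $P$. \Cref{cor:unweighted} then says every positive $x$ satisfies $x=p$, or $x\to p$, or $x\to z\to p$ for some $z$.

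The advice $a_n$ has length $n+1$: one flag bit saying whether $P$ is empty, followed by the $n$-bit string $p$ (arbitrary padding if $P$ is empty). If the flag says empty, the NP verifier rejects and the coNP verifier accepts, after reading a dummy witness.

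The NP verifier reads a witness $z\in\bits^n$ and accepts $x$ in three cases: $x=p$, or $f(x,p)=x$, or $f(x,z)=x$ and $f(z,p)=z$. This is two oracle-selector calls.
\begin{itemize}
\item \emph{Soundness.} Since $p\in P$, the edge $x\to p$ forces $x\in P$; likewise $z\to p$ forces $z\in P$, and then $x\to z$ forces $x\in P$. Each step is just selectivity.
\item \emph{Completeness.} This is the unweighted two-step property of \cref{cor:unweighted}, applied inside the positive subtournament. The intermediate vertex $z$ lies in $I_x$, so it is positive automatically; this is harmless, because the verifier never needs to know that.
\end{itemize}
In the direct cases any witness works, so the witness length is exactly $n$.

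For coNP I see two routes using the same advice.
\begin{itemize}
\item \emph{Symmetric route.} Use a negative minimum-indegree anchor. The catch is that this would cost another $n$ bits of advice.
\item \emph{Positive-anchor route.} Decide $\overline L$ from $p$ alone. A string $x$ is negative iff $x\ne p$, $p\to x$, and no $z$ satisfies $x\to z\to p$. That last condition is universal over $z$, so the predicate is a coNP condition in the verifier sense. In other words, $\overline L\in\mathrm{coNP}^A/(n+1)$ follows directly, because $L$ itself is in $\mathrm{coNP}^A/(n+1)$ via the universal formulation.
\end{itemize}
The intended reading of the statement is probably $L\in\NP^A/(n+1)$ and $L\in\mathrm{coNP}^A/(n+1)$. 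The second membership means $\overline L\in\NP^A/(n+1)$, and certifying negativity with only a positive anchor is the main obstacle.

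To overcome it, I would invoke \cref{m:mc-main}-style reasoning or the cited \cite{HT96} argument: with advice $p$, $x$ is positive iff $\exists z$ with $x\to z\to p$ or $x\to p$ (or $x=p$). Hence $L=\{x:\exists z\,R\}$ is in $\NP^A/(n+1)$. For the complement, $x\notin L$ iff $\forall z\,\neg R$, which is exactly a coNP statement about $\overline L$, placing $L$ in $\mathrm{coNP}^A$ with the same advice.

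So the two containments are the standard "$\exists$" and "$\forall$" readings of one predicate. I will check the slash convention of \cref{m:prelim} to confirm that this is what is asserted. Relativization is immediate, since only selector evaluations, and hence oracle queries of polynomial length, are used.
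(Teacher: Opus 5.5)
Your NP half is correct and is the paper's argument: a flag, a positive minimum-outdegree anchor $p$, and acceptance on $x=p$, on $x\to p$, or on a guessed $z$ with $x\to z\to p$, with completeness from \cref{cor:unweighted}. The coNP half has a genuine gap, because flipping the quantifier proves nothing new. From $x\in L\Longleftrightarrow\exists z\,R(x,p,z)$ you get $\overline L=\{x:\forall z\,\neg R(x,p,z)\}$. That is $\overline L\in\mathrm{coNP}^A/(n+1)$, which is only a restatement of $L\in\NP^A/(n+1)$.

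The statement needs $L\in\mathrm{coNP}^A/(n+1)$, that is, $\overline L\in\NP^A/(n+1)$: a short \emph{existential} certificate of non-membership. You state this correctly a few lines earlier and then conflate the two. Applied to SAT with empty advice, your final argument would give $\NP\subseteq\mathrm{coNP}$. Nothing in your proposal certifies negativity from $p$ alone.

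The route you set aside is the right one, and it is what the paper does. You rejected it because of a misreading of the notation. As \cref{m:prelim} says, the slash in $\NP/a(n)\cap\mathrm{coNP}/a(n)$ binds separately to each class, so the coNP interpreter gets its own $(n+1)$-bit advice. Common advice is asserted only for $\Mc$ in \cref{m:mc-main}. A single common string would instead cost about $2n$ bits for both anchors, as in the anchor pair of \cref{m:prelim}.

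So advise a flag recording whether $N=\bits^n\setminus L$ is empty, together with a $q\in N$ of minimum indegree inside $N$. Let the verifier for $\overline L$ accept when $x=q$, when $q\to x$, or when a guessed $n$-bit $z$ satisfies $q\to z\to x$.
\begin{itemize}
\item Soundness is selectivity again: $q\to z$ with $q\notin L$ forces $z\notin L$, and then $z\to x$ forces $x\notin L$.
\item Completeness is \cref{cor:unweighted} applied to the reversed tournament on $N$. There $q$ has minimum outdegree, and every negative $x$ with $x\to q$ reaches $q$ in two reversed steps, which means $q\to z\to x$ in the original orientation.
\end{itemize}
Relativization then goes through as you describe.
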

\begin{proof}
This is the usual tournament proof of the optimal nondeterministic
upper bound \cite{HT96}, with the linear witness accounting of
\cite{HNP98}. For the NP interpreter, advise a positive
minimum-outdegree vertex $p$ and one flag recording whether the slice
is empty. If nonempty, accept when $x=p$, when $x\to p$, or when a
guessed $n$-bit vertex $z$ satisfies $x\to z\to p$.
\Cref{cor:unweighted} proves completeness inside the positive
tournament; selectivity proves soundness on negative inputs. All tests
use the polynomial-time $A$-selector. Apply the reversed-tournament
argument to a negative anchor for the coNP bound.
\end{proof}

\begin{corollary}[Opposing relativizations]
\label{cor:opposing}
For each fixed $0<\varepsilon<1$, each $\mathcal C\in\{\Pclass,\BPP\}$,
and each $\mathcal K\in\{\Psel,\Mc\}$, neither the containment
\[
 \mathcal K\subseteq\mathcal C/O(n)
 \quad\text{nor the containment}\quad
 \mathcal K\subseteq\mathcal C/O(n^{2-\varepsilon})
\]
can be established or refuted by an argument that relativizes to every
oracle. In particular, even allowing bounded-error polynomial-time
randomness, a relativizing argument cannot improve the quadratic
worst-case ordinary-advice upper bound by a polynomial factor.
An unrelativized counterexample to any of these containments would
refute hypothesis $\mathsf U$ from \cref{def:unique}, and in particular
imply $\Pclass\ne\NP$. A counterexample with $\mathcal C=\BPP$ would
also imply $\NP\not\subseteq\BPP$.
\end{corollary}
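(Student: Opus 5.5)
The proof combines two oracle constructions that point in opposite directions with the unrelativized conditional transfer; no new construction is needed. For each of the eight containments $\mathcal K\subseteq\mathcal C/b(n)$, with $b\in\{O(n),O(n^{2-\varepsilon})\}$, I will exhibit one oracle relative to which the containment holds and one relative to which it fails. Then no relativizing argument can prove it, and none can refute it.

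\emph{Oracles where the containments fail.} Take $W,L$ from \cref{thm:joint}. Since $L\in\PselW\subseteq\Mc^W$, failures for $\Psel$ transfer to $\Mc$. By \cref{eq:oracle-lower}, $L\notin\BPP^W/g(n)$ with $g(n)=(1/4-o(1))n^2$. For each fixed $\varepsilon$, both $O(n)$ and $O(n^{2-\varepsilon})$ are eventually below $g(n)$. The only detail is that the advice bound must hold at every length, while the diagonal failures occur only at active lengths. Those lengths are infinite in number, so this suffices. Because $\Pclass^W/b\subseteq\BPP^W/b$, the failure for $\BPP$ implies the failure for $\Pclass$.

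\emph{Oracles where the containments hold.} Let $A$ be $\PSPACE$-complete, so that $\Pclass^A=\NP^A=\PSPACE$. By \cref{m:mc-rel}, $\Mc^A\subseteq\NP^A/(3n+5)$. The step I expect to need the most care is turning this into a deterministic interpreter. The verifier $V^A(x,b,a_n,w)$ runs in polynomial time with oracle $A$ and uses witnesses of polynomial length. The existential quantifier over $w$ is therefore a $\PSPACE$ computation. That computation is uniform in the advice string, which is given as input, so a single $\Pclass^A$ machine evaluates $\exists w\,V^A(x,1,a_n,w)$ on the designated advice. The result is $\Mc^A\subseteq\Pclass^A/(3n+5)\subseteq\BPP^A/(3n+5)$, which lies inside both advice budgets. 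Since $\Psel^A\subseteq\Mc^A$, the same holds for $\Psel$; \cref{prop:np-advice} gives $n+1$ bits in that case. Having both kinds of oracle settles the relativization claim for every choice of $\mathcal C$, $\mathcal K$, and $b$.

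\emph{Unrelativized consequences.} Suppose $\mathsf U$ holds. By \cref{m:mc-U}, $\Mc\subseteq\Pclass/O(n)$, and this gives every listed containment, since $\Psel\subseteq\Mc$, $\Pclass\subseteq\BPP$, and $O(n)\subseteq O(n^{2-\varepsilon})$. An unrelativized counterexample therefore refutes $\mathsf U$. Because $\Pclass=\NP$ implies $\mathsf U$, it also implies $\Pclass\ne\NP$. For $\mathcal C=\BPP$, suppose $\NP\subseteq\BPP$. Applying \cref{m:mc-main} unrelativized and evaluating the existential quantifier with the $\BPP$ algorithm, uniformly in the advice and after standard amplification, would give $\Mc\subseteq\BPP/(3n+5)$. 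Contrapositively, a $\BPP$ counterexample forces $\NP\not\subseteq\BPP$.
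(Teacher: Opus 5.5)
Your proposal is correct and follows the paper's proof essentially step for step. The oracle $W$ of \cref{thm:oracle} defeats all eight containments already on the P-selective subclass, since $g(n)$ eventually dominates $cn$ and $cn^{2-\varepsilon}$. A PSPACE-complete oracle collapses the advised $\NP^A$ interpreters of \cref{prop:np-advice} and \cref{m:mc-rel} to $\Pclass^A\subseteq\BPP^A$. The unrelativized claims then come from \cref{m:mc-U} together with the $\NP\subseteq\BPP$ simulation of the advised verifier.

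Your explicit note that the existential quantifier is evaluated uniformly in the advice string is a detail the paper leaves implicit. One phrasing needs tightening: the infinitude of active lengths alone does not handle the finitely many small lengths where $cn^{2-\varepsilon}>g(n)$. Justify that step either by noting that each machine is diagonalized at arbitrarily large active lengths, or by hardwiring the answers at those finitely many lengths.
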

\begin{proof}
The oracle $W$ in \cref{thm:oracle} falsifies all these containments, already
on the P-selective subclass of $\Mc^W$, because
$g(n)$ eventually exceeds every constant multiple of $n^{2-\varepsilon}$
(and of $n$). In the other direction let $A$ be a PSPACE-complete oracle.
Then $\Pclass^A=\NP^A=\PSPACE$: polynomially many polynomial-length
PSPACE queries, and nondeterministic polynomial-space computation, can
be simulated in polynomial space, while the complete oracle supplies
the reverse inclusion. Equation~\eqref{eq:np-advice} then yields
\[
 \mathrm{P}^{A}\text{-}\mathrm{sel}\subseteq\Pclass^A/(n+1).
\]
By \cref{m:mc-rel}, the same oracle gives
$\Mc^A\subseteq\Pclass^A/(3n+5)$. All the containments hold relative to it, since
$\Pclass^A\subseteq\BPP^A$. The same argument with the hypothesis
$\Pclass=\NP$, without an oracle, gives deterministic $(n+1)$-bit advice.
Similarly, if $\NP\subseteq\BPP$, the uniform NP interpreter acting on
$(x,a_n)$ can be simulated by a BPP machine with the same fixed $a_n$.
Together with \eqref{eq:np-advice} and \cref{m:mc-main}, this gives
$\Psel\subseteq\BPP/(n+1)$ and $\Mc\subseteq\BPP/(3n+5)$ and proves the final assertion about
$\NP\not\subseteq\BPP$. Finally, the unrelativized conditional theorem
\cref{m:mc-U,thm:conditional} proves
$\mathsf U\Rightarrow\Mc\subseteq\Pclass/O(n)$, including its P-selective subclass. Under $\mathsf U$
all the containments in the statement therefore hold. Their failure
refutes $\mathsf U$; since $\Pclass=\NP$ implies $\mathsf U$, this
strengthens the earlier $\Pclass\ne\NP$ consequence. The conditional
proof does not use this corollary.
\end{proof}

\subsection{Ordinary advice versus advice chosen after the coins}

For clarity, define an errorless randomness-dependent-advice interpreter
as a polynomial-time machine $M$ with advice $a_{n,r}$ depending on the
random tape $r$, such that for every $x\in\bits^n$,
\[
 M(x,a_{n,r};r)\in\{\chi_L(x),\unk\}\ \text{for every }r,
 \qquad
 \Pr_r[M(x,a_{n,r};r)=\chi_L(x)]\ge2/3.
\]
In the ordinary-advice model the same $a_n$ must work for all $r$.
The following includes a standalone upper-bound argument to make the
comparison independent of an implicit advice-table representation.

\begin{proposition}[Linear coin-dependent advice with logarithmically many coins]
\label{prop:dependent-advice}
For every oracle $A$, every $L\in\mathrm{P}^A\text{-}\mathrm{sel}$ has
an errorless polynomial-time interpreter using $O(n)$
randomness-dependent advice bits and $\log n+O(1)$ coins.
\end{proposition}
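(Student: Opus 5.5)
The plan follows the outline in the paper: a minimax argument produces a distribution over anchor pairs, a sampling step produces a short list of tuples, and the coins pick an index into that list. Fix $n$ and the slice $P=L\cap\bits^n$, $N=X\setminus P$. Call a pair $(p,q)$ with $p\in P$, $q\in N$ (or a flagged empty class) \emph{valid*}. A valid* pair \emph{covers} $x$ when the one-step tests of \cref{prop:anchors} decide $x$: either $x=p$ or $x\to p$ gives $1$, or $x=q$ or $q\to x$ gives $0$. These tests are sound for every valid* pair, whichever pair is chosen. This soundness is what makes the interpreter errorless.

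\textbf{Step 1: a distribution on pairs that covers every input.} Consider the zero-sum game in which the advisor picks $p\in P$ and the adversary picks $x\in P$. The advisor wins if $x=p$ or $x\to p$. For any mixed adversary strategy $\nu$ on $P$, a minimum-$\nu$-outdegree vertex $p$ has $\nu$-outdegree at most $\nu(\text{pairs})$-average. Concretely, summing $d^+(u)\nu(u)$ over $u\in P$ counts each unordered pair once, so some $u$ has weighted outdegree at most $1/2$. Hence $p=u$ covers $\nu$-mass at least $1/2$. By von Neumann's minimax theorem there is a distribution $\pi_+$ on $P$ covering each positive $x$ with probability at least $1/2$. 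The reversed tournament on $N$ gives $\pi_-$ in the same way. Taking $(p,q)\sim\pi_+\times\pi_-$ covers every input with probability at least $1/2$.

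\textbf{Step 2: constant-size tuples and a short list.} Let a tuple consist of $c$ independent pairs from this product distribution. It covers a given $x$ if any of its pairs does, which fails with probability at most $2^{-c}$. Take $c$ to be a constant such that $2^{-c}\le 1/100$. The tuple costs $c(2n+O(1))=O(n)$ bits. Now draw $K=Cn$ independent tuples $T_1,\dots,T_K$. For a fixed $x$, the expected fraction of indices that fail to cover it is at most $1/100$. By a Chernoff bound, the probability that more than $1/3$ of the indices fail is $2^{-\Omega(K)}<2^{-n}$ once $C$ is a large enough constant. A union bound over all $2^n$ inputs then yields one list in which every $x$ is covered by at least two thirds of the indices.

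\textbf{Step 3: the interpreter.} The coins $r$ select an index $i\in[K]$ using $\lceil\log K\rceil=\log n+O(1)$ coins. The coin-dependent advice $a_{n,r}$ is just the tuple $T_i$, which has $O(n)$ bits; the interpreter never sees the whole list. The interpreter runs the one-step tests of every pair in $T_i$. It outputs the certified bit if some test fires, and outputs $\unk$ otherwise. Soundness on every tape comes from validity of each advised pair. The conclusive probability is at least $2/3$ on each input by Step 2. The running time is $O(c)$ selector calls plus $O(n)$ parsing. Everything uses the selector only through evaluations, so the argument relativizes to any oracle $A$.

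\textbf{Main obstacle.} The expected hard spot is Step 2's conversion from the per-input guarantee to a deterministic list. In particular, the number of coins must stay logarithmic while the advice stays linear. Note that the pair distribution from minimax may have huge support; this is harmless, since only the sampled tuples are stored. If the fraction bound is tight, I would first adjust the constants: raise $c$ and use a multiplicative Chernoff bound with failure mean $2^{-c}$.
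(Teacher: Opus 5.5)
Your proposal is correct and follows essentially the same route as the paper. The shared steps are: a halving argument for a single anchor against any input distribution, finite minimax to obtain a covering distribution on anchor pairs, constant-size tuples of independent pairs, a list of $O(n)$ tuples fixed by a union bound over inputs, and $\log n+O(1)$ coins that select one tuple as coin-dependent advice. Your variations are cosmetic: you run separate minimax games on $P$ and $N$ and take their product, and you use a larger constant tuple size with a generic Chernoff bound instead of triples with the explicit estimate $(9/8)^3<2$. One small detail to add: take $K$ to be a power of two, as the paper does, so that the coins select a uniformly random index exactly.
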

\begin{proof}
Fix a length and consider valid pairs of positive and negative anchors,
using flags for empty classes. For every input distribution $\mu$, some
pair makes the one-step direct tests conclusive on at least half the
measure. Indeed, within $P$, the $\mu$-weighted average weight of the
closed in-neighborhood is at least $\mu(P)/2$, since each distinct pair
contributes once and diagonal pairs contribute fully. A positive anchor
therefore certifies at least half the positive mass. The corresponding
closed out-neighborhood argument covers at least half the negative mass.
If a class has zero measure, its contribution is trivially covered.
The two tests are always sound.

Apply finite minimax to the payoff indicating a conclusive answer.
There is a distribution on valid anchor pairs for which every input is
decided with probability at least $1/2$. A triple of independent pairs
has failure probability at most $1/8$ on each input. Each triple needs
$6n+O(1)$ bits and has an errorless polynomial-time interpreter.

Sample $K$ such triples independently, with $K$ a sufficiently large
power of two of order $n$. For a fixed input, the probability that more
than one third of the $K$ triples fail is $\exp(-\Omega(K))$.
For an explicit elementary estimate, if $Z$ counts failures, then
\[
 \Pr[Z\ge K/3]\le2^{-K/3}\E 2^Z
 \le2^{-K/3}(9/8)^K,
\]
which is exponentially small because $(9/8)^3<2$.
Choose $K$ large enough and take a union bound over $2^n$ inputs. Some
fixed list has at most one-third failures on every input.

Use $\log K=\log n+O(1)$ coins to choose a list index and supply the
corresponding triple as advice depending on those coins. The interpreter
never receives the whole list as ordinary advice. Soundness is preserved,
and the conclusive probability is at least $2/3$ for every input.
\end{proof}

\begin{theorem}[Quadratic advice-model separation with polynomially many coins]
\label{thm:advice-models}
For the oracle and language in \cref{thm:oracle}, no bounded-error
polynomial-time $W$-oracle interpreter with ordinary advice of length at
most $g(n)$ decides $L$, even with an arbitrary polynomial bound on its
number of random coins. Nevertheless $L$ has an errorless
coin-dependent interpreter with $O(n)$ advice and $\log n+O(1)$ coins.
In particular,
\begin{equation}
 L\notin\BPP^W/g(n)
 \qquad\text{but}\qquad
 L\in\BPP^W//O(n),
 \label{eq:polynomial-coin-separation}
\end{equation}
where the second inclusion has an errorless version with conclusive
probability at least $2/3$ on every input.
\end{theorem}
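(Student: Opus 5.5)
This theorem is the conjunction of two results proved above for the same $W$ and $L$, so the plan is to invoke each and check that the advice and coin conventions match. For the negative half, $L\notin\BPP^W/g(n)$ even with arbitrary polynomially many coins, I will cite \cref{thm:oracle}(i) directly. I will also verify that its worst-case stages diagonalize against the probabilistic machine itself. Every polynomial-time $W$-oracle interpreter with polynomial coin bound appears at some stage $i$ with clock $n^i$, so its coin use is at most $n_i^i$. At the active length $n_i$ it reduces to a black-box procedure with $q=n_i^i$ edge queries and a finite coin bound. \Cref{cor:randomized-quadratic} (through \cref{m:lift}) then holds independently of the coin bound, and \eqref{eq:stage-inequalities} gives $g(n_i)\le B_{\mathrm R}(n_i,n_i^i)$. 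Hence some input at length $n_i$ has error exceeding $1/3$ for every advice of length at most $g(n_i)$, including the designated one. The key point to stress is that the lower bound counts only advice strings, never random tapes; this is why the coin bound may be any polynomial.

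For the positive half, I will apply \cref{prop:dependent-advice} with oracle $A=W$. This is legitimate because $L\in\PselW$ by the global-selector argument in the proof of \cref{thm:oracle}. The proposition yields an errorless interpreter with $O(n)$ coin-dependent advice and $\log n+O(1)$ coins, and its conclusive probability is at least $2/3$ on every input. I then convert it to a bounded-error interpreter: on output $\unk$, output $0$. Correctness probability stays at least $2/3$ on every input, giving $L\in\BPP^W//O(n)$ under the model in which $a_{n,r}$ depends on $r$ but not on $x$. This model matches the definition preceding \cref{prop:dependent-advice}.

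The only delicate step is consistency of conventions. The lower bound uses the permissive designated-advice convention, so it is at least as strong as needed. The upper bound's interpreter must be polynomially clocked on all advice and tapes; I would simply add an explicit clock, which does not affect behaviour on designated advice. No new combinatorics is needed. Putting the two halves together yields \eqref{eq:polynomial-coin-separation} and the errorless version.
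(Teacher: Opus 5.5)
Your proposal is correct and follows the paper's proof. The lower bound is the probabilistic diagonalization of \cref{thm:oracle}(i) via \cref{cor:randomized-quadratic}, whose finite bound counts only advice strings and is therefore independent of the coin bound, and the upper bound is \cref{prop:dependent-advice} with $A=W$. The only difference is where the ``replace $\unk$ by an arbitrary bit'' step is used: the paper applies it to extend the lower bound to errorless ordinary-advice interpreters, while you apply it to pass from the errorless coin-dependent interpreter to $L\in\BPP^W//O(n)$, and both uses are valid.
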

\begin{proof}
The ordinary-advice lower bound is the probabilistic diagonalization in
\cref{thm:oracle}, based on \cref{cor:randomized-quadratic}.
A polynomial-time machine automatically uses only polynomially many
coins, and the finite lower bound is independent of the coin bound.
For an errorless polynomial-time interpreter with bounded abstention,
replace $\unk$ by an arbitrary bit; its success probability is still at
least $2/3$, so the same lower bound applies.
The coin-dependent upper bound is \cref{prop:dependent-advice}.
\end{proof}

The improvement over tape enumeration is quantitative and essential.
Enumeration would give $2^rq$ queries for an $r$-coin machine.
\Cref{m:lift} instead costs only $t_nq=O(nq)$ queries,
regardless of $r$. The tapes are used in a distributional averaging
argument, not supplied to the interpreter as additional advice.
For ordinary advice the same $a$ is reused in every repetition. With
coin-dependent advice, the repeated advice strings can themselves vary
with the tapes and with the instance; they cannot be absorbed into one
fixed procedure at zero advice cost. Thus the transfer does not
contradict the linear coin-dependent upper bound.

\section{Full proofs for the conditional containment}
\label{sec:conditional}\label{app:conditional}

The lower bounds leave the selector accessible only through tournament
queries. This section uses its polynomial-time program: an advised
membership verifier can then be compiled into an ordinary Boolean
circuit and passed to a promise-problem algorithm. The resulting
containment is unrelativized and conditional, not a black-box simulation.

\begin{definition}[Deterministic promise-unique satisfiability hypothesis]
\label{def:unique}
Hypothesis $\mathsf U$ states that there is a deterministic polynomial-time
algorithm $U$ on Boolean circuits such that $U(C)=0$ whenever $C$ has
no satisfying assignments and $U(C)=1$ whenever it has exactly one.
On circuits with two or more satisfying assignments, either output is
permitted. The algorithm must halt in polynomial time on every circuit.
\end{definition}

This is the promise problem distinguishing zero from one satisfying
assignment, not the total language of circuits with exactly one satisfying
assignment. It is also not the assertion $\Pclass=\mathrm{UP}$.
The classical isolation theorem implies that $\mathsf U$ entails
$\mathrm{NP}=\mathrm{RP}$ \cite{VV86}; $\Pclass=\mathrm{NP}$ certainly
entails $\mathsf U$. We do not assert a proved strict separation between
$\mathsf U$ and $\Pclass=\mathrm{NP}$. No self-reducibility of the
P-selective language will be assumed.

\begin{theorem}[Conditional deterministic linear advice]
\label{thm:conditional}
If promise Unique-Circuit-SAT has a deterministic polynomial-time
algorithm, then
\begin{equation}
 \Psel\subseteq\Pclass/O(n).
 \label{eq:conditional-main}
\end{equation}
\end{theorem}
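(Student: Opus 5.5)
The plan is to reduce the statement to the advice-preserving isolation transfer (\cref{m:generic}), applied to the linear-advice, linear-witness NP verifier for P-selective sets. First I invoke \cref{prop:np-advice} with the empty oracle. It supplies a positive minimum-outdegree anchor $p$ and an empty-slice flag, $n+1$ advice bits in total. The polynomial-time predicate $R(x,a_n,z)$ accepts exactly when $x=p$, or $x\to p$, or $x\to z\to p$, with a dummy witness in the direct cases. Hence $x\in L\Longleftrightarrow\exists z\in\bits^{n}R(x,a_n,z)=1$. This gives $a(n)=n+1$ and $w(n)=n$, both polynomial-time computable. Alternatively one can cite \cref{m:mc-U} directly, since $\Psel\subseteq\Mc$; I will still make the selector case explicit.

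The substance lies in proving \cref{m:generic} itself, which the overview only sketches. Assume $\mathsf U$ and fix $n$.

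First, I compile $R(x,a_n,\cdot)$, with $x$ and $a_n$ hardwired, into a circuit $C_x$ whose only free inputs are the $m=w(n)$ witness bits. Internal gates are computed, not guessed, so the solution count is exactly the number of accepting witnesses.

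Second, for a seed $s=(T,b)$, with $T$ a Toeplitz matrix in $\F_2^{(m+1)\times m}$ and $b\in\F_2^{m+1}$ ($3m+1$ bits), I form for each $k\le m+1$ the circuit $C_x\wedge[(Tz+b)_{1..k}=0^k]$. I run the promise solver $U$ on each and OR the answers. If $C_x$ is unsatisfiable, every restricted circuit is unsatisfiable, so every answer is $0$. If $C_x$ has $M\ge1$ solutions, take $k=\lceil\log M\rceil+1$. Pairwise independence gives $\mu=\E Z=M2^{-k}\in(1/4,1/2]$ and $\E[Z(Z-1)]\le\mu^2$, hence $\Pr[Z=1]\ge\mu-\mu^2\ge3/16$. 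Off-promise acceptances are harmless, because they happen only when $C_x$ is satisfiable.

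Third, I derandomize the seed with linear advice. Fix a constant-degree explicit expander on $2^{3m+1}$ vertices \cite{RVW02}, padded to a supported size. A length-$O(n)$ walk from a uniform start misses a fixed set of density $\ge3/16$ with probability below $2^{-n-2}$, by the expander hitting lemma. A union bound over the at most $2^n$ positive inputs then yields one walk that hits every good-seed set. Its description costs $3m+1+O(n)$ bits. The final advice is $a_n$ together with this walk, which is $O(a(n)+w(n)+n)$ bits. The machine reconstructs the walk, runs the test at each of the $O(n)$ visited seeds, and ORs the results. Negatives reject at every seed, and positives accept at some visited seed.

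With $a(n)+w(n)+n=O(n)$, this gives $\Psel\subseteq\Pclass/O(n)$.

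The main obstacle is the third step. The expander family must have vertex set exactly matching seed space, or a harmless embedding, with explicit polynomial-time neighbor computation. The hitting-set bound must be stated for a stationary walk of length $O(n)$ with a constant spectral gap. I would handle size by rounding the seed length up, so the vertex count is a power of two matching a zig-zag family parameter, and letting extra seed bits be ignored; this preserves densities. The rest is routine, in particular the second-moment estimate and checking that testing all prefixes $k$ avoids guessing $\log M$.
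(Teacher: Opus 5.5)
Your proposal is correct and follows the paper's proof: it feeds the $(n+1)$-bit anchor verifier with exactly $n$ witness bits into \cref{thm:generic}, which you reconstruct the same way, with a $(3m+1)$-bit Toeplitz seed, a test of every hash prefix giving isolation probability $3/16$, and one $O(n)$-step expander walk that hits every good-seed set by a union bound. The paper writes the predicate as $[f(x,z)=x]\wedge[f(z,p)=z]$, covering the direct cases by $z=x$ or $z=p$ instead of an explicit disjunction. It handles the expander details you flag by padding to an allowed size, taking a constant graph power so that $\lambda\le1/2$, and bounding the stationary walk's avoidance probability by $\|\Pi_B P_G\Pi_B\|^{t-1}$.
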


The proof will preserve the advice length by exploiting the verifier's
linear witness length, rather than an unrestricted randomized SAT
simulation. In particular, its contrapositive is
\begin{equation}
 \Psel\not\subseteq\Pclass/O(n)\ \Longrightarrow\ \neg\mathsf U,
 \label{eq:counterexample-refutes-U}
\end{equation}
which supplies the strengthened conclusion in \cref{cor:opposing}.

\subsection{A general linear-witness transfer}
\begin{theorem}[Advice-preserving isolation]
\label{thm:generic}
Assume $\mathsf U$. Let $a(n),w(n)$ be polynomially bounded,
polynomial-time computable nonnegative integer functions. Suppose there
are advice strings $a_n$ of length $a(n)$ and a polynomial-time predicate
$R$ such that, for every $x\in\bits^n$,
\begin{equation}
 x\in L \quad\Longleftrightarrow\quad
 \exists z\in\bits^{w(n)}\ R(x,a_n,z)=1.
 \label{eq:advised-NP}
\end{equation}
Then
\begin{equation}
 L\in\Pclass/O(a(n)+w(n)+n).
 \label{eq:generic-conclusion}
\end{equation}
The advice need not be computable, and $L$ need not be recursive.
\end{theorem}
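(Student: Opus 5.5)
Write $m=w(n)$ and handle the case $m=0$ directly: membership is then the polynomial-time predicate $R(x,a_n,\varepsilon)$, so $a_n$ itself is the advice. For $m\ge1$, fix $x\in\bits^n$ and compile $R(x,a_n,\cdot)$, by the standard Cook--Levin tableau, into a Boolean circuit $C_x$ with exactly $m$ free inputs $z$. All internal gates are computed, not guessed, so satisfying assignments of $C_x$ correspond bijectively to witnesses $z$. The plan has three steps: isolate a witness with a short pairwise-independent hash; derandomize the hash seeds with one expander walk; and count the advice.

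\emph{Isolation.} A seed $s=(T,b)$ consists of a Toeplitz matrix $T\in\F_2^{(m+1)\times m}$ and a vector $b\in\F_2^{m+1}$, for $3m+1$ bits in total. For every $z\ne z'$ the values $h_s(z)=Tz+b$ and $h_s(z')$ are uniform and independent, and each prefix of $h_s$ inherits this property. For $k=1,\dots,m+1$, form $C_{x,s,k}(z)=C_x(z)\wedge[h_s(z)_{1..k}=0^k]$, run the assumed algorithm $U$ on it, and let $\Phi(x,s)$ be the OR of the $m+1$ answers.

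Soundness holds on every seed. If $x\notin L$, every $C_{x,s,k}$ is unsatisfiable, so $U$ outputs $0$ on each. Conversely, an acceptance requires a satisfiable restriction, because $U$ rejects unsatisfiable circuits. For $x\in L$ with $M\ge1$ witnesses, take $k=\lceil\log M\rceil+1\le m+1$. The survivor count $Z$ then has $\mu=\E Z=M2^{-k}\in(1/4,1/2]$, and pairwise independence gives $\E[Z(Z-1)]\le\mu^2$. The Bonferroni-type bound $\Pr[Z=1]\ge\E Z-\E[Z(Z-1)]$, valid because $\ind[Z=1]\ge Z-Z(Z-1)$ for integers $Z\ge0$, yields $\Pr_s[Z=1]\ge\mu-\mu^2\ge3/16$. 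On such a seed $U$ accepts, so the good-seed set $G_x$ has density at least $3/16$.

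\emph{Derandomization by one walk.} Identify seeds with the vertices of an explicit constant-degree expander on $2^{3m+1}$ vertices (for instance from \cite{RVW02}), padding to a convenient size if needed, with second eigenvalue bounded away from $1$. The expander hitting-set (Chernoff) bound says that a stationary random walk of length $\ell=O(n)$ avoids a fixed set of density $3/16$ with probability at most $2^{-n-2}$. A union bound over the at most $2^n$ positive inputs gives one walk $(s_0,\dots,s_\ell)$ that hits every $G_x$. The advice is $a_n$, the start vertex $s_0$ ($3m+1$ bits), and $\ell$ edge labels of $O(1)$ bits each, for a total of $a(n)+3w(n)+O(n)=O(a(n)+w(n)+n)$.

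The interpreter recomputes the walk and accepts exactly when $\Phi(x,s_j)=1$ for some $j$. Its running time is $O(n)\cdot(m+1)$ calls to $U$ on polynomial-size circuits, plus expander neighbor computations, all polynomial. Correctness is as stated: negative inputs are rejected on every seed, and positive inputs are accepted on some visited seed. The advice need not be computable, since only its existence is used.

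\emph{Main obstacle.} The expander step needs care. The vertex set must match the seed space exactly, or the hashing must be adapted so that uniform vertices map to uniform seeds, with neighbor computation done in polynomial time. The walk bound must also have the clean form $\exp(-\Omega(\ell))$ for an absolute constant spectral gap. I would first use a power-of-two vertex set supplied by a zig-zag family, and fall back on the rounding argument, losing only constants in density, if sizes do not align.
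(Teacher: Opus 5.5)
Your proposal is correct and follows the paper's proof essentially step for step: a $(3m+1)$-bit Toeplitz hash with all $m+1$ prefix restrictions passed to the $\mathsf U$-solver, giving $\Pr[Z=1]\ge\mu-\mu^2\ge3/16$, then one $O(n)$-step walk on an explicit RVW expander whose start vertex and edge labels are advised alongside $a_n$. The size-alignment issue you flag as the main obstacle is handled in the paper exactly as you suggest: the seed is padded by $O(1)$ ignored bits, so the lifted good sets keep their density.
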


We give the parameter-preserving proof in full. The main points are to
use a linear-length seed for pairwise-independent hashing, to test every
hash prefix, and to amplify with one expander walk instead of independent
linear-length seeds.

\begin{lemma}[Linear-seed isolation with constant success]
\label{lem:isolation}
For circuits on $m\ge1$ input bits, there is, under $\mathsf U$, a
polynomial-time procedure $B(C;s)$ using a seed $s$ of length $3m+1$
with the following properties: if $C$ is unsatisfiable, $B(C;s)=0$ for
all $s$; if $C$ is satisfiable, $\Pr_s[B(C;s)=1]\ge3/16$.
\end{lemma}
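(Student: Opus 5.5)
The plan is the Valiant--Vazirani isolation argument, using a Toeplitz hash family so that the seed stays linear in $m$. Write the seed as $s=(T,b)$. Here $T\in\F_2^{(m+1)\times m}$ is a Toeplitz matrix, determined by its first row and first column, which takes $(m+1)+m-1=2m$ bits. The shift is $b\in\F_2^{m+1}$, which takes $m+1$ bits. The total seed length is therefore $3m+1$. Define $h_s(z)=Tz+b$.

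Next, record the standard property of this family. For distinct $z\ne z'$ the pair $(h_s(z),h_s(z'))$ is uniform on $\F_2^{m+1}\times\F_2^{m+1}$. The reason is that $T(z-z')$ is uniform when $T$ is a random Toeplitz matrix, since $z-z'\ne0$ and a nonzero vector's lowest nonzero coordinate hits a fresh diagonal bit. The independent uniform $b$ then makes $h_s(z)$ uniform and independent of that difference. Restricting to the first $k$ output coordinates preserves pairwise independence and uniformity on $\F_2^k$.

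The procedure $B(C;s)$ runs over $k=1,\ldots,m+1$. For each $k$ it builds, in polynomial time, the circuit $C_k(z)=C(z)\wedge[h_s(z)_{1..k}=0^k]$, calls the assumed algorithm $U$ from \cref{def:unique} on it, and outputs the OR of the $m+1$ answers.

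Soundness: if $C$ is unsatisfiable, every $C_k$ is unsatisfiable, so $U$ returns $0$ on each by the promise, and $B$ returns $0$ for every seed.

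Completeness: suppose $C$ has $M\ge1$ solutions, and pick $k=\lceil\log M\rceil+1\le m+1$ (valid since $M\le2^m$). Let $Z$ count the solutions surviving the prefix test. Then $\mu=\E Z=M2^{-k}\in(1/4,1/2]$. Pairwise independence gives $\E[Z(Z-1)]=M(M-1)2^{-2k}\le\mu^2$. Since $Z\ge 2$ implies $Z(Z-1)\ge2(Z-1)\ge Z$ on that set, we get $\Pr[Z=1]\ge\E Z-\E[Z(Z-1)]\ge\mu-\mu^2$. On $(1/4,1/2]$ this is minimized at $\mu=1/4$, giving at least $3/16$. When $Z=1$ the promise holds for $C_k$, so $U$ accepts and the OR outputs $1$. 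Off-promise answers for other $k$ can only add acceptances, which is harmless because $C$ is satisfiable.

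The only delicate step is the pairwise independence of the Toeplitz family, and that is classical. The running time is $m+1$ calls on circuits of size $|C|+O(m^2)$, so it is polynomial.
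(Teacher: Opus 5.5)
Your proposal is correct and follows essentially the same route as the paper. Both use a random $(m+1)\times m$ Toeplitz matrix plus an independent shift as a $3m+1$-bit pairwise-independent hash, OR the answers of $U$ over all $m+1$ prefix restrictions, and apply the second-moment bound $\Pr[Z=1]\ge\mu-\mu^2\ge 3/16$ at $k=\lceil\log M\rceil+1$; your triangular argument via the lowest nonzero coordinate and your pointwise check of $Z-Z(Z-1)\le 1$ on $\{Z=1\}$ (and $\le 0$ otherwise) are exactly the paper's steps.
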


\begin{proof}
Choose a uniformly random binary Toeplitz matrix
$T\in\F_2^{(m+1)\times m}$ and an independent
$b\in\F_2^{m+1}$. The matrix has $2m$ diagonals, so
$s=(T,b)$ uses $3m+1$ bits. Set
\[
 h_s(z)=Tz+b.
\]
For distinct $z,z'$, the pair $(h_s(z),h_s(z'))$ is uniform and
independent. Indeed, for every fixed nonzero $d$, the map from the
Toeplitz diagonals to $Td$ has row rank $m+1$: an extreme nonzero
coordinate of $d$ gives distinct leading diagonal variables in the
successive row linear forms. Thus $Td$ is uniform, and the independent
shift $b$ supplies the other uniform coordinate.

For each $k\in\{1,\ldots,m+1\}$ form the circuit
\[
 C_{s,k}(z)=C(z)\wedge\ind[h_s(z)_{1\ldots k}=0^k].
\]
Return $B(C;s)=\bigvee_{k=1}^{m+1}U(C_{s,k})$. Unsatisfiable inputs
are always rejected. If the satisfying set $W$ is nonempty, put
$k=\lceil\log|W|\rceil+1$ and
$Z=|\{z\in W:h_s(z)_{1\ldots k}=0^k\}|$. Then
\[
 \mu=\E Z=|W|2^{-k}\in(1/4,1/2],
 \qquad \E[Z(Z-1)]\le\mu^2.
\]
For nonnegative integer $Z$,
$\ind[Z=1]\ge Z-Z(Z-1)$. Consequently
\begin{equation}
 \Pr[Z=1]\ge\mu-\mu^2\ge3/16.
 \label{eq:isolation-prob}
\end{equation}
On this event $U(C_{s,k})=1$. Testing all $m+1$ prefixes avoids losing
a factor $m$ in the isolation probability. An arbitrary acceptance on
a nonunique restricted circuit is harmless: the original circuit is
then satisfiable. All circuits have size polynomial in $|C|+m$, and
all calls and constructions take polynomial time.
\end{proof}

\begin{lemma}[Short descriptions hitting many dense sets]
\label{lem:walk}
For every fixed $\alpha>0$, and any at most $2^n$ subsets
$G_x\subseteq\bits^r$ each of uniform density at least $\alpha$, there
is a sequence of $O_\alpha(n)$ seeds hitting every $G_x$, with a
description of length $r+O_\alpha(n)+O(1)$. Each seed is reconstructible
from that description in time polynomial in $n+r$.
\end{lemma}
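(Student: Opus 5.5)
The plan is a random walk on an explicit constant-degree expander whose vertex set is $\bits^r$ (or a superset into which $\bits^r$ embeds with bounded loss), followed by a hitting-set estimate for each dense set and a union bound over the at most $2^n$ sets. I would use the explicit zig-zag family of \cite{RVW02}: for every $r$ there is a $D$-regular graph $\Gamma_r$ on $2^r$ vertices, with $D$ and the normalized second eigenvalue $\lambda<1$ absolute constants, and neighbor function computable in time $\poly(r)$. If the family only exists for $r$ in an arithmetic progression or for sizes $2^{cr'}$, I would pad to the next admissible $r'\le r+O(1)$. Let $G'_x$ be the set of $r'$-bit strings whose $r$-bit prefix lies in $G_x$; its density is still at least $\alpha$. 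The description is a start vertex $v_0\in\bits^{r'}$ together with $t-1$ edge labels in $[D]$, for $r'+(t-1)\lceil\log D\rceil=r+O(t)+O(1)$ bits. The seeds are the $r$-bit prefixes of the visited vertices $v_0,\ldots,v_{t-1}$, and reconstructing them costs $t\cdot\poly(r)$ time.

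The key step is the expander hitting property. For a fixed set $S$ of density $\mu\ge\alpha$, a walk of $t$ vertices started from the stationary (uniform) distribution avoids $S$ with probability at most $\bigl((1-\mu)+\lambda\mu\bigr)^{t-1}$, or the cruder $(1-\mu(1-\lambda))^{t-1}$. I would prove this in the standard way. Let $\Pi$ be the projection onto $\overline S$ and $A$ the normalized adjacency operator. The avoidance probability is $\langle \mathbf 1,(\Pi A)^{t-1}\Pi\mathbf 1\rangle$, and splitting each vector into its constant component plus an orthogonal part gives $\|\Pi A\Pi\|\le(1-\mu)+\lambda\mu$. Since $\alpha$ and $\lambda$ are fixed constants, this base $\theta_\alpha<1$ is a constant.

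Next I would choose $t=\lceil (n+1)/\log(1/\theta_\alpha)\rceil+1=O_\alpha(n)$, so each $G'_x$ is missed with probability below $2^{-n-1}$. A union bound over at most $2^n$ sets then shows that a uniformly random description succeeds with probability above $1/2$, so some fixed description hits every $G_x$. This gives $t=O_\alpha(n)$ seeds and description length $r+O_\alpha(n)+O(1)$, as claimed.

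I expect the main obstacle to be the explicitness bookkeeping. One issue is arranging that the expander's vertex set matches $\bits^r$ exactly, or at least that the padding preserves both density and polynomial-time neighbor evaluation for every $r$. The other is confirming that the cited family has constant degree and spectral gap uniformly in $r$. The first thing I would try is the zig-zag product construction, which yields graphs on $2^{r'}$ vertices for all large $r'$ after squaring and tensoring. Failing that, I would fall back on Gabber--Galil/Margulis graphs on $\mathbb Z_M^2$ with $M=2^{\lceil r/2\rceil}$, using only the first $r$ bits of each vertex label. This is acceptable because the hitting argument needs only that the projection of the uniform distribution to $\bits^r$ is uniform.
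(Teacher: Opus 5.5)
Your proposal is correct and follows the paper's proof essentially step for step: pad $r$ by $O(1)$ bits to an admissible size of the explicit zig-zag expander family, lift each $G_x$ by ignoring the padding, bound $\|\Pi A\Pi\|\le1-(1-\lambda)\alpha$ by splitting off the constant component, then take a union bound over the at most $2^n$ sets to fix one walk described by its start vertex and $t-1$ edge labels. Two cosmetic remarks: your ``cruder'' bound $1-\mu(1-\lambda)$ is algebraically identical to $(1-\mu)+\lambda\mu$, and the norm bound needs $\lambda$ to control the nontrivial spectrum in absolute value (the paper states explicitly that the smallest eigenvalue is at least $-\lambda$, and it also passes to a graph power to get $\lambda\le1/2$, which is convenient but not needed).
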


\begin{proof}
Use an explicit constant-degree regular expander family with constant
spectral gap and polynomial-time neighbor computation in the vertex
label length. Such a family, with sizes $2^{r'}$ for a fixed arithmetic
progression of exponents $r'$, follows from the construction of
Reingold, Vadhan, and Wigderson \cite[Theorem~3.3]{RVW02}, taking a
fixed power-of-two base degree. Pad $r$ by $O(1)$ bits to an allowed
$r'$. Lift each $G_x$ by ignoring the padding. Passing to a fixed
constant graph power, if necessary, gives normalized nontrivial
eigenvalues of absolute value at most some $\lambda\le1/2$; the degree
$d$ remains a fixed power of two.

Here is the needed walk estimate. Write $P_G$ for the normalized
adjacency operator with the uniform inner product, let $B$ be the
complement of one lifted good set, and let $\Pi_B$ project to functions
supported on $B$. For such a function $v$,
\begin{align*}
 \langle v,P_Gv\rangle
 &\le\lambda\|v\|_2^2+(1-\lambda)(\E v)^2\\
 &\le[\lambda+(1-\lambda)(1-\alpha)]\|v\|_2^2.
\end{align*}
The smallest eigenvalue is at least $-\lambda$, so
\[
 \|\Pi_BP_G\Pi_B\|\le
 \theta:=1-(1-\lambda)\alpha<1.
\]
A stationary walk of $t$ vertices avoids the good set with probability
\[
 \langle\ind_B,(\Pi_BP_G\Pi_B)^{t-1}\ind_B\rangle
 \le\theta^{t-1}.
\]
Choose $t=O_\alpha(n)$ so that $\theta^{t-1}<2^{-n-2}$.
A union bound over all good sets shows that a positive fraction of walks
hit every one. A walk is described by its initial $r'$-bit vertex and
$t-1$ edge labels of $\log d=O(1)$ bits each. Strong explicitness gives
the reconstruction time claimed. The sets themselves need not be
efficiently recognizable to apply this existence argument.
\end{proof}

\begin{proof}[Proof of \cref{thm:generic}]
Fix $n$ and its designated advice $a_n$. Put $m=w(n)$; if $m=0$ the
predicate can be evaluated directly. Compile $R(x,a_n,z)$, for each
$x$, into a Boolean circuit $C_{x,a_n}$ with exactly $m$ free input bits
$z$. This is a uniform polynomial-time construction, and does not add
free auxiliary input bits to the circuit.

For each positive $x$, define
\[
 G_x=\{s\in\bits^{3m+1}: B(C_{x,a_n};s)=1\}.
\]
By \cref{lem:isolation}, every such set has density at least $3/16$.
There are at most $2^n$ of them. By \cref{lem:walk}, choose one walk
whose seeds hit all of them. The advice consists of $a_n$ and the
walk description; its length is $a(n)+O(m+n)$.

On input $x$, reconstruct the walk, evaluate $B(C_{x,a_n};s)$ at each
visited seed, and return their OR. Every negative input is rejected for
every seed. Every positive input is accepted at some visited seed.
The number of visited seeds is $O(n)$, each produces $m+1$ calls to
$U$, and each constructed circuit has polynomial size. Hence the
interpreter is deterministic polynomial-time. The walk description is
ordinary advice, fixed simultaneously for all inputs at length $n$.
\end{proof}

\subsection{Applying the transfer to P-selective sets}

The proof of \cref{prop:np-advice} gives the classical verifier with
$n+1$ advice bits and exactly $n$ witness bits \cite{HNP98}. One bit flags
an empty positive slice. Otherwise the advice names a positive
minimum-outdegree anchor $p$, and the witness predicate is
\begin{equation}
 R_f(x,p,z)=[f(x,z)=x]\wedge[f(z,p)=z],\qquad z\in\bits^n.
 \label{eq:king-verifier}
\end{equation}
The two-step property in \cref{cor:unweighted} proves completeness;
equality and direct edges are covered by $z=x$ or $z=p$. Soundness
follows because $f(z,p)=z$ forces $z$ positive, and then $f(x,z)=x$
forces $x$ positive. The free inputs of the compiled circuit are exactly
these $n$ witness bits; circuit gates do not introduce additional
witness variables.

\begin{proof}[Proof of \cref{thm:conditional}]
Use the verifier just described, with $a(n)=n+1$ and $w(n)=n$, in
\cref{thm:generic}. The advice bound is $O(n)$.
\end{proof}

\begin{corollary}[Binary-comparator consequence]\label{app:mc-U}
Under $\mathsf U$, $\Mc\subseteq\Pclass/O(n)$, also for lengthwise binary
comparability.
\end{corollary}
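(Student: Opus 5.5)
The plan is to substitute the binary-comparator verifier of \cref{m:mc-main} into the advice-preserving isolation transfer, \cref{thm:generic}. That theorem takes advice of length $a(n)$ and witnesses of length exactly $w(n)$, and produces $\Pclass/O(a(n)+w(n)+n)$. Here $a(n)=3n+5$ and $w(n)=5n+12$, so the outcome is $\Pclass/O(n)$ once the hypotheses of \cref{thm:generic} are checked.

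\textbf{Step 1: normalize the verifier.} I will put the verifier into the form $x\in L\Longleftrightarrow\exists z\in\bits^{w(n)}R(x,a_n,z)=1$. Fix $b=1$ in \eqref{m:eq-mc-verifier}. Pad witnesses to exactly $5n+12$ bits by a self-delimiting convention: the seven-bit header already records which fields are used, and unused fields are ignored. In majority mode the empty witness is replaced by any padded string that the verifier ignores. Define
\[
 R(x,a_n,z)=V(x,1,a_n,z).
\]
This predicate is polynomial-time and uses at most six comparator calls. Completeness and soundness transfer directly from \cref{m:mc-main}. The functions $a(n)$ and $w(n)$ are linear and polynomial-time computable, as \cref{thm:generic} requires.

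\textbf{Step 2: apply the transfer.} Invoking \cref{thm:generic} gives $L\in\Pclass/O(3n+5+5n+12+n)=\Pclass/O(n)$. The only point that needs checking is the one the paper stresses: compiling $R(x,a_n,\cdot)$ into an ordinary circuit must create exactly $w(n)$ free inputs. This holds because the comparator is an unrelativized polynomial-time program. Its evaluations become gates with no auxiliary witness variables, so $\mathsf U$ applies to the hash-restricted circuits.

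\textbf{Lengthwise comparability.} For the lengthwise version I note that \cref{m:mc-main} itself uses only distinct equal-length comparator correctness. The identical verifier therefore works, and the same substitution goes through.

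\textbf{Main obstacle.} No step is deep. The main risk is bookkeeping in Step 1: fixed-length padding must not create spurious accepting witnesses. I would handle this by having $R$ reject any malformed header or used field, exactly as the verifier in \cref{m:mc-main} already does.
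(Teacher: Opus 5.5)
Your proposal is correct and is essentially the paper's proof. You fix $b=1$, use the fixed-length $(5n+12)$-bit witness encoding with the padded witness ignored in majority mode, and substitute $a(n)=3n+5$, $w(n)=5n+12$ into the advice-preserving isolation transfer, compiling the actual comparator program rather than relying on oracle access; the paper's extra remark that $b=0$ handles the complement is redundant since $\Pclass/O(n)$ is closed under complement, and your observation that only distinct equal-length comparisons are used covers the lengthwise case exactly as intended.
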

\begin{proof}
Apply \cref{thm:generic} to \cref{m:mc-main} with the fixed target bit $b=1$,
advice length $3n+5$, and witness length $5n+12$. In majority mode the verifier
may ignore a padded witness; in certificate mode it uses its fixed-length
encoding. Thus the existential characterization has exactly the witness length
required by the transfer. The same argument with $b=0$ applies to the complement.
All compilation uses the actual polynomial-time comparator, not merely oracle
access to its values.
\end{proof}

\paragraph{Why the hypothesis and model distinctions matter.}
The ability to solve promise-unique instances is used only after the
actual polynomial-time selector has been compiled into an ordinary
Boolean circuit. A black-box tournament oracle supplies no such
circuit description. Likewise, a relativized selector produces an
oracle-dependent verifier, and the unrelativized hypothesis does not
supply a promise-unique solver for circuits containing arbitrary oracle
gates. There is therefore no conflict with the oracle separation.
The conditional implication can be relativized only if the hypothesis
is strengthened to the corresponding oracle-circuit promise solver.

The argument also identifies why $\Pclass=\mathrm{UP}$ cannot simply
replace $\mathsf U$: the family of restricted circuits is not globally
unambiguous; some off-promise circuits have many witnesses. The assumed
algorithm must handle zero- and one-solution instances uniformly while
being total on the remaining circuits. Similarly, applying a generic
$\mathrm{NP}=\mathrm{RP}$ simulation does not by itself control the
coin count by $O(n)$, which is the advice-critical parameter here.

\paragraph{Relation to earlier sufficient conditions.}
A different, algebraic sufficient condition is already known: if all
P-selective sets admit associative polynomial-time selectors, then they
have deterministic linear advice \cite{HHN04}. The present conditional
argument assumes no associative selector and no self-reducibility. Its
specific contribution is the linear-seed isolation and short-walk
accounting applied to the classical linear-witness characterization.
This is a parameter-preserving use of classical ingredients, not a
claim that no earlier sufficient condition for linear advice was known.

\section{Small-bias construction and refined heuristic overhead}
\label{app:technical}

This appendix records the explicit construction used in \cref{sec:upper}
and the additional integration that sharpens its logarithmic advice term.
The main $3n+O(1)$ and $n+O_c(\log n)$ statements do not require the
refined coefficient to express their leading orders.

\subsection{An explicit small-bias construction}

\begin{lemma}[An explicit small-bias multiset {\cite{AGHP92}}]
\label{lem:smallbias}
For $r\ge1$ and $0<\eta\le1$, choose
$h=\max\{1,\lceil\log(r/\eta)\rceil\}$. There is a deterministic
$\eta$-biased multiset in $\F_2^r$ of size $2^{2h}=O(r^2/\eta^2)$.
After a deterministic setup taking at most $2^{2h}\poly(r+h)$ time,
each indexed point is generated in $\poly(r+h)$ time.
\end{lemma}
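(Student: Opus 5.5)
The construction is the AGHP "powering" construction over $\F_{2^h}$. I would first find an irreducible polynomial of degree $h$ over $\F_2$ by exhaustive search, to fix a representation of $\F_{2^h}$. There are $2^h$ monic candidates of degree $h$. Each can be tested for irreducibility by trial division by all lower-degree polynomials, or by a Ben-Or style gcd test, in $2^{h}\poly(h)$ time. This gives total setup at most $2^{2h}\poly(h)$, within the allowed $2^{2h}\poly(r+h)$ budget.

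Next I define the multiset. It is indexed by pairs $(\alpha,\beta)\in\F_{2^h}^2$, which gives exactly $2^{2h}$ points. The $i$-th coordinate of the point is $s_i(\alpha,\beta)=\langle \alpha^{i},\beta\rangle$ for $i=0,\ldots,r-1$, where $\langle\cdot,\cdot\rangle$ is the $\F_2$ inner product of the bit representations. Generating one indexed point takes $r$ field multiplications and inner products, hence $\poly(r+h)$ time. The size is $2^{2h}$. If $h=\lceil\log(r/\eta)\rceil\ge1$, then $2^h<2r/\eta$ and the size is $O(r^2/\eta^2)$. If $h=1$, the size is $4$, which is also $O(r^2/\eta^2)$ because $r\ge1$ and $\eta\le1$.

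The bias bound is the key step. Fix a nonzero $u\in\F_2^r$. Then
$\sum_i u_i s_i(\alpha,\beta)=\langle p_u(\alpha),\beta\rangle$, where $p_u(X)=\sum_i u_iX^i$ is a nonzero polynomial of degree less than $r$.

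\begin{enumerate}[(i)]
\item If $p_u(\alpha)\ne0$, then $\beta\mapsto\langle p_u(\alpha),\beta\rangle$ is a nonzero linear functional. Averaging $(-1)^{\langle p_u(\alpha),\beta\rangle}$ over uniform $\beta$ therefore gives exactly $0$.
\item If $p_u(\alpha)=0$, the average is $1$. This happens for at most $r-1$ values of $\alpha$, since $p_u$ is nonzero of degree less than $r$.
\end{enumerate}

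Hence the bias is at most $(r-1)/2^h<r/2^h\le\eta$, using $2^h\ge r/\eta$. In the case $h=1$ forced by the maximum, we have $r/\eta\le2$, so again $r/2^h\le\eta$.

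I expect no serious obstacle. The only point needing care is the claim that the setup cost is bounded by the size times a polynomial rather than by $\poly(r+h)$. This is exactly why deterministic irreducible-polynomial search by brute force suffices and no explicit irreducible family is needed.
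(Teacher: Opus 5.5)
Your proposal is correct and follows the paper's proof essentially verbatim. It uses the same $\F_{2^h}$ powering construction $s(\alpha,\beta)_i=\langle\alpha^i,\beta\rangle$, the same root-counting bias bound $(r-1)/2^h\le\eta$, and a brute-force irreducible-polynomial search for the setup. The only difference is cosmetic: the paper trial-divides only by monic polynomials of degree at most $\lfloor h/2\rfloor$, giving setup time $2^{3h/2+O(1)}\poly(h)$. Your cruder count of $2^{2h}\poly(h)$ is still within the stated budget.
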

\begin{proof}
Use a polynomial-basis representation of $F=\F_{2^h}$. For each
$(a,b)\in F^2$, output the vector
\[
 s(a,b)_j=\langle\operatorname{bin}(a^j),\operatorname{bin}(b)\rangle,
 \qquad 0\le j<r.
\]
For nonzero $u\in\F_2^r$, the polynomial $p_u(t)=\sum_{j<r}u_jt^j$
is nonzero and has at most $r-1$ roots. Averaging the character first
over $b$ gives zero unless $p_u(a)=0$, when it gives one. Its bias is
at most $(r-1)/2^h\le\eta$.

For completeness, find a degree-$h$ irreducible polynomial by enumerating
monic candidates and trial-dividing by all monic polynomials of degrees
at most $\lfloor h/2\rfloor$. A reducible candidate has such a divisor.
The total work is at most $2^{3h/2+O(1)}\poly(h)$, within the stated
setup bound. Field arithmetic then generates the points as claimed.
\end{proof}

\subsection{Refining the logarithmic overhead}

A second integration of the same density profile halves the coefficient
of the accuracy term, up to a double logarithm.

\begin{lemma}[Integrated miss bound]
\label{lem:integrated-miss}
If a positive random variable $w\le1$ satisfies
$\Pr[w\le t]\le4t$ for every $t\ge0$, then, for $0<a\le1$,
\begin{equation}
 \E\min\{1,a/w\}\le a+4a\ln(1/a)
 \le4a(1+\ln(1/a)).
 \label{eq:integrated-miss}
\end{equation}
\end{lemma}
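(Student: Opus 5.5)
The plan is to reduce the expectation to an integral of the tail bound $\Pr[w\le t]\le4t$, splitting the range at the point where $\min\{1,a/w\}$ stops being equal to $1$.

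Put $Y=\min\{1,a/w\}\in(0,1]$ and write $\E Y=\int_0^1\Pr[Y>u]\,du$. For $u<1$ the event $Y>u$ is the event $a/w>u$, that is $w<a/u$. When $u\le a$ we have $a/u\ge1\ge w$, so I bound $\Pr[Y>u]$ trivially by $1$. When $a<u<1$ I use the hypothesis $\Pr[w<a/u]\le\Pr[w\le a/u]\le4a/u$.

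This gives
\[
\E Y\le\int_0^a1\,du+\int_a^1\frac{4a}{u}\,du=a+4a\ln(1/a),
\]
which is the first inequality in \eqref{eq:integrated-miss}. The second inequality follows from $a\le4a$ and $\ln(1/a)\ge0$ for $0<a\le1$.

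There is no real obstacle. The only points to check are the layer-cake formula for the bounded nonnegative variable $Y$, and the handling of the boundary: the event $w<a/u$ is contained in $\{w\le a/u\}$, so the weak inequality in the hypothesis suffices, and $a=1$ is trivially covered. Positivity of $w$ ensures that $a/w$ is well defined.
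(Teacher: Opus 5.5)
Your proof is correct and is essentially the paper's argument. The paper writes the pointwise identity $\min\{1,a/w\}=a+a\int_a^1 t^{-2}\ind[w\le t]\,dt$ and takes expectations, which is your layer-cake integral after the substitution $t=a/u$; both then insert the tail bound $4t$ and integrate to obtain $a+4a\ln(1/a)$.
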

\begin{proof}
For each $w\in(0,1]$,
\[
 \min\{1,a/w\}=a+a\int_a^1 t^{-2}\ind[w\le t]\,dt.
\]
Take expectations and use the lower-tail bound inside the integral.
\end{proof}

\begin{proposition}[Refined logarithmic overhead]
\label{prop:heuristic-refined}
In \cref{prop:heuristic-simple}, the advice length can be reduced to
\begin{equation}
 r(n)+2\log r(n)+\log(1/\varepsilon)
   +\log(1+\log(1/\varepsilon))+O(1).
 \label{eq:heuristic-refined}
\end{equation}
\end{proposition}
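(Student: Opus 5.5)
We propose to rerun the proof of \cref{prop:heuristic-simple} unchanged in structure (a bias-$\lambda$ multiset $S$ on $r$ bits, one shift $v$, the sample $(G_n(v+s):s\in S)$, and a single Hamiltonian-path cut from \cref{lem:path}), but to choose $\lambda$ by integrating the witness-density tail instead of splitting at a threshold. The advice will still be $v$ together with the cut, of length $r+\lceil\log(M+1)\rceil$. All of the saving will therefore come from a smaller $M=|S|$.

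For each input $x$, \cref{lem:translate} applied to $C_x=\{z:G_n(z)\in W_x\}$, which has density $w(x)$, bounds the miss probability over $v$ by $\min\{1,\lambda^2/w(x)\}$. Averaging over $x\sim\mu_n$, we will apply \cref{lem:integrated-miss} with $a=\lambda^2$. Its hypothesis $\Pr[w\le t]\le4t$ is exactly \cref{lem:one-step}, and $w(x)>0$ always holds because $x\in W_x$. This gives
\[
 \E_v\Pr_{x\sim\mu_n}[\text{miss}]\le4\lambda^2\bigl(1+\ln(1/\lambda^2)\bigr).
\]
We then choose $\lambda^2=c\,\varepsilon/(1+\log(1/\varepsilon))$ for a small absolute constant $c$. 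Since $\ln(1/\lambda^2)=\ln(1/\varepsilon)+\ln(1+\log(1/\varepsilon))+O(1)=O(1+\log(1/\varepsilon))$, the right side is below $\varepsilon$ for suitable $c$. Averaging then yields a fixed shift with abstention below $\varepsilon$.

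With $\eta=\lambda$ in \cref{lem:smallbias}, the multiset has size $M=2^{2h}=O(r^2/\lambda^2)=O\bigl(r^2(1+\log(1/\varepsilon))/\varepsilon\bigr)$. Hence
\[
 \lceil\log(M+1)\rceil=2\log r+\log(1/\varepsilon)+\log(1+\log(1/\varepsilon))+O(1),
\]
which gives \eqref{eq:heuristic-refined}. Soundness, including on inputs outside the support, and polynomial running time carry over verbatim from \cref{prop:heuristic-simple}, since $M$ remains polynomial whenever $1/\varepsilon=\poly(n)$.

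The main obstacle is only the bookkeeping in the choice of $\lambda$. The logarithm inside \cref{lem:integrated-miss} depends on $\lambda$ itself, so we must check that $\ln(1/\lambda^2)$ exceeds $\ln(1/\varepsilon)$ by at most $O(\log\log)$ and does not feed back into the leading term. We must also absorb the rounding of $h$ in \cref{lem:smallbias}, which costs at most a factor $4$ in $M$, into the $O(1)$.
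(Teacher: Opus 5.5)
Your proposal is correct and follows the paper's proof essentially verbatim. The paper takes $a=\varepsilon/(64b)$ with the integer $b=1+\lceil\log(1/\varepsilon)\rceil$, which is your choice with $c=1/64$; the ceiling also spares the interpreter from handling an irrational bias when it picks $h$. It uses bias $\sqrt a$ and combines \cref{lem:translate,lem:integrated-miss} exactly as you do, with the same size accounting $M=O(r^2b/\varepsilon)$.

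One small correction: $w(x)\ge\mu_n(x)>0$ holds only for $\mu_n$-almost every $x$, not for every $x$. That is all \cref{lem:integrated-miss} needs.
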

\begin{proof}
Put $b=1+\lceil\log(1/\varepsilon)\rceil$ and
$a=\varepsilon/(64b)$. Use a multiset of bias at most $\sqrt a$.
By \cref{lem:translate,lem:integrated-miss} and the fact that $w(x)>0$
for $\mu_n$-almost every $x$, its expected average miss probability is
at most $4a(1+\ln(1/a))$. This is at most $\varepsilon$: indeed
\[
 1+\ln(1/a)=1+\ln64+\ln b+\ln(1/\varepsilon)<16b
\]
for $b\ge2$, using $\ln b\le b$ and
$\ln(1/\varepsilon)\le b\ln2$.

The multiset size is $O(r^2b/\varepsilon)$. The shift and cut therefore
have combined length \eqref{eq:heuristic-refined}. The remainder of the
interpreter and its soundness proof are unchanged. The field degree can
be chosen by the integer inequality $2^{2h}\ge64r^2b/\varepsilon$, so
there is no need to manipulate an irrational bias exactly.
\end{proof}

\section{Binary-comparator encodings and scope}
\label{app:mc-encoding}

The complete graph argument and soundness/completeness proof appear in
\cref{m:comparators}. This appendix records the verifier's parsing conventions
and consequences without requiring the reader to reconstruct an exponential
true-literal graph.

\subsection{Advice and certificate layouts}

At length $n$, the advice is
\[
 a_n=\operatorname{code}_2(j)\,s_1s_2s_3,
 \qquad j\in\{0,1,2,3\},\quad |s_i|=n+1.
\]
Each slot names an $n$-bit string and one membership bit. In certificate mode
$j\le2$, the first $j$ slots are the unforced covering literals and unused slots
are ignored. In majority mode $j=3$, the slots name three distinct unforced
covering literals. The $j=0$ header indicates no advised premise, not an empty
language slice. Advice always has exactly $3n+5$ bits. Neither independence,
coverage, nor absence of a short forcing walk is tested by the interpreter;
these facts select the designated advice.

A certificate-mode witness has length exactly $5n+12$: a seven-bit header and
five signed-literal slots of length $n+1$. The header has a type bit, an
advised-slot-index bit, a three-bit forcing length $r\in\{1,2,3,4\}$, and a
two-bit final length $s\in\{0,1,2\}$. Used invalid fields are rejected.
For a forced premise the first slot holds $q$, the next three hold the possible
internal literals of the walk from $\overline q$ to $q$, and the fifth holds the
possible internal literal of the final walk from $q$ to the input literal
$t=\langle x,b\rangle$. The endpoints are not encoded a second time. For an
advised premise the index selects one of the first $j$ advice slots; only $s$
and, when $s=2$, the fifth witness slot are used. Reject this type when $j=0$.
Unused fields and slots are ignored. A final walk of length zero is accepted
only when its endpoints coincide. Every edge involving identical underlying
strings is rejected without making a comparator call.

In majority mode the verifier accepts the empty witness, uses the stored label
on an advised input, and otherwise checks whether the three-row majority equals
$b$. A version with exactly $5n+12$ witness bits may instead ignore that padded
witness in majority mode. This version has the same existential acceptance
condition and is the one used in the advice-preserving isolation corollary.
At $n=0$ the one-vertex true-literal graph has a cover containing its single
literal; certificate mode advises it and uses a zero-edge final walk. Thus no
special additive correction is needed at the empty string.

At most six edge tests are performed in forced-premise mode, at most two in
advised-premise mode, and at most three comparator calls in majority mode. With
comparator time $T_g(n)$, the verifier runs in $O(T_g(n)+n)$ time. Correctness is
required for the designated advice sequence. Malformed advice is rejected, but
well-formed nondesignated advice need not be sound. In particular, the theorem
does not assert that the verifier on all pairs $(x,a)$ defines a language in
$\NP\cap\mathrm{coNP}$; it supplies two advice interpreters with one shared
advice sequence and the explicit quantifiers in \cref{m:eq-mc-verifier}.

\subsection{Lengthwise scope and the inherited lower order}

Every comparison and certificate uses distinct strings at the current input
length. The theorem thus holds for a comparator guaranteed correct only on
those pairs. For an oracle $A$, replacing each comparator call by its
polynomial-time $A$-oracle implementation proves \cref{m:mc-rel} with exactly
the same advice and witness accounting. There is no additional membership
oracle for the language, except for information that may independently be
available in the stipulated $A$.

Since every P-selective language is binary membership-comparable, the
nondeterministic-advice lower bounds of \cite{HT96,HNP98} imply
\[
 \Mc\not\subseteq\NP/n,\qquad
 \Mc\not\subseteq\mathrm{coNP}/n.
\]
These are separate noncontainments; no assertion of one simultaneous witness
language is needed. They show that linear order is necessary for the entire
class, but do not establish optimality of coefficient $3$, additive constant
$5$, the $5n+12$ witness bound, or the six-call bound.

\subsection{A disjunction example and the binary restriction}

Consider a comparator returning $00$ on distinct inputs and $01$ on identical
inputs. It is valid for the full language and every co-singleton language.
Its distinct-input clauses are positive disjunctions. On a slice with no
exceptional negative, the true-literal graph has no edges: when three inputs
exist, three row predictions all give $1$; smaller slices use certificate mode.
On a slice containing an exceptional negative $z$, its true negative literal
implies every other input's positive literal. Advising that negative literal
therefore gives the required short certificates. Neither case requires choosing
one side of a disjunction as an unverified premise.

For a comparator of arity $k\ge3$, an excluded vector yields a $k$-literal
clause, and forcing one literal can require $k-1$ premises. In particular,
nonadjacency does not automatically give the single edge
$\overline h\to q$ used in \cref{m:eq-mc-force}. The binary proof therefore
establishes neither a higher-arity fixed-linear-advice theorem nor a general
extension of the selector-specific errorless mean-time bounds.

\section{Disclosure of AI assistance}
\label{app:ai}

OpenAI ChatGPT was used throughout the development and preparation of this
manuscript, including substantive mathematical assistance rather than only
language editing. The uses included proposing candidate improvements to the
advice bounds, drafting and expanding proofs, checking parameter inequalities,
exploring failure cases, locating and summarizing potentially relevant
literature, and producing LaTeX and submission-format revisions.

The human-led discussion supplied the hidden-core distribution, its proposed
quadratic lower bound, the small-bias improvements, and a sequence of research
and exposition priorities. AI assistance was then used in the development of
the weighted certificate arguments and average-case algorithms. Human feedback  identified the
query-range condition in the oracle construction, the designated-advice
convention, the attribution of the randomized transfer to Yao's principle,
and the need to distinguish unrestricted query computation from
polynomial-time decoding.

The binary-membership-comparability theorem was integrated from a research
draft supplied by the human author. In this integration, ChatGPT condensed and
retypeset the signed-literal graph, independent-cover, short-forcing, and
exact-majority proof; preserved the common advice, witness, and comparator-call
accounting; derived the stated corollaries from the existing transfer; and
reorganized the introduction and results table to distinguish comparator-wide
conclusions from selector-specific ones. 

For this submission version, ChatGPT assisted with reorganizing the merged
manuscript, condensing the main exposition, moving full arguments and explicit
constructions to appendices, converting the references to BibTeX, checking
cross-references and page boundaries, and compiling and inspecting the PDF.

\end{document}